\documentclass{amsart}
%\documentclass{elsarticle}

% *** MATH PACKAGES ***
%
\usepackage{amsmath,amsthm}
\usepackage{amssymb}
\usepackage{color}

% *** SPECIALIZED LIST PACKAGES ***
%
\usepackage{rotating}
\usepackage{algorithm}

\usepackage{algorithmic}

%%%%%Teoremas%%%%%%%%%%
\newtheorem{theorem}{Theorem}[section]
\newtheorem{proposition}[theorem]{Proposition}
\newtheorem{corollary}[theorem]{Corollary}
\newtheorem{lemma}[theorem]{Lemma}
\theoremstyle{definition}
\newtheorem{definition}[theorem]{Definition}

\newtheorem{example}[theorem]{Example}
\theoremstyle{remark}
\newtheorem{remark}[theorem]{Remark}

%%%%%%%%%%%%
%% MACROS %%
%%%%%%%%%%%%

\newcommand{\field}[1][]{\mathbb{F}_{#1}}
\DeclareMathOperator{\Galois}{Gal}
\DeclareMathOperator{\weight}{w}
\DeclareMathOperator{\distance}{d}
\newcommand{\Aut}[2][]{\operatorname{Aut}_{#1}(#2)}
\newcommand{\norm}[2]{N_{#1}(#2)}
\DeclareMathOperator{\rank}{rk}
\newcommand{\lclm}[1]{\left[#1\right]_\ell}

\newcommand{\gcrd}[1]{\left(#1\right)_r}

\newcommand{\matrixring}[2]{\mathcal{M}_{#1}(#2)}
\newcommand{\tovector}{\mathfrak{v}}

\newcommand{\pseudobound}[2][]{\overline{#2}^{#1}}
\DeclareMathOperator{\cs}{cs}
\newcommand{\Lq}[2][k]{\mathcal{L}_{q^{#1}}\field[q^{#2}][t]}

\begin{document}

\title[Hartmann--Tzeng bound and Skew Cyclic Codes]{Hartmann--Tzeng bound and Skew Cyclic Codes of Designed Hamming Distance}
\author[J. G\'{o}mez-Torrecillas]{Jos\'e G\'omez-Torrecillas}
\author{F. J. Lobillo}
\author[G. Navarro]{Gabriel Navarro}
\author[A. Neri]{Alessando Neri}
\thanks{J. G\'{o}mez-Torrecillas and F. J. Lobillo are with CITIC and Department of Algebra, University of Granada.}
\thanks{G. Navarro is with CITIC and Department of Computer Sciences and AI, University of Granada.}
\thanks{A. Neri is with Institut f\"{u}r Mathematik, University of Z\"{u}rich.}
\thanks{Research partially supported by grants MTM2013-41992-P and TIN2013-41990-R from Ministerio de Econom\'{\i}a y Competitividad of the Spanish Government and from FEDER, by grant MTM2016-78364-P from Agencia Espa\~nola de Investigaci\'{o}n and from FEDER, and by grant number 169510 from the Swiss National Science Foundation.}

\begin{abstract}
The use of skew polynomial rings allows to endow linear codes with cyclic structures which are not cyclic in the classical (commutative) sense. Whenever these skew cyclic structures are carefully chosen, some control over the Hamming distance is gained, and it is possible to design efficient decoding algorithms. In this paper, we give a version of the Hartmann-Tzeng bound that works for a wide class of skew cyclic codes. We also provide a practical method for constructing them with designed distance. For skew BCH codes, which are covered by our constructions, we discuss decoding algorithms. Detailed examples illustrate both the theory  as the constructive methods it supports.
\end{abstract}

\keywords{Linear codes, convolutional codes, cyclic codes, skew cyclic codes, Hartmann-Tzeng bound, BCH skew code}

\maketitle

\section{Introduction}

The availability of additional algebraic structure in error correcting linear codes has helped their construction in two ways. On one hand, a better knowledge of the main parameters of the code can be obtained. For instance the pioneer works \cite{Bose/Chaudhuri:1960,Hocquenghem:1959} use cyclic structures in linear block codes to design them with a prescribed distance. Those ideas were generalized by Hartmann and Tzeng in \cite{Hartmann/Tzeng:1972}. In these papers, the behavior of the roots of the cyclic generator in a suitable field extension of the alphabet (a finite field), provides lower bounds of the Hamming distance of the code. On the other hand, the presence of higher algebraic structures also allows the design of fast decoding algorithms. This is the case for instance of the Peterson--Gorenstein--Zierler algorithm, see \cite{Peterson:1960,Gorenstein/Zierler:1961}, where linear algebra techniques are used, in conjunction with the aforementioned behavior of the roots of cyclic codes, to design decoding procedures. 

Let $\field{}$ be a finite field. Classically, a  cyclic block code over the alphabet $\field{}$ is an ideal of \(\field{}[x]/\langle f \rangle \cong \field{}^n\), where $f$ is a polynomial of degree $n$. In this way, only a few codes of length \(n\) enjoy a cyclic structure modeled by $f$ (in fact they are in one to one correspondence with the monic divisors of \(f\)) and most of the codes are not cyclic. However, by replacing \(\field{}[x]/\langle f \rangle\) by some \(n\)-dimensional non-commutative \(\field{}\)-algebra, new cyclic structures on some non cyclic codes arise. One of the most successful ways to follow this philosophy consists in twisting the multiplication of the polynomial ring. Concretely, skew cyclic block codes are left ideals of factor algebras of skew polynomial rings \(\field{}[x;\sigma]\) by a two-sided ideal, where \(\sigma\) is an automorphism of the finite field \(\field\). Such an ideal is generated by a normal polynomial $f \in \field{}[x;\sigma]$, so that skew cyclic codes will be  in correspondence with right divisors of $f$. It is well known that this number of divisors is much larger than in the commutative case, due essentially to the lack of uniqueness, in the usual sense, of factorizations in \(\field{}[x;\sigma]\) (see \cite{Ore:1933}). Of course we want to get some control on the parameters of the skew cyclic code, and also to take advantage of this cyclic structure to design efficient decoding algorithms. To this end, both \(\sigma\) and \(f\) have to be carefully chosen. These skew block codes were introduced, for $f = x^n -1$, in \cite{Boucher/etal:2007}, and, in the general case, independently in \cite{Boucher/Ulmer:2009,Gabidulin:2009}. The notion can be traced back to \cite{Gabidulin:1985}, where the author used the arithmetics of linearized polynomials to introduce and investigate the nowadays known as Gabidulin codes.

In \cite{Boucher/etal:2007} bounds of the Hamming distance and a Sugiyama like decoding algorithm are provided for skew cyclic codes when the alphabet is \(\field[2^n]\), the automorphism is the Frobenius automorphism, \(\sigma(a) = a^2\), and \(f = x^n-1\). One advantage of this choice of parameters is that \(f\) fully decomposes as a least common left multiple of \(n\) linear polynomials. For a more general \(f\) a way to find a decomposition of this type is needed. In \cite{Chaussade/etal:2009} Picard--Vessiot fields of \(\sigma\)-difference equations associated to skew codes are used to find those non-commutative roots and design skew BCH codes with prescribed distance; a new Sugiyama like decoding algorithm, extending the one in \cite{Boucher/etal:2007}, is also proposed, based on the resolution of a commutative key equation in a suitable commutative polynomial ring. In \cite{GLNPGZ}, a decoding algorithm inspired by the classical Peterson-Gorenstein-Zierler algorithm is designed for skew Reed-Solomon codes (skew RS codes), taking \(f = x^n - 1\), where \(n\) is the order of the automorphism \(\sigma\) over a finite field \(\field{}\). When \(n\) is not the order of \(\sigma\), it is possible to add non commutative roots to \(f\) extending \(\sigma\) to a suitable field extension of $\field{}$. In this paper, we use these new roots to provide bounds on the Hamming distance similar to \cite{Hartmann/Tzeng:1972}.

Our approach was motivated by the new perspective of cyclicity in convolutional codes proposed in \cite{gln2016new}. A naive attempt to provide cyclic structures on convolutional codes fails as pointed out in \cite{Piret:1975}, and the use of skew polynomials was immediately proposed \cite{Piret:1976,Roos:1979} and further developed in \cite{Gluesing/Schmale:2004,GLN2017IdealCodes}. Initial approaches to an algebraic decoding in this setting have been proposed in \cite{gln2017sugiyama} and \cite{GLNPGZ}. In \cite{gln2017sugiyama}, a pure non commutative key equation is presented and solved. In \cite{GLNPGZ}, the aforementioned non-commutative Peterson-Gorenstein-Zierler like decoding algorithm is designed by using abstract fields, covering both block and convolutional skew RS codes. Note that skew RS block codes in the sense of \cite{GLNPGZ} can be regarded as special cases of the GSRS codes introduced in \cite[Definition 9]{Liu/etal:2015}, see Remark \ref{RSGSRS}. Skew RS block codes were also proved to be MDS in \cite[Theorem 5]{Boucher/Ulmer:2014}. Therefore, the decoding algorithms designed in \cite{Boucher/Ulmer:2014} and \cite{Liu/etal:2015} are also available for skew RS block codes.

In this paper we continue with the study of skew cyclic structures in linear codes. In order to apply our results to both cases, block codes and convolutional codes, we work in the framework of a general finite field extension, well understood that our motivating examples are finite fields and rational functions over finite fields. In a sense, some of our results might be regarded as extensions of existing ones for block codes in the literature to convolutional codes and, more generally, to finite field extensions with cyclic Galois group. An Appendix, describing which results, and how,  are concerned is included.

In Section \ref{algsetup} the mathematical tools needed to handle non-commutative roots of skew polynomials are developed. This kind of roots of \(x^n-1\) is used in Sections \ref{HT} and \ref{codesconstruction} to give a Hartmann-Tzeng-like bound of the Hamming distance of skew cyclic codes and to build codes with designed distance. Theorem \ref{HTbound} may be seen as an extension of  \cite[Corollary 5]{MartinezPenas:2017} to a general context, which includes skew convolutional codes, and, similarly, Corollary \ref{BCHbound} can be considered as an extension of \cite[Proposition 1]{Chaussade/etal:2009}.  However, our approach and methods are conceptually different from that of \cite{MartinezPenas:2017}, based upon root spaces, or  from the Picard-Vessiot view used in \cite{Chaussade/etal:2009}. Our method of constructing skew codes of prescribed parameters is based on a very simple combinatorial procedure, namely, compute the closure of suitable subsets of $\{0,1, \dots, n-1 \}$ under the addition of a modulus $s$, see Theorem \ref{Hdist}. Such a method is new, up to our best knowledge, even for skew block codes.
%The techniques used by Chaussade, \emph{et al.} apply only to finite fields essentially because they use extensively the special form of the Frobenius automorphism.

Finally we show in Section \ref{sec:decoding} that the non commutative Peterson-Gorenstein-Zierler algorithm designed in \cite{GLNPGZ} and the Sugiyama's like algorithm which appears in \cite{Boucher/etal:2007,gln2017sugiyama} to decode skew RS codes, can be used to also decode the skew BCH codes introduced here. In the Appendix, we also include a brief discussion of how the decoding algorithms from \cite{Augot/etal:2013} and \cite{Liu/etal:2015} are applicable to skew RS codes and, hence, to skew BCH codes.

We include explicit examples illustrating the theory. All computations have been done with the aid of SageMath \cite{sagemath}.

\section{Skew cyclic codes}\label{algsetup}

Let \(L\) be a field and let \(\mathcal{C} \subseteq L^n\) be an \(L\)--linear code (i.e. a non zero vector subspace of \(L^n\)). In order to endow \(\mathcal{C}\) with a skew cyclic structure we need to introduce some notation. We denote by $\Aut{L}$ the automorphism group of \(L\). From each $\sigma \in \Aut{L}$ we can construct the \emph{skew polynomial ring} $R = L[x;\sigma]$.  Its elements are polynomials in one variable $x$ with coefficients in $L$ written on the left. The sum of polynomials is as usual, while the multiplication is twisted according to the rule $xa = \sigma(a)x$, for all $a \in L$. This construction is a particular case of the most general notion of Ore polynomial extension, where a $\sigma$-derivation is also considered, see \cite{Ore:1933}. It is well-known that $R$ has left and right Euclidean Division Algorithms, so, in particular, every left and every right ideal of $R$ is principal (see, e.g., \cite{Jacobson:1996}). In particular, if $\{ f_1, \dots, f_k \} \subseteq R$, the left ideals $Rf_1 \cap \cdots \cap Rf_k$ and \(Rf_1 + \cdots + Rf_k\) are principal. Their monic generators are the \emph{least common left multiple} and the \emph{greatest common right divisor} of $f_1, \dots, f_k$ respectively, denoted by $\lclm{f_1, \dots, f_k}$ and \(\gcrd{f_1, \dots, f_k}\). These polynomials can be computed by means of the extended left Euclidean Algorithm (see \cite[Ch. I, Theorem 4.33]{Bueso/alt:2003}). Given polynomials $h, g \in R$, we use the notation $g \mid_r h$ to declare that $g$ is a \emph{right divisor of} $h$, that is, $Rh \subseteq Rg$. Moreover, \(\gamma \in L\) is said to be a \emph{right root} of \(g \in R\) if \(x-\gamma \mid_r g\). 

Now, assume that $\sigma$ has finite order $|\sigma| = \mu$, with $\mu \mid n$, and set $K = L^\sigma$, the invariant subfield under $\sigma$. Since \(\mu \mid n\), the polynomial \(x^n-1\) is a central polynomial in \(R\), hence \(\mathcal{R} = R/R(x^n-1)\) is a factor algebra over $K$.

As usual in algebraic coding theory, we identify elements of $\mathcal{R}$ with polynomials in $R$ of degree strictly lower than $n$.  The multiplication \emph{on the left} of elements of $R$ by elements of $L$ endows $R$ with the structure of an $L$--vector space. Hence,  $\mathcal{R}$ becomes an $L$--vector space isomorphic to $L^n$ via the coordinate map \(\tovector:\mathcal{R} \to L^n\), which maps each polynomial of degree lower than \(n\) to the vector of its coefficients. This identification is made all along the paper. The  Hamming weight of $u \in L^n$ is denoted by $\weight(u)$.

\begin{definition}\label{scc}
We say that \(\mathcal{C}\) is a \emph{skew cyclic code over \(L\) of length \(n\)} if \(\mathcal{C} = \tovector(\mathcal{R}g) \), where \(g \in R\) is a proper right divisor of \(x^n-1\). The \emph{dimension} of the code $\mathcal{C}$ is \(n-\deg g\), and its \emph{minimum Hamming distance} is
\[
\distance(\mathcal{C}):= \min\{\weight(c)\mid c\in \mathcal C, c\neq 0\}.
\]
\end{definition}

The structure of skew cyclic codes is linked to right divisors of \(x^n-1\). These divisors can be better understood if we can fully decompose $x^n-1$ as a least common left multiple of linear polynomials, i.e. \(x^n-1 = \lclm{x-\gamma_0, \dots, x-\gamma_{n-1}}\) for suitable right roots \(\gamma_0, \dots, \gamma_{n-1}\). These roots could exist or not in $L$, but one way to look for them is to extend \(L\) to a bigger field. The automorphism \(\sigma\) must also be extended in order to obtain compatibility with the skew polynomial ring structure. Next definition establishes a setting for this extension. 

\begin{definition}\label{autoextension}
Let $s$ be a positive integer. We say that $\sigma$ \emph{has an extension $\theta$ of degree $s$} if  there exists a field extension \(L \subseteq M\) and \(\theta \in \Aut{M}\) such that \(|\theta| = n = s\mu\), \(\theta_{|L} = \sigma\) and \(M^\theta = L^\sigma = K\). 
\end{definition}

So, \(L/K\),  \(M/K\) and $M/L$ are Galois extensions with Galois groups \(\Galois(L/K) = \langle \sigma \rangle\), \(\Galois(M/K) = \langle \theta \rangle\), and  \(\Galois(M/L) = \langle \pi \rangle\) where \(\pi = \theta^\mu\). We keep these assumptions and notation along the paper. 

Recall that \(R = L[x;\sigma]\) and let \(S = M[x;\theta]\). It follows from \(\theta_{|L} = \sigma\) that \(R \subseteq S\). Since  \(\Galois(M/K)\) is a cyclic group generated by $\theta$, every $\varrho \in \Galois(M/K)$ can be extended to a ring automorphism of \(S = M[x;\theta]\) by
\[
\begin{split}
\varrho : M[x;\theta] &\longrightarrow M[x;\theta] \\
f = \textstyle\sum_i a_i x^i &\longmapsto \varrho(f) = f^\varrho = \textstyle\sum_i \varrho(a_i) x^i.
\end{split}
\]

Obviously, we still have $\theta_{|R} = \sigma$.  On the other hand,  
\begin{equation}\label{Rpi}
R = \{ f \in S : f^{\pi} = f \}. 
\end{equation}

\begin{lemma}\label{l31}
For any \(f\in R\), \(Sf \cap R = Rf\).
\end{lemma}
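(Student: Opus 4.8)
The containment $Rf\subseteq Sf\cap R$ is immediate: $R\subseteq S$ gives $Rf\subseteq Sf$, and trivially $Rf\subseteq R$. So the plan is to prove the reverse inclusion $Sf\cap R\subseteq Rf$. First I would dispose of the degenerate case $f=0$, in which both sides are the zero left ideal; assume henceforth $f\neq 0$.

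For the substantive inclusion I would argue via the Galois action. Take $g\in Sf\cap R$ and write $g=hf$ with $h\in S$. The key observation is that $\pi$ acts as a ring automorphism on $S$ and, by \eqref{Rpi}, fixes precisely the elements of $R$; in particular it fixes both $g$ and $f$. Applying $\pi$ to $g=hf$ therefore yields $g=g^{\pi}=h^{\pi}f^{\pi}=h^{\pi}f$, so $(h-h^{\pi})f=0$. Since $S=M[x;\theta]$ is a domain (the leading coefficient of a product is the leading coefficient of the first factor times a conjugate of the leading coefficient of the second, hence nonzero when both factors are) and $f\neq 0$, we may cancel $f$ on the right to obtain $h^{\pi}=h$, that is, $h\in R$ again by \eqref{Rpi}. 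Hence $g=hf\in Rf$, as desired.

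I do not expect a real obstacle here; the only points needing a moment of care are the trivial case $f=0$ and the appeal to $S$ being a domain, which is what legitimizes right cancellation by $f$. If one prefers not to invoke the domain property, an alternative proof uses uniqueness of right Euclidean division: right-dividing $g\in R$ by $f\in R$ within $R$ gives $g=qf+r$ with $q,r\in R$ and $\deg r<\deg f$; by uniqueness of quotient and remainder in the larger ring $S$ this is also the right division of $g$ by $f$ in $S$, and the hypothesis $g\in Sf$ forces $r=0$, whence $g=qf\in Rf$.
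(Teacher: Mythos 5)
Your proof is correct and is essentially the paper's own argument: apply the automorphism $\pi$, which fixes $R$ by \eqref{Rpi}, to the factorization $g = hf$ and cancel $f$ using that $S$ is a domain to conclude $h \in R$. Your explicit treatment of the case $f=0$ and the alternative via uniqueness of right Euclidean division are fine but inessential additions to the same approach.
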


\begin{proof}
It is clear that \(Rf \subseteq Sf \cap R\). Let $g \in S$ such that $ gf \in R$. By \eqref{Rpi}, 
$ gf = (gf)^\pi=g^\pi f^\pi=g^\pi f\), so \(g=g^\pi\) because \(S\) is a domain. Therefore, \(g \in R\), and $gf \in Rf$, which proves that $Sf \cap R \subseteq Rf$. 
\end{proof}

Let \(\mathcal{S} = S / S(x^n-1)\). By Lemma \ref{l31}, \(R(x^n-1) = R \cap S(x^n-1)\), hence there is a canonical inclusion \(\mathcal{R} \subseteq \mathcal{S}\). Analogously to \cite[Theorem 1]{gln2016new}, we get that $\mathcal{S}$ is isomorphic to the $n \times n$ matrix $K$--algebra  $\matrixring{n}{K}\), which shows that $x^n-1$ fully decomposes as least common left multiple of linear polynomials in $S$. Indeed, each of these decompositions corresponds to an expression of the zero ideal of $\mathcal{S}$ as an intersection of maximal left ideals.

In order to apply the former construction both to skew cyclic block and convolutional codes, we finish this section by analyzing how to construct automorphism extensions in the sense of Definition \ref{autoextension} in the cases $L=\field[q^{\mu}]$, a finite field, or $L = \field[](z)$, the rational function field over a finite field $\field[]$.

\begin{example}[\emph{Skew cyclic block codes}]\label{ex:scbc}
Let $M$ be a finite field, and $\theta \in \Aut{M}$ with $|\theta| = n = \mu s$. If $M^\theta = \field[q]$, the field of $q$ elements, then $M = \field[q^{\mu s }]$, $\theta = \tau^k$ where $\tau$ is the $\field[q]$--automorphism of Frobenius, and $(k,\mu s) = 1$, where, all along the paper, $(-,-)$ denotes the greatest common divisor of two natural numbers. Now, let $k = a\mu + h$ the Euclidean Division of $k$ by $\mu$. Then $L = M^{\theta^s} = \field[q^\mu]$, and $\sigma = \tau^h$ is an automorphism of order $\mu$ of $\field[q^\mu]$ such that $\theta_{|L} = \sigma$. Moreover, $L^{\sigma} = \field[q]$, since $(h,\mu) = 1$. Therefore, $\theta$ becomes an extension of degree $s$ of $\sigma$. To construct $\field[q^{\mu}]$--linear skew codes it is probably convenient to start with a given automorphism $\sigma \in \Aut{\field[q^\mu]}$ of order $\mu$. We may write it as $\sigma = \tau^h$ with $(h,\mu) = 1$, and $\tau$ is the Frobenius $\field[q]$--automorphism on \(\field[q^\mu]\). For any $s \geq 1$, let $k = a\mu + h$, where $a$ is the product of those prime numbers that appear in a complete factorization of $s$, but are not divisors of $h$. A straightforward argument shows that $\theta = \tau^k \in \Aut{\field[q^{\mu s}]}$ is an extension of degree $s$ of $\sigma$, where \(\tau\) also denotes the Frobenius \(\field[q]\)--automorphism on \(\field[q^{\mu s}]\).  Observe that \(s\) is arbitrary, so the length \(n = \mu s\) of the skew block code defined as a left ideal of $\field[q^\mu][x;\sigma]/\langle x^n -1\rangle$ does not need to be coprime with the characteristic of \(\field[q]\).
\end{example}

\begin{example}[\emph{Skew cyclic convolutional codes}]\label{ex:sccc}
Let $F = \field{}$ be a finite field, and $\pi \in \Aut[E]{F}$ with $|\pi| = s$ and $E = F^\pi$, and extend canonically $\pi$ to the rational function field $F(z)$ in one variable $z$.  A straightforward argument shows that $F(z)^\pi = E(z)$. Now, let $\sigma$ be an $E$--automorphism of $E(z)$ with $|\sigma| = \mu$, defined by $\sigma(z) = (az + b)/(cz + d)$, for some $a, b, c, d \in E$ such that $ad-bc \neq 0$. By L\"{u}roth's Theorem, see \cite[\S 10.2]{vanderWaerden:1970}, there exists \(u \in E(z)\) such that \(E(z)^\sigma = E(u)\). Observe that \([E(z):E(u)] = \mu\). We use the same symbol to denote the extension of \(\sigma\) to \(F(z)\). The order of \(\sigma\) is \(\mu\).

We have \(\sigma\) and \(\pi\) commute because \(a,b,c,d \in E = F^\pi\). Let us call
\[
\theta = \sigma \pi = \pi \sigma \in \Aut[E]{F(z)}.
\]
Since \(\pi\) and \(\sigma\) commute it follows that \(|\theta| = [\mu,s]\). Moreover, \(E(u) \subseteq F(z)^\theta\). We claim that, if  \((\mu,s) = 1\), then \(E(u) = F(z)^\theta\). Indeed,  \(|\theta| = [\mu,s] = s\mu\), so \([F(z):F(z)^\theta] = s\mu\). Since
\[
[F(z):E(u)] = [F(z):E(z)][E(z):E(u)] = s\mu,
\]
it follows that \([F(z)^\theta:E(u)] = 1\). Therefore,  \(\theta \in \Aut[E]{F(z)}\)  is an extension of degree $s$ of \(\sigma \in \Aut[E]{E(z)}\). We thus will be able to construct skew convolutional codes over $\field[]$ of length $n$ with $x^n-1$ fully decomposable in some extension of $\field[](z)[x;\sigma]$ whenever $n = \mu s$ with $\mu$ and $s$ coprime.
\end{example}

\section{Hartmann-Tzeng bound for skew cyclic codes}\label{HT}

In this section, we prove a version for skew cyclic codes of the Hartmann-Tzeng bound \cite{Hartmann/Tzeng:1972}. We keep the notation of Section \ref{algsetup}. 

Let us recall the Circulant Lemma, which is a particular case of \cite[Corollary 4.13]{Lam/Leroy:1988}.
  
\begin{lemma}[Circulant Lemma]\label{circulantlemma}
Let \(\{\alpha_0, \dots, \alpha_{n-1}\}\) be a \(K\)--basis of \(M\). Then, for all \(t \leq n\) and every subset \(\{k_1, k_2, \dots, k_{t}\} \subseteq \{0, 1, \dots, n-1\}\),
\[
\begin{vmatrix}
\alpha_{k_1} & \theta(\alpha_{k_1}) & \dots & \theta^{t-1}(\alpha_{k_1}) \\
\alpha_{k_2} & \theta(\alpha_{k_2}) & \dots & \theta^{t-1}(\alpha_{k_2}) \\
\vdots & \vdots &  & \vdots \\
\alpha_{k_{t}} & \theta(\alpha_{k_{t}}) & \dots & \theta^{t-1}(\alpha_{k_{t}})
\end{vmatrix} \neq 0.
\]
\end{lemma}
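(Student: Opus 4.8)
Since the excerpt already records that this is a particular case of \cite[Corollary 4.13]{Lam/Leroy:1988}, the shortest route is simply to invoke that reference; but a self-contained argument can be given using only the structure theory of $\mathcal{S}$ recalled in Section~\ref{algsetup}, and that is the plan I would follow.

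The plan is to argue by contradiction. Assume the displayed determinant vanishes for some $t$-subset $\{k_1,\dots,k_t\}$. As $M$ is a field, its columns are then $M$-linearly dependent, so there are $\lambda_0,\dots,\lambda_{t-1}\in M$, not all zero, with $\sum_{j=0}^{t-1}\lambda_j\theta^j(\alpha_{k_i})=0$ for every $i$. I would put $P=\sum_{j=0}^{t-1}\lambda_j x^j\in S$, a nonzero polynomial of degree $d\le t-1<n$, and make $M$ into a left $S$-module by $a\cdot m=am$ for $a\in M$ and $x\cdot m=\theta(m)$; this respects $xa=\theta(a)x$, and $(x^n-1)\cdot m=\theta^n(m)-m=0$ because $|\theta|=n$, so the action factors through $\mathcal{S}=S/S(x^n-1)$. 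Writing $\bar P$ for the class of $P$ in $\mathcal{S}$, the relations above say exactly that the $K$-subspace $U=\langle\alpha_{k_1},\dots,\alpha_{k_t}\rangle_K$ of $M$ is annihilated by $\bar P$, and $\dim_K U=t$ since a subset of the $K$-basis $\{\alpha_0,\dots,\alpha_{n-1}\}$ is $K$-linearly independent.

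The crux is then the inequality $\dim_K\{m\in M:\bar P\cdot m=0\}\le d$, which contradicts $\dim_K U=t>d$. To establish it, I would use $\mathcal{S}\cong\matrixring{n}{K}$ from Section~\ref{algsetup}: since $[M:K]=n$ equals the $K$-dimension of the unique simple $\mathcal{S}$-module, $M$ must be that module, so $\bar P$ acts on $M\cong K^n$ as a matrix in $\matrixring{n}{K}$ and $\dim_K\{m:\bar P\cdot m=0\}=n-\rank(\bar P)$. Next, the left ideal $\mathcal{S}\bar P$ of $\matrixring{n}{K}$ consists of all matrices whose rows lie in the row space of $\bar P$, hence $\dim_K\mathcal{S}\bar P=n\cdot\rank(\bar P)$; but $\mathcal{S}\bar P$ is also an $M$-subspace of $\mathcal{S}$, so $\dim_K\mathcal{S}\bar P=n\cdot\dim_M\mathcal{S}\bar P$, and therefore $\rank(\bar P)=\dim_M\mathcal{S}\bar P$. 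Finally $\mathcal{S}\bar P$ contains $\bar P,\bar{x}\bar P,\dots,\bar{x}^{n-1-d}\bar P$, whose representatives $P,xP,\dots,x^{n-1-d}P$ in $S$ have the pairwise distinct degrees $d,d+1,\dots,n-1$, all $<n$; since $\mathcal{S}$ is free of rank $n$ as a left $M$-module on $\bar 1,\bar{x},\dots,\bar{x}^{n-1}$, these $n-d$ elements are $M$-linearly independent, so $\dim_M\mathcal{S}\bar P\ge n-d$. Combining, $\dim_K\{m:\bar P\cdot m=0\}=n-\rank(\bar P)=n-\dim_M\mathcal{S}\bar P\le d$, as desired.

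The point where care is needed — and the reason Dedekind's independence of characters does not by itself suffice — is the passage from the full $n\times n$ circulant array (whose nonsingularity for a basis is essentially Dedekind) to an arbitrary $t\times t$ minor built from $t$ prescribed rows and the first $t$ columns: a submatrix of an invertible matrix need not be invertible, so one really has to exploit the rigid shape of the array, namely that column $j$ is obtained from column $1$ by applying $\theta^{j-1}$. That rigidity is precisely what the module-theoretic estimate of $\dim_K\ker(\bar P\cdot)$ encodes. A slightly more abstract variant of that estimate would identify $\ker(\bar P\cdot)$ with $\operatorname{Hom}_{\mathcal{S}}(\mathcal{S}/\mathcal{S}\bar P,M)$ and use that $\mathcal{S}$ is semisimple with $\operatorname{End}_{\mathcal{S}}(M)\cong K$, but the direct matrix-rank computation above is the most transparent.
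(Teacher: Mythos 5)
Your argument is correct, but it is not the paper's route: the paper proves Lemma \ref{circulantlemma} purely by citation, as a special case of Corollary 4.13 of Lam--Leroy (1988) on Vandermonde/Wronskian matrices over division rings, pointing to \cite{gln2017sugiyama} for an elementary proof, whereas you give a self-contained argument inside the paper's own framework. Your key step is sound: a vanishing $t\times t$ determinant yields a nonzero $P\in S$ of degree $d\le t-1$ whose induced $K$-linear operator $\sum_j\lambda_j\theta^j$ kills the $t$-dimensional $K$-space spanned by $\alpha_{k_1},\dots,\alpha_{k_t}$, and you bound $\dim_K\ker(\bar P\cdot)\le d$ by combining the isomorphism $\mathcal{S}\cong\matrixring{n}{K}$ (already recorded in Section \ref{algsetup}, with $M$ necessarily the unique simple module since $\dim_K M=n$), the identity $\dim_K\mathcal{S}\bar P=n\,\rank(\bar P)$ for a left ideal of a matrix algebra, the fact that $\mathcal{S}\bar P$ is also a left $M$-subspace so that $\rank(\bar P)=\dim_M\mathcal{S}\bar P$, and the triangular-by-degree independence of $\bar P,\bar x\bar P,\dots,\bar x^{\,n-1-d}\bar P$ giving $\dim_M\mathcal{S}\bar P\ge n-d$; the contradiction $t\le d\le t-1$ follows. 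In effect you re-derive, from the semisimple structure of $\mathcal{S}$, the bound ``the solution space of a skew polynomial has $K$-dimension at most its degree,'' which is exactly the content that Lam--Leroy's machinery (P-independence of right roots) supplies in the cited reference. What the citation buys is brevity and greater generality (division rings, $(\sigma,\delta)$-settings); what your proof buys is self-containedness using only facts the paper already states, at the mild cost of leaning on the unproved-in-this-paper claim $\mathcal{S}\cong\matrixring{n}{K}$ and of a slightly compressed identification of the rank of the abstract matrix image of $\bar P$ with the rank of its action on $M$ (harmless, since $M$ is the unique simple module and the representation $\mathcal{S}\to\operatorname{End}_K(M)$ is injective by simplicity and bijective by dimension count).
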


\begin{proof}
This is a particular case of \cite[Corollary 4.13]{Lam/Leroy:1988}, whose elementary proof is available in \cite{gln2017sugiyama}.
\end{proof}

Let $\gamma \in M$. According to \cite[Lemma 2.4]{Lam/Leroy:1988}, if \(f = \sum  f_i x^i \in S\), then the right division of \(f\) by \(x-\gamma\) is given by 
\begin{equation}\label{polyevaluation}
f = q  (x-\gamma) + \textstyle\sum f_i \norm{i}{\gamma}, 
\end{equation}
where 
\[
\norm{i}{\gamma} = \gamma \theta(\gamma) \dots \theta^{i-1}(\gamma)
\]
is the \(i\)th norm of \(\gamma\) with respect to \(\theta\). Given \(\alpha \in M\), if \(\beta = \alpha^{-1}\theta(\alpha)\), then
\begin{equation}\label{normproperties}
\norm{i}{\theta^k(\beta)} = \theta^k(\alpha)^{-1} \theta^{k+i}(\alpha).
\end{equation}

Therefore, if \(\alpha \in M\) is such that \(\{\alpha, \theta(\alpha), \dots, \theta^{n-1}(\alpha)\}\) is a normal basis of \(M/K\), then, by \eqref{normproperties} and Lemma \ref{circulantlemma},
\[
\left| \begin{matrix}
1 & \beta & \norm{2}{\beta} & \dots & \norm{n-1}{\beta} \\
1 & \theta(\beta) & \norm{2}{\theta(\beta)} & \dots & \norm{n-1}{\theta(\beta)} \\
\vdots & \vdots & \vdots & \ddots & \vdots \\
1 & \theta^{n-1}(\beta) & \norm{2}{\theta^{n-1}(\beta)} & \dots & \norm{n-1}{\theta^{n-1}(\beta)}
\end{matrix} \right| \neq 0,
\]
As a consequence, by \cite[Lemma 5.7]{Lam/Leroy/Ozturk:2008},
\[
x^n-1 = \lclm{x-\beta, x-\theta(\beta), \dots, x-\theta^{n-1}(\beta)}
\]
(see \cite{gln2016new} for a more direct construction).

In order to prove the Hartmann-Tzeng bound for skew cyclic codes, we need the following technical result. Given a matrix \(A\),  by \(\rank(A)\) we mean its rank. 

\begin{lemma}\label{lemrank}
Let \(\{\alpha_0, \dots, \alpha_{n-1}\}\) be a \(K\)--basis of \(M\), let \(A\) be the \((t+r)\times t\) matrix defined by
 \[
\begin{pmatrix}
\alpha_{k_1} & \theta^{s_1}(\alpha_{k_1}) & \dots & \theta^{(t-1)s_1}(\alpha_{k_1}) \\
\alpha_{k_2} & \theta^{s_1}(\alpha_{k_2}) & \dots & \theta^{(t-1)s_1}(\alpha_{k_2}) \\
\vdots & \vdots & \ddots & \vdots \\
\alpha_{k_{t+r}} & \theta^{s_1}(\alpha_{k_{t+r}}) & \dots & \theta^{(t-1)s_1}(\alpha_{k_{t+r}})
\end{pmatrix},
\]
where \(\{k_1, k_2, \dots, k_{t+r}\} \subseteq \{0, 1, \dots, n-1\}\), and let 
\[
B_i=\left(\begin{array}{c|c|c|c|c}
A & \theta^{s_2}(A) & \theta^{2s_2}(A)& \cdots & \theta^{is_2}(A)
\end{array}\right),
\]
for \(i=0,\ldots, r\), where \((s_1,n)=1\) and \((s_2,n)<t+1\). Then
\[
\rank(B_i) \geq t+i. \]
\end{lemma}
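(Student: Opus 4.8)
The plan is to reinterpret $\rank(B_i)$ as the $M$-dimension of a span of coordinatewise $\theta$-twists of the single vector $\alpha:=(\alpha_{k_1},\dots,\alpha_{k_{t+r}})^{\mathrm T}\in M^{t+r}$, and then to induct on $i$, the inductive step being driven by the two coprimality hypotheses. First I would set $\phi:=\theta^{s_1}$; since $(s_1,n)=1$, $\phi$ is again a generator of $\Galois(M/K)$ of order $n$, so the Circulant Lemma holds with $\phi$ in place of $\theta$. I pick $m\in\{0,\dots,n-1\}$ with $m\equiv s_2 s_1^{-1}\pmod n$ ($s_1^{-1}$ the inverse modulo $n$); then $d:=(m,n)=(s_2,n)\le t$ (as $s_1^{-1}$ is a unit modulo $n$) and $\langle m\rangle=\langle d\rangle=\{0,d,2d,\dots\}=:H$ in $\Zset/n\Zset$. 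For $F\subseteq\Zset/n\Zset$ I write $W_F:=\operatorname{span}_M\{\phi^{e}(\alpha):e\in F\}\subseteq M^{t+r}$ (exponents read mod $n$, since $\phi^n=1$). The column of index $l$ of the block $\theta^{js_2}(A)$ is $\theta^{js_2+ls_1}(\alpha)=\phi^{jm+l}(\alpha)$, because $js_2+ls_1\equiv s_1(jm+l)\pmod n$; hence, writing $I:=\{0,1,\dots,t-1\}$ and $E_i:=\bigcup_{j=0}^{i}(jm+I)$, one has $\rank(B_i)=\dim_M W_{E_i}$.

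The three facts I would isolate are: (i) $\dim_M W_{\{0,\dots,p-1\}}=\min(p,t+r)$ for all $p\ge1$ — when $p\le t+r\le n$ the $p\times p$ minor $\det\bigl(\phi^{l}(\alpha_{k_a})\bigr)_{1\le a\le p,\,0\le l<p}$ is nonzero by the Circulant Lemma, and the case $p>t+r$ follows by inclusion; in particular $\dim_M W_I=t$ and $W_{\Zset/n\Zset}=M^{t+r}$. (ii) For $c\in\Zset/n\Zset$, $W_{c+F}=\phi^{c}(W_F)$, where $\phi^c$ acts coordinatewise; as $\phi^c$ is a bijective additive map of $M^{t+r}$ with $\phi^c(av)=\phi^c(a)\phi^c(v)$, it preserves $M$-dimension, so $W_F=W_{F'}$ implies $W_{c+F}=W_{c+F'}$. (iii) $H+I=\Zset/n\Zset$: since $I$ is an interval of length $t\ge d$, for each $a$ the translates $ad+I$ and $(a+1)d+I$ fit together into one interval, so $\bigcup_{a=0}^{n/d-1}(ad+I)=\{0,1,\dots,n-d+t-1\}=\Zset/n\Zset$. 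Item (iii) is where the hypothesis $(s_2,n)\le t$ is used, and it is the crux.

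Then I would run the induction on $i\in\{0,\dots,r\}$. The base case is $\rank(B_0)=\dim_M W_I=t$. For the step, assume $\rank(B_{i-1})\ge t+i-1$; since $B_{i-1}$ is a column-submatrix of $B_i=(B_{i-1}\mid\theta^{is_2}(A))$, already $\rank(B_i)\ge t+i-1$, so it suffices to exclude $\rank(B_i)=t+i-1$. In that case $W_{E_i}=W_{E_{i-1}}$ (nested subspaces of equal dimension). From $E_{N+1}=I\cup(m+E_N)$ and (ii), a secondary induction on $N\ge i$ gives $W_{E_N}=W_{E_{i-1}}$ for all such $N$: $W_{E_{N+1}}=W_I+\phi^{m}(W_{E_N})=W_I+\phi^{m}(W_{E_{i-1}})=W_I+W_{m+E_{i-1}}=W_{I\cup(m+E_{i-1})}=W_{E_i}=W_{E_{i-1}}$. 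Taking $N$ large enough that $\{jm:0\le j\le N\}=H$, item (iii) gives $E_N=H+I=\Zset/n\Zset$, so $\dim_M W_{E_{i-1}}=t+r$ by (i); but $\rank(B_{i-1})=t+i-1\le t+r-1$ because $i\le r$ — a contradiction. Hence $\rank(B_i)\ge t+i$.

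The hard part, and the only place real care is needed, is making the two hypotheses do exactly their jobs: $(s_1,n)=1$ both transports the Circulant Lemma to $\phi$ and collapses the exponent set of each block $\theta^{js_2}(A)$ to a single translate $jm+I$ of an interval, while $(s_2,n)\le t$ guarantees in (iii) that iterating the shift by $m$ eventually covers $\Zset/n\Zset$. The remaining ingredients — that a coordinatewise semilinear bijection preserves dimension, the submatrix rank inequality, and the two nested inductions — are routine.
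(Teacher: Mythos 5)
Your proof is correct, and it is essentially the paper's argument: both show that if the chain of column spaces stalls it becomes stable under the $\theta^{s_2}$-twist, and then the hypothesis $(s_2,n)\leq t$ forces that space to contain enough twists $\theta^{\ell s_1}(\alpha)$ to have rank $t+r$ by the Circulant Lemma applied to $\theta^{s_1}$, a contradiction. Your reformulation via exponent sets in $\Zset/n\Zset$ (covering all of $\Zset/n\Zset$ rather than the paper's long interval of length $t+r(s_2,n)$) is only a cosmetic repackaging of the same mechanism.
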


\begin{proof}
Consider the subspaces \(\mathcal B_0,\ldots, \mathcal B_r \subseteq M^{t+r}\), defined as \(\mathcal B_i=\cs(B_i)\) for \(i=0,\ldots, r\), where \(\cs(L)\) denotes the subspace spanned by the columns of \(L\). They form a chain of subspaces
\[
\mathcal B_0 \subseteq \mathcal B_1\subseteq \ldots \subseteq \mathcal B_r.
\]
Since \((s_1,n) = 1\), we have that \(K = M^\theta = M^{\theta^{s_1}}\), so Lemma \ref{circulantlemma} can be applied and \(\dim \mathcal B_0=\rank(A)=t\).  
Observe that, by definition, \( \mathcal B_{i+1}=\mathcal B_i + \theta^{s_2}(\mathcal B_i) \), therefore  \(\mathcal B_i = \mathcal B_{i+1}\) if and only if \(\mathcal B_i = \theta^{s_2}(\mathcal B_i) \).
Suppose by contradiction that  \(\mathcal B_i = \theta^{s_2}(\mathcal B_i) \) and \(\dim(\mathcal B_i) \leq t+i\) for some \(i<r\). Then we also have \(\mathcal B_i = \theta^{(s_2,n)}(\mathcal B_i) \). Since \( \mathcal B_0 \subseteq \mathcal B_i\) and \(\mathcal B_i\) is stable under the action of \(\theta^{(s_2,n)}\), this implies \(\theta^{j(s_2,n)s_1}(\mathcal B_0)\subseteq \mathcal B_i\) for every \(j=1,\ldots,r\). Now, since \((s_2,n)<t+1\) the subspace \(\mathcal B_0+\theta^{(s_2,n)s_1}(\mathcal B_0)\)
contains the column space of the matrix

\[ 
\begin{pmatrix}
\alpha_{k_1} & \theta^{s_1}(\alpha_{k_1}) & \dots & \theta^{(t-1+(s_2,n))s_1}(\alpha_{k_1}) \\
\alpha_{k_2} & \theta^{s_1}(\alpha_{k_2}) & \dots & \theta^{(t-1+(s_2,n))s_1}(\alpha_{k_2}) \\
\vdots & \vdots &  & \vdots \\
\alpha_{k_{t+r}} & \theta^{s_1}(\alpha_{k_{t+r}}) & \dots & \theta^{(t-1+(s_2,n))s_1}(\alpha_{k_{t+r}})
\end{pmatrix},
\]
and, iterating this construction, \(\mathcal B_0+\theta^{(s_2,n)s_1}(\mathcal B_0)+ \ldots + \theta^{r(s_2,n)s_1}(\mathcal B_0) \) contains the column space of the \((t+r)\times (t+r(s_2,n))\) matrix
\[ 
\begin{pmatrix}
\alpha_{k_1} & \theta^{s_1}(\alpha_{k_1}) & \dots & \theta^{(t-1+r(s_2,n))s_1}(\alpha_{k_1}) \\
\alpha_{k_2} & \theta^{s_1}(\alpha_{k_2}) & \dots & \theta^{(t-1+r(s_2,n))s_1}(\alpha_{k_2}) \\
\vdots & \vdots &  & \vdots \\
\alpha_{k_{t+r}} & \theta^{s_1}(\alpha_{k_{t+r}}) & \dots & \theta^{(t-1+r(s_2,n))s_1}(\alpha_{k_{t+r}})
\end{pmatrix},
\]
whose rank is exactly \(t+r\) by Lemma \ref{circulantlemma} applied to \(\theta^{s_1}\). Hence, \(\mathcal B_i\) has dimension \(t+r\) and this yields  a contradiction.
\end{proof}

The distance properties of the skew cyclic code $\mathcal{C}$ defined by $g \in R$ depend on its \emph{\(\beta\)-defining set}, defined as 
\begin{equation} \label{eq:definingset}
T_{\beta}(g) = \{0 \leq i \leq n-1 ~|~ x-\theta^i(\beta) \mid_r g\}.
\end{equation}

\begin{theorem}[Hartmann-Tzeng bound]\label{HTbound}
Assume there exist \(b, \delta, r, t_1, t_2\) with \((n,t_1)=1\) and \( (n,t_2)<\delta \) such that \(\{b+i t_1+\ell t_2 \mid 0\leq i \leq \delta-2,\, 0\leq \ell \leq r\} \subseteq T_{\beta}(g)\). Then \(\distance(\mathcal{C}) \geq \delta+r\). 
\end{theorem}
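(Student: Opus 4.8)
The plan is to argue by contradiction: assume $\mathcal C$ contains a nonzero codeword, identified with a polynomial $c=\sum_{m=0}^{n-1}c_mx^m\in R$ of degree $<n$ and Hamming weight $w:=\weight(c)\le \delta+r-1$, and deduce $c=0$. First I would pass from membership in the code to divisibility: since $g\mid_r x^n-1$ we may write $x^n-1=eg$ with $e\in R$, so the degree-$<n$ representative of any element of $\mathcal Rg$ again lies in $Rg$; hence $g\mid_r c$ in $R$, and therefore in $S$. Consequently $x-\theta^j(\beta)\mid_r c$ for every $j\in T_\beta(g)$, which by the division formula \eqref{polyevaluation} is equivalent to $\sum_{m=0}^{n-1}c_m\,\norm{m}{\theta^j(\beta)}=0$ for all such $j$.

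Next I would linearize these parity conditions by means of a normal basis. Let $\alpha\in M$ be such that $\{\alpha,\theta(\alpha),\dots,\theta^{n-1}(\alpha)\}$ is a normal basis of $M/K$ (it exists by the Normal Basis Theorem, as $\Galois(M/K)=\langle\theta\rangle$ is cyclic of order $n$) and such that $\beta=\alpha^{-1}\theta(\alpha)$, exactly as arranged before \eqref{eq:definingset}. By \eqref{normproperties} one has $\norm{m}{\theta^j(\beta)}=\theta^j(\alpha)^{-1}\theta^{j+m}(\alpha)$, so multiplying the relation above by the nonzero scalar $\theta^j(\alpha)$ turns it into $\sum_{m=0}^{n-1}c_m\,\theta^{j+m}(\alpha)=0$. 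Specializing $j=b+it_1+\ell t_2$ for $0\le i\le\delta-2$ and $0\le\ell\le r$, all of which lie in $T_\beta(g)$ by hypothesis, yields
\[
\sum_{m\in\operatorname{supp}(c)}c_m\,\theta^{\,b+it_1+\ell t_2+m}(\alpha)=0,\qquad 0\le i\le\delta-2,\ 0\le\ell\le r.
\]

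Now I would feed this into Lemma \ref{lemrank}. Set $t:=\delta-1$; since $w\le t+r$ (and, harmlessly, $t+r\le n$) we may enlarge $\operatorname{supp}(c)$ to a set $\{k_1,\dots,k_{t+r}\}\subseteq\{0,\dots,n-1\}$ and extend $c$ by zero outside its support. The family $\alpha_m:=\theta^{\,b+m}(\alpha)$, $0\le m\le n-1$, is a $K$-basis of $M$ (the image of a normal basis under the $K$-automorphism $\theta^b$), so Lemma \ref{lemrank} applies to it with $s_1=t_1$ and $s_2=t_2$: the hypotheses $(t_1,n)=1$ and $(t_2,n)<\delta=t+1$ are precisely the two required there. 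It produces the $(t+r)\times t(r+1)$ matrix $B_r$ whose $(p,(\ell,i))$ entry equals $\theta^{\,it_1+\ell t_2}(\alpha_{k_p})=\theta^{\,b+it_1+\ell t_2+k_p}(\alpha)$, and asserts $\rank(B_r)\ge t+r$, i.e. $B_r$ has full row rank. But the displayed equations say exactly that the row vector $(c_{k_1},\dots,c_{k_{t+r}})$ annihilates $B_r$ on the left, so this vector is $0$; hence $c_m=0$ for every $m\in\operatorname{supp}(c)$, forcing $c=0$, a contradiction. Therefore $\distance(\mathcal C)\ge\delta+r$.

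The genuinely substantive step is the last one: recognizing that, after the normal-basis substitution, the coefficient matrix of the system of parity conditions is — index for index — the matrix $B_r$ of Lemma \ref{lemrank}, so that the combinatorial hypotheses $(n,t_1)=1$ and $(n,t_2)<\delta$ become verbatim the hypotheses of that lemma, whose rank estimate does all the work. I expect the remaining difficulty to be purely notational: aligning the three running indices $i,\ell,m$, the offset $b$, and the cyclic reindexing of the normal basis with the (slightly differently labelled) matrix in the statement of Lemma \ref{lemrank}; the implication $c\in\mathcal Rg\Rightarrow g\mid_r c$ and the telescoping identity \eqref{normproperties} are routine.
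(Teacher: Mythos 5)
Your argument is correct and is essentially the paper's own proof: both convert membership of a low-weight codeword into the vanishing of $\sum_j c_j\,\theta^{\,b+it_1+\ell t_2+k_j}(\alpha)$ via \eqref{polyevaluation} and \eqref{normproperties}, and then invoke Lemma \ref{lemrank} with $t=\delta-1$, $s_1=t_1$, $s_2=t_2$ to conclude the coefficient matrix has full row rank, forcing $c=0$. The only (harmless) cosmetic difference is that you absorb the offset $b$ into the shifted basis $\theta^{b+m}(\alpha)$, whereas the paper keeps the basis fixed and notes that $c$ lies in the left kernel of $\theta^b(B)$, whose rank equals that of $B$.
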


\begin{proof}
Let \(w = \delta+r-1\) and let \(c \in \mathcal{R}g\) such that \(\weight(c) \leq w\), i.e. \(c = \sum_{j=1}^w c_j x^{k_j}\) for suitable \(\{k_1, \dots, k_w\} \subseteq \{0, \dots, n-1\}\). For each \(0 \leq i \leq \delta - 2\) and \(0 \leq l \leq r\), \(x - \theta^{b+i t_1+\ell t_2}(\beta) \mid_r c\), so by \eqref{polyevaluation} and \eqref{normproperties}
\[
\begin{split}
0 &= \textstyle\sum_{j=1}^w c_j \norm{k_j}{\theta^{b+i t_1+\ell t_2}(\beta)} \\
&= \theta^{b+i t_1+\ell t_2}(\alpha)^{-1} \textstyle\sum_{j=1}^w c_j {\theta^{b+i t_1+\ell t_2 + k_j}(\alpha)}.
\end{split} 
\]
We get that \(c\) is in the left kernel of the matrix \(\theta^{b}(B)\) where
\[
B=\left(\begin{array}{c|c|c|c}
A & \theta^{t_2}(A) & \cdots & \theta^{rt_2}(A)
\end{array}\right)
\]
and 
\[
A=\Big( \theta^{it_1+k_j}(\alpha) \Big)_{\genfrac{}{}{0pt}{}{1 \leq j \leq w}{0 \leq i \leq \delta-2}}.
\]
By Lemma \ref{lemrank} we get that \(\rank(B)=w\) and hence \(c=0 \) is the only element in \(\mathcal R g\) of weight at most
\(\delta+r-1\).
\end{proof}

The classical BCH bound for cyclic codes can also be derived for skew codes as a particular case of the previous theorem. 

\begin{corollary}[BCH bound]\label{BCHbound}
Assume there exist \(b,\delta,t\) with $(t,n)=1$ such that \(\{b, b+t, b+2t, \dots, b+(\delta-2)t\} \subseteq T_{\beta}(g)\). Then \(\distance(\mathcal{C}) \geq \delta\). 
\end{corollary}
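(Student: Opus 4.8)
The plan is to derive the BCH bound directly from the Hartmann--Tzeng bound, Theorem \ref{HTbound}, by choosing the parameters of that theorem so that the second arithmetic progression collapses to a single point. Concretely, given $b, \delta, t$ with $(t,n)=1$ and $\{b, b+t, \dots, b+(\delta-2)t\}\subseteq T_\beta(g)$, I would apply Theorem \ref{HTbound} with $t_1 = t$, with $r = 0$, and with $t_2$ chosen to be any integer satisfying $(t_2,n) < \delta$; the choice $t_2 = n$ works, since $(n,n) = n$ only when $n < \delta$, so more safely one simply takes $r=0$ and then the condition on $t_2$ is vacuous because the index set $\{b + it_1 + \ell t_2 : 0\le i\le \delta-2,\ 0\le \ell\le r\}$ reduces to $\{b + it : 0\le i \le \delta-2\}$ regardless of $t_2$ (the variable $\ell$ ranges over the empty-of-interior set $\{0\}$). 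With $r = 0$, the hypothesis $(n,t_1) = 1$ is exactly $(n,t) = 1$, the hypothesis $(n,t_2) < \delta$ can be met trivially (e.g.\ pick $t_2 = 1$ when $\delta \geq 2$, which holds since a designed distance of $1$ is vacuous), and the inclusion required by Theorem \ref{HTbound} becomes precisely $\{b, b+t, \dots, b+(\delta-2)t\}\subseteq T_\beta(g)$, which is our hypothesis.

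Having checked that all hypotheses of Theorem \ref{HTbound} are satisfied with these choices, the conclusion $\distance(\mathcal{C}) \geq \delta + r = \delta + 0 = \delta$ is immediate. So the corollary follows in essentially one line once the parameter substitution is spelled out.

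The only point that requires a moment's care is the treatment of $t_2$: Theorem \ref{HTbound} as stated presumes the existence of $t_2$ with $(n,t_2) < \delta$, and one must confirm such a $t_2$ exists and that the choice does not interact with the conclusion when $r = 0$. Since a code with designed distance $\delta = 1$ carries no information (every nonzero codeword trivially has weight at least $1$), we may assume $\delta \geq 2$, and then $t_2 = 1$ satisfies $(n,1) = 1 < 2 \leq \delta$. With $r = 0$ the family $\{b + it_1 + \ell t_2 : 0 \le i \le \delta - 2,\ 0 \le \ell \le 0\}$ does not involve $t_2$ at all, so no genuine obstacle arises; this is the main (and minor) subtlety to address. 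I expect no serious difficulty in this proof — it is a direct specialization — so the write-up simply needs to make the substitution $r = 0$, $t_1 = t$, $t_2 = 1$ explicit and invoke Theorem \ref{HTbound}.
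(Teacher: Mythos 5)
Your proposal is correct and is exactly the paper's argument: Corollary \ref{BCHbound} is obtained by specializing Theorem \ref{HTbound} with $r=0$, $t_1=t$ and $t_2=1$. The extra remarks about $\delta\geq 2$ and the vacuity of the $t_2$ condition when $r=0$ are fine but not needed beyond the one-line substitution.
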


\begin{proof}
It is Theorem \ref{HTbound} with \(r = 0\), \(t_1 = t\) and \(t_2 = 1\). 
\end{proof}

\begin{remark}\label{HTUrem}
Recently, a HT bound with respect to a rank metric has been obtained in \cite[Corollary 5]{MartinezPenas:2017}, in the realm of skew block codes. Setting $L=\mathbb{F}_{q^{m}}$ in Theorem \ref{HTbound} we get a similar statement to \cite[Corollary 5]{MartinezPenas:2017}.  The precise relation between both results is explained in Proposition \ref{HTU} and Remark \ref{HTUdif} in the Appendix. We also discuss how Corollary \ref{BCHbound}, when applied to finite fields, boils down to \cite[Proposition 1]{Chaussade/etal:2009}, see Corollary \ref{BCHChaussade} in the Appendix. 
\end{remark}

\begin{remark}
Although in this paper we have focuses on the HT bound, the reader might ask about a generalization analogous to the one provided by Roos \cite{Roos:1979} for cyclic block codes. Following the techniques developed in this paper, and certain combinatorial effort, we think that such a bound may be proved. Another bounds on the minimum distance to be considered for future works may be found in \cite{Lint/Wilson:1986, Zeh/etal:2013, Boston:20013, Duursma/Pellikaan:2006}.
\end{remark}

\section{Constructing Skew Cyclic Codes with designed distance}\label{codesconstruction}

In this section we are going to provide a method for constructing skew cyclic codes with a designed minimum distance. To this end, we investigate the structure of the $\beta$--defining set $T_\beta$ of a given skew polynomial. We will need some general facts on the contraction of left ideals of $S = M[x;\theta]$ to $R = L[x;\sigma]$. We keep the notation of Section \ref{algsetup}. 

Let \(\varrho \in \Galois(M/K)\). We say that a left ideal \(I\) of \(S\) is \emph{\(\varrho\)-invariant} if \(\varrho(I)=I\). Recall that \(\pi = \theta^\mu\) is a generator of the cyclic group \(\Galois(M/L)\), and \(R = S^\pi\).

\begin{lemma}\label{piinvariant}
A left ideal \(I\) is \(\pi\)-invariant if and only if \(I=Sf\) with \(f\in R\).
\end{lemma}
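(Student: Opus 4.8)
The plan is to prove both implications directly, using the characterization $R = S^\pi$ from \eqref{Rpi} together with the principality of left ideals of $S$.

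First I would handle the easy direction. Suppose $I = Sf$ with $f \in R$. Since $\pi$ is a ring automorphism of $S$, we have $\pi(Sf) = \pi(S)\pi(f) = S f^\pi$. But $f \in R = S^\pi$ means $f^\pi = f$, so $\pi(I) = Sf = I$, and $I$ is $\pi$-invariant. (One should note $\pi$ fixes $x$, so it genuinely restricts to a ring automorphism of $S$, as already observed in the excerpt for all of $\Galois(M/K)$.)

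For the converse, assume $I$ is a $\pi$-invariant left ideal. Since every left ideal of $S$ is principal, write $I = Sg$ for some $g \in S$, which we may take monic (or zero, a trivial case). Applying $\pi$ gives $I = \pi(I) = Sg^\pi$. Now $g$ and $g^\pi$ have the same degree (the leading coefficient of $g^\pi$ is $\pi$ of the leading coefficient of $g$, hence nonzero), and both are monic if $g$ is. Two monic generators of the same principal left ideal of a domain with right Euclidean division must coincide: if $Sg = Sg^\pi$ then $g = u g^\pi$ for a unit $u \in S$, and units of $S = M[x;\theta]$ are the nonzero constants in $M$; comparing leading coefficients forces $u = 1$, so $g = g^\pi$. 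By \eqref{Rpi} this means $g \in R$, and hence $I = Sg$ with $g \in R$, as desired.

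I do not expect a serious obstacle here; the main point to be careful about is the normalization argument in the converse, namely that a $\pi$-invariant principal left ideal has a $\pi$-fixed generator. This rests on the uniqueness of the monic generator of a principal left ideal in $S$, which follows from right Euclidean division in $S = M[x;\theta]$ exactly as in the commutative case, and on the description of the units of $S$ as $M^\times$. Both facts are standard for skew polynomial rings over a field (see \cite{Jacobson:1996, Bueso/alt:2003}) and are implicitly used elsewhere in the paper.
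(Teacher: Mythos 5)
Your proof is correct and follows essentially the same route as the paper: the forward direction uses that $\pi$ acts as a ring automorphism of $S$ fixing $f\in R$, and the converse takes a monic generator $f$ of $I$, compares it with $f^\pi$ (same degree, same ideal, both monic), and concludes $f=f^\pi\in R$ by \eqref{Rpi}. Your extra remarks on uniqueness of the monic generator and on units of $S$ just make explicit what the paper's degree argument uses implicitly.
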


\begin{proof}
If \(I=Sf\) with \(f\in R\) then, for any \(gf\in I\), \((gf)^\pi = g^\pi f^\pi = g^\pi f \in I\), so \(\pi(I)\subseteq I\). Moreover, if \(h = gf \in I\), then \(h = (g^{\pi^{-1}}f)^\pi\), so \(I\subseteq \pi(I)\).

Let us now suppose that \(I\) is \(\pi\)-invariant and \(I = Sf\) with \(f\) monic. Hence $f^\pi=gf$ for some \(g\). But \(\deg f=\deg f^\pi\), so \(g=1\) and \(f=f^\pi\).
\end{proof}

Given a polynomial \(f\in S\), there is a polynomial \(\pseudobound[]{f}\), uniquely determined up to left multiplication by nonzero constants in $L$, such that $Sf \cap R = R\pseudobound[]{f}$. Recall that $\lclm{-}$ denotes the least common left multiple in $S$. 

\begin{proposition}\label{pseudocalc}
If \(f\in S\), then $S\pseudobound{f}$ is the largest $\pi$--invariant left ideal of $S$ contained in $Sf$ and 
\(
S\pseudobound{f} = \bigcap_{i=0}^{s-1}Sf^{\pi^i}.
\)
Consequently, \(\pseudobound{f}=\lclm{f,f^\pi,\dots, f^{\pi^{s-1}}}\).
\end{proposition}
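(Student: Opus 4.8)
The plan is to establish the three assertions in sequence, using Lemma~\ref{piinvariant} as the main structural tool. First I would show that $S\pseudobound{f}$ is $\pi$-invariant: by definition $S\pseudobound{f} = Sf \cap R \cdot S$; more directly, $\pseudobound{f} \in R = S^\pi$, so $\pi(S\pseudobound{f}) = S\pi(\pseudobound{f}) = S\pseudobound{f}$, which is exactly the ``if'' direction of Lemma~\ref{piinvariant}. (Here I am using that $Sf\cap R = R\pseudobound{f}$ forces $\pseudobound{f}\in R$, hence $\pseudobound{f}^\pi = \pseudobound{f}$.) So $S\pseudobound{f}$ is a $\pi$-invariant left ideal contained in $Sf$, since $R\pseudobound{f} \subseteq Rf \subseteq Sf$ gives $S\pseudobound{f} = S(R\pseudobound{f}) \subseteq Sf$.

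Next I would prove maximality. Let $Sg$ be any $\pi$-invariant left ideal with $Sg \subseteq Sf$. By Lemma~\ref{piinvariant} we may take $g \in R$. Then $Sg \cap R = Rg$ by Lemma~\ref{l31}, and $Sg \subseteq Sf$ gives $Rg = Sg\cap R \subseteq Sf \cap R = R\pseudobound{f}$, i.e. $\pseudobound{f}\mid_r g$, whence $Sg \subseteq S\pseudobound{f}$. This shows $S\pseudobound{f}$ is the largest such ideal.

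For the intersection formula, I would argue that $\bigcap_{i=0}^{s-1} Sf^{\pi^i}$ is $\pi$-invariant: applying $\pi$ permutes the ideals $Sf, Sf^\pi, \dots, Sf^{\pi^{s-1}}$ cyclically, because $\pi^s = \theta^{\mu s} = \theta^n = \operatorname{id}$ on $M$ (so $f^{\pi^s} = f$ and $\pi(Sf^{\pi^i}) = Sf^{\pi^{i+1}}$), hence the intersection is fixed by $\pi$. Being a $\pi$-invariant left ideal contained in $Sf$ (the $i=0$ term), it is contained in $S\pseudobound{f}$ by the maximality just proved. Conversely, $S\pseudobound{f} = S\pseudobound{f}^{\pi^i} \subseteq Sf^{\pi^i}$ for every $i$ (since $\pseudobound{f}$ is $\pi$-fixed and $\pseudobound{f}\mid_r f$ implies $\pseudobound{f}^{\pi^i}\mid_r f^{\pi^i}$), so $S\pseudobound{f} \subseteq \bigcap_i Sf^{\pi^i}$, giving equality. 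Finally, $\bigcap_{i=0}^{s-1} Sf^{\pi^i} = S\lclm{f, f^\pi, \dots, f^{\pi^{s-1}}}$ is just the definition of the least common left multiple in $S$, so comparing monic generators yields $\pseudobound{f} = \lclm{f, f^\pi, \dots, f^{\pi^{s-1}}}$ up to a nonzero constant in $L$.

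The only mild subtlety — and the step I would be most careful about — is the bookkeeping that $\pi$ has order dividing $s$ on $M$, so that the family $\{Sf^{\pi^i}\}_{i=0}^{s-1}$ is genuinely closed under the action of $\pi$ and the intersection is what is claimed; this is where the hypothesis $|\theta| = n = s\mu$ from Definition~\ref{autoextension} is used. Everything else is a routine combination of Lemmas~\ref{l31} and~\ref{piinvariant}.
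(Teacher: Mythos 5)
Your proposal is correct and follows essentially the same route as the paper's proof: characterize $\pi$-invariant left ideals via Lemma \ref{piinvariant}, obtain maximality by contracting to $R$ (your extra appeal to Lemma \ref{l31} is harmless), show $\bigcap_{i}Sf^{\pi^i}$ is $\pi$-invariant and hence contained in $S\pseudobound{f}$, and get the reverse containment by applying powers of $\pi$, identifying the intersection with the ideal generated by the least common left multiple. Two cosmetic slips to fix in the write-up: the claimed containment $R\pseudobound{f}\subseteq Rf$ is unjustified (and unneeded --- you only need $\pseudobound{f}\in Sf\cap R\subseteq Sf$, whence $S\pseudobound{f}\subseteq Sf$), and in the parenthetical justifying $S\pseudobound{f}\subseteq Sf^{\pi^i}$ the divisibility is reversed: with the paper's convention it is $f\mid_r\pseudobound{f}$, not $\pseudobound{f}\mid_r f$, and the containment follows at once by applying $\pi^i$ to $S\pseudobound{f}\subseteq Sf$, which is in effect what your displayed chain does.
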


\begin{proof}
 Let $I$ be any $\pi$--invariant left ideal of $S$ contained in $Sf$. By Lemma \ref{piinvariant}, $I = Sg$ for some $g \in R$. Therefore, $g \in Sg \cap R \subseteq Sf \cap R = S\pseudobound{f}$, which implies $Sg \subseteq S\pseudobound{f}$. Since $S\pseudobound{f}$ is $\pi$--invariant (Lemma \ref{piinvariant} again), we get that it is the largest one contained in $Sf$.
 
Observe that, for any \(g\in S\), \(\pi(Sg)= Sg^\pi\). Hence, for any \(0 \leq i \leq s-1\), 
\[S\pseudobound{f} = S(\overline{f})^{\pi^i}=\pi^i(S\overline{f}) \subseteq \pi^i(Sf)=Sf^{\pi^i}.
\] 
Then 
\(
S\pseudobound{f} \subseteq \bigcap_{i=0}^{s-1} Sf^{\pi^i}.
\)
Now, 
\[
\textstyle \pi\left(\bigcap_{i=0}^{s-1} Sf^{\pi^i}\right) = \bigcap_{i=0}^{s-1}\pi(Sf^{\pi^i}) = \bigcap_{i=0}^{s-1} Sf^{\pi^i},
\] 
since \(\pi\) has order \(s\), so it is a \(\pi\)--invariant left ideal contained in \(Sf\), whence \(\bigcap_{i=0}^{s-1} Sf^{\pi^i} \subseteq S\pseudobound{f}\) and the equality holds. Finally, \(\bigcap_{i=0}^{s-1} Sf^{\pi^i} = S \lclm{f,f^\pi,\dots, f^{\pi^{s-1}}}\), hence the consequence also follows.
\end{proof}

We consider the set $C_n = \{0, 1, \dots, n-1 \}$ as a cyclic group of order $n$. Since $n = s \mu$, the subset $C_s = \{ 0, \mu, \dots, (s-1)\mu\}$ is the subgroup of order $s$ of $C_n$. By $C_n/C_s$ we denote the quotient group. Recall that given a nonzero $g \in R$, and $\beta$ as in Section \ref{HT}, we can consider the $\beta$--defining set of $g$  
\[
T: = T_\beta(g) = \{ i \in C_n : (x-\theta^i(\beta))\mid_r g \}  
\]
We start with the following observation.

\begin{lemma}
If $g \in R$ is such that $T$ is not empty, then $T = T^1 \cup \cdots \cup T^k$, for some $T^1, \dots, T^k \in C_n/C_s$.
\end{lemma}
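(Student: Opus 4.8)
The plan is to show that the defining set $T$ is a union of cosets of $C_s$ in $C_n$, which is exactly what the statement asserts since the $T^j$ range over $C_n/C_s$. The key is Proposition \ref{pseudocalc}: if $g \in R$ then $Sg$ is $\pi$-invariant (Lemma \ref{piinvariant}), so $Sg = S\pseudobound{g}$ with $\pseudobound{g} = g$ up to a unit, and in particular $Sg^{\pi^i} = Sg$ for every $i$. Since $x - \theta^j(\beta)$ divides $g$ on the right exactly when $Sg \subseteq S(x-\theta^j(\beta))$, and conjugating by $\pi$ sends $S(x-\theta^j(\beta))$ to $S(x-\theta^j(\beta))^\pi$, I want to compute $(x - \theta^j(\beta))^\pi$.

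The concrete computation is: applying the ring automorphism $\pi = \theta^\mu$ of $S$ coefficientwise gives $(x-\theta^j(\beta))^\pi = x - \pi(\theta^j(\beta)) = x - \theta^{\mu}(\theta^j(\beta)) = x - \theta^{j+\mu}(\beta)$, where the exponent $j + \mu$ is read modulo $n$ since $\theta$ has order $n$. Therefore $\pi\bigl(S(x-\theta^j(\beta))\bigr) = S(x - \theta^{j + \mu}(\beta))$. Now if $j \in T$, i.e. $x - \theta^j(\beta) \mid_r g$, then $Sg \subseteq S(x-\theta^j(\beta))$, and applying $\pi$ (which fixes $Sg$ because $g \in R$) yields $Sg = \pi(Sg) \subseteq \pi\bigl(S(x-\theta^j(\beta))\bigr) = S(x-\theta^{j+\mu}(\beta))$, so $j + \mu \in T$. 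Iterating, $j + t\mu \in T$ for all $t$, i.e. the whole coset $j + C_s$ lies in $T$. Hence $T$ is a union of $C_s$-cosets; writing $T^1, \dots, T^k$ for the distinct cosets appearing gives $T = T^1 \cup \cdots \cup T^k$ with each $T^j \in C_n/C_s$, and $k \geq 1$ since $T \neq \emptyset$.

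I do not anticipate a serious obstacle here; the only point requiring a little care is the bookkeeping with exponents of $\theta$ modulo $n$, and the fact that distinct values of $j \bmod n$ genuinely give distinct linear factors $x - \theta^j(\beta)$ (so that cosets of $C_s$ in $C_n$ correspond bijectively to $\pi$-orbits of the relevant linear polynomials) — this follows because $\beta$ has a normal basis orbit, so $\beta, \theta(\beta), \dots, \theta^{n-1}(\beta)$ are distinct, as already used in Section \ref{HT}. One could alternatively phrase the argument purely in terms of $\pi$-invariance of the ideal $Sg$ and the observation that the right divisors of $g$ of the form $x - \gamma$ are permuted by $\pi$ according to $x-\gamma \mapsto x - \pi(\gamma)$, but the coset computation above is the cleanest route.

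\begin{proof}
Since $g \in R = S^\pi$, the left ideal $Sg$ of $S$ is $\pi$-invariant by Lemma \ref{piinvariant}. For any $0 \leq j \leq n-1$, applying the automorphism $\pi = \theta^\mu$ of $S$ coefficientwise gives $(x - \theta^j(\beta))^\pi = x - \theta^{\mu}(\theta^j(\beta)) = x - \theta^{j+\mu}(\beta)$, where the exponent is taken modulo $n$, and hence $\pi\bigl(S(x-\theta^j(\beta))\bigr) = S(x - \theta^{j+\mu}(\beta))$.

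Now suppose $j \in T$, that is, $x - \theta^j(\beta) \mid_r g$, equivalently $Sg \subseteq S(x - \theta^j(\beta))$. Applying $\pi$ and using $\pi(Sg) = Sg$ we obtain $Sg = \pi(Sg) \subseteq S(x - \theta^{j+\mu}(\beta))$, so $j + \mu \in T$ (indices modulo $n$). Iterating, $j + t\mu \in T$ for every $t$, so the whole coset $j + C_s$ is contained in $T$. Therefore $T$ is a union of cosets of $C_s$ in $C_n$; letting $T^1, \dots, T^k$ be the distinct such cosets that meet $T$ we get $T = T^1 \cup \cdots \cup T^k$ with each $T^j \in C_n/C_s$, and $k \geq 1$ because $T \neq \emptyset$.
\end{proof}
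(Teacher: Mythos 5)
Your proof is correct and follows essentially the same route as the paper: the $\pi$-invariance of $Sg$ from Lemma \ref{piinvariant} together with $(x-\theta^{j}(\beta))^{\pi} = x-\theta^{j+\mu}(\beta)$ gives $Sg \subseteq S(x-\theta^{j+\mu}(\beta))$, hence $j+\mu \in T$, and iterating yields closure under $C_s$-cosets. The additional remarks about distinctness of the linear factors are harmless but not needed, since membership in $T$ is tracked by the index $j$ itself.
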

\begin{proof}
For each $i \in T$, we get from Lemma \ref{piinvariant} and the inclusion $Sg \subseteq S(x - \theta^i(\beta))$ that
\[
Sg = (Sg)^\pi \subseteq S(x - \theta^i(\beta))^\pi = S(x - \theta^{i+\mu}(\beta)),
\]
since \(\pi = \theta^\mu\). Therefore, $i + \mu \in T$ for all $i \in T$, which proves the Lemma. 
\end{proof}

Given any nonempty subset $T \subseteq C_n$, we may consider the polynomial $g_T = \lclm{x-\theta^i(\beta)}^{i\in T } \in S$.  On the other hand, we denote by $\overline{T} \subseteq C_n$ the smallest union of cosets in $C_n/C_s$ such that $T \subseteq \overline{T}$.  

\begin{lemma}\label{TTbarra}
For a nonempty $T \subseteq C_n$, $g_{\overline{T}} = \overline{g_T}$. 
\end{lemma}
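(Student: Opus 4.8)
The plan is to show that $g_{\overline{T}}$ and $\overline{g_T}$ generate the same left ideal of $S$; since both are taken monic (a least common left multiple is monic by definition, and the representative of $\overline{g_T}$ singled out in Proposition~\ref{pseudocalc} is monic), this will give the equality. I would first record two routine facts. For any nonempty $U\subseteq C_n$ one has $Sg_U=\bigcap_{i\in U}S(x-\theta^i(\beta))$, because the least common left multiple of a family of polynomials generates the intersection of the left ideals they generate, and here $C_n$ is read as $\Zset/n\Zset$ since $|\theta|=n$. Secondly, every ring automorphism $\varrho$ of $S$ induced by an element of $\Galois(M/K)$ carries $\lclm{h_1,\dots,h_m}$ to $\lclm{\varrho(h_1),\dots,\varrho(h_m)}$: it maps $\bigcap_jSh_j$ onto $\bigcap_jS\varrho(h_j)$ and sends monic polynomials to monic polynomials. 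Applying this with $\varrho=\pi^j=\theta^{j\mu}$ and using $(x-\theta^i(\beta))^{\pi^j}=x-\theta^{i+j\mu}(\beta)$ yields $g_T^{\pi^j}=g_{T+j\mu}$, where $T+j\mu=\{\,i+j\mu\bmod n\mid i\in T\,\}$.

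Next I would invoke Proposition~\ref{pseudocalc}, which gives
\[
S\overline{g_T}=\bigcap_{j=0}^{s-1}Sg_T^{\pi^j}=\bigcap_{j=0}^{s-1}Sg_{T+j\mu}=\bigcap_{j=0}^{s-1}\ \bigcap_{i\in T}S(x-\theta^{i+j\mu}(\beta))=\bigcap_{k\in U}S(x-\theta^k(\beta)),
\]
where $U=\{\,i+j\mu\bmod n\mid i\in T,\ 0\le j\le s-1\,\}$. The one point that needs an argument is the identification $U=\overline{T}$: for a fixed $i$, the elements $i+j\mu$ with $0\le j\le s-1$ run exactly through the coset $i+C_s$ in $C_n/C_s$ (recall $C_s=\{0,\mu,\dots,(s-1)\mu\}$), so $U=\bigcup_{i\in T}(i+C_s)$, which is precisely the smallest union of $C_s$-cosets containing $T$, i.e. $\overline{T}$. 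Hence $S\overline{g_T}=\bigcap_{k\in\overline{T}}S(x-\theta^k(\beta))=Sg_{\overline{T}}$.

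Finally, from $S\overline{g_T}=Sg_{\overline{T}}$ and the monic normalization of both generators one concludes $g_{\overline{T}}=\overline{g_T}$. As a consistency check, $\overline{T}$ is stable under translation by $\mu$, so $g_{\overline{T}}^{\pi}=g_{\overline{T}+\mu}=g_{\overline{T}}$, whence $g_{\overline{T}}\in R$ by \eqref{Rpi}, in agreement with the defining property $Sg_T\cap R=R\overline{g_T}$. I do not expect a genuine obstacle here: the proof is a short computation, and the only places calling for care are the bookkeeping that turns a least common left multiple of linear factors into an intersection of principal left ideals compatibly with the $\Galois(M/K)$-action, and the elementary combinatorial identity $U=\overline{T}$ in $C_n/C_s$.
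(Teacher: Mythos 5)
Your proof is correct and follows essentially the same route as the paper: both arguments rest on Proposition~\ref{pseudocalc} together with the combinatorial observation that the translates of $T$ by multiples of $\mu$ fill out exactly the cosets forming $\overline{T}$. The only (harmless) difference is organizational — the paper first commutes the closure with the least common left multiple and applies Proposition~\ref{pseudocalc} to each linear factor $x-\theta^i(\beta)$ separately, whereas you apply it once to $g_T$ as a whole and then use the equivariance $g_T^{\pi^j}=g_{T+j\mu}$ to re-index the intersection of left ideals, which is the same computation read in a different order.
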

\begin{proof}
First, observe that, for any $f, g \in S$, we have $\overline{\lclm{f,h}} = \lclm{\overline{f},\overline{h}}$, since $Sf \cap Sg \cap R =  R\overline{f} \cap R\overline{h}$. Therefore, by using Proposition \ref{pseudocalc} in the second equality of the following computation, we get
\[
\begin{split}
\overline{\lclm{x - \theta^i(\beta)}^{i \in T}} &= \lclm{\overline{x - \theta^i(\beta)}}^{i \in T} = \lclm{\lclm{x-\theta^{i+j\mu}}^{0 \leq j \leq s-1}}^{i \in T} \\ 
&= \lclm{\lclm{x-\theta^k(\beta)}^{k \in [i]}}^{i \in T} = \lclm{x - \theta^k(\beta)}^{k \in \overline{T}}
\end{split}
\]
where \([i]\) is the equivalence class of \(i\) in \(C_n/C_s\).
\end{proof}

Now we are ready to construct skew cyclic codes of designed distance. Let $b, \delta, r, t_1, t_2$ be non negative integers such that $(n,t_1) = 1, (n,t_2) < \delta$ and $\delta + r \leq n-1$. Under these conditions, we define 
\[
T_{b, \delta, r, t_1, t_2} = \{b+i t_1+\ell t_2 \mid 0\leq i \leq \delta-2,\, 0\leq \ell \leq r\}.
\]

\begin{theorem}\label{Hdist}
Define 
\[
g = g_{\overline{T_{b,\delta,r,t_1,t_2}}} = \lclm{x-\theta^{i}(\beta)}^{i \in \overline{T_{b,\delta,r,t_1,t_2}}} \in R.
\]
If $\overline{T_{b,\delta,r,t_1,t_2}} \neq C_n$, then the skew cyclic code \(\mathcal{C} = \tovector(\mathcal{R}g) \) over $L$ has minimum distance $\distance(\mathcal C)\geq \delta +r$. The dimension of $\mathcal{C}$ is $(\mu - t)s$, where $t$ is the number of cosets in $C_n/C_s$ needed to build $\overline{T_{b,\delta,r,t_1,t_2}}$. 
\end{theorem}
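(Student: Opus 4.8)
Write $T=T_{b,\delta,r,t_1,t_2}$; it is nonempty since $b\in T$ (indeed $\delta\ge 2$, because $1\le(n,t_2)<\delta$). The first thing to settle is that $g=g_{\overline T}$ is an admissible generator of a skew cyclic code over $L$. By Lemma~\ref{TTbarra}, $g_{\overline T}=\overline{g_T}$, and $\overline{g_T}$ lies in $R$ by the very definition of the bar operation ($Sg_T\cap R=R\overline{g_T}$ with $\overline{g_T}\in R$); normalising it to be monic we take $g=g_{\overline T}\in R$. Also $Sg=\bigcap_{i\in\overline T}S(x-\theta^i(\beta))\supseteq\bigcap_{i\in C_n}S(x-\theta^i(\beta))=S(x^n-1)$ because $\overline T\subseteq C_n$, whence $g\mid_r x^n-1$ in $S$; contracting to $R$ by Lemma~\ref{l31} gives $R(x^n-1)=R\cap S(x^n-1)\subseteq R\cap Sg=Rg$, so $g\mid_r x^n-1$ in $R$ too. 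Granting the degree estimate $\deg g=|\overline T|<n$ proved below, $g$ is a \emph{proper} right divisor of $x^n-1$, so $\mathcal C=\tovector(\mathcal R g)$ is a skew cyclic code over $L$ of dimension $n-\deg g$.

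The distance bound follows by feeding the construction into Theorem~\ref{HTbound}. By definition of $g_{\overline T}$ we have $x-\theta^i(\beta)\mid_r g$ for every $i\in\overline T$, in particular for every $i\in T\subseteq\overline T$; hence $T_{b,\delta,r,t_1,t_2}\subseteq T_\beta(g)$. Since $(n,t_1)=1$ and $(n,t_2)<\delta$, Theorem~\ref{HTbound} applies and yields $\distance(\mathcal C)\ge\delta+r$.

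It remains to compute $\deg g$, which is the only step not reducible to earlier results. I would work in $\mathcal S=S/S(x^n-1)\cong\matrixring{n}{K}$. For $i\in C_n$ let $\mathfrak m_i=S(x-\theta^i(\beta))/S(x^n-1)$, a left ideal with $\dim_K\mathcal S/\mathfrak m_i=n$. The decomposition $x^n-1=\lclm{x-\theta^i(\beta)}^{i\in C_n}$ says $\bigcap_{i\in C_n}\mathfrak m_i=0$, so the natural map $\mathcal S\to\prod_{i\in C_n}\mathcal S/\mathfrak m_i$ is injective between $K$-spaces of the same dimension $n^2$, hence an isomorphism; consequently, for any $E\subseteq C_n$ the projection onto the partial product is surjective and $\dim_K\bigcap_{i\in E}\mathfrak m_i=n(n-|E|)$. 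On the other hand, writing $g_E=\lclm{x-\theta^i(\beta)}^{i\in E}$, its image in $\mathcal S$ generates the left ideal $\bigcap_{i\in E}\mathfrak m_i$, while, $g_E$ being a monic right divisor of $x^n-1$ of degree $d$, this ideal equals $Sg_E/S(x^n-1)$, of $M$-dimension $n-d$, i.e. of $K$-dimension $n(n-d)$. Comparing, $d=|E|$. Taking $E=\overline T$, which is a disjoint union of $t$ cosets of $C_s$ each of size $s$, we get $\deg g=|\overline T|=ts$, and $ts<\mu s=n$ because $\overline T\neq C_n$ (this also closes the properness gap above). Therefore $\dim\mathcal C=n-\deg g=\mu s-ts=(\mu-t)s$.

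I expect the degree computation to be the crux: it amounts to the fact that the least common left multiple of any sub-family of the $n$ linear right divisors occurring in a full factorisation of $x^n-1$ has degree exactly the size of the sub-family. This is precisely where the ``independence'' of the right roots $\theta^i(\beta)$ --- which, via the normal basis and the Circulant Lemma (Lemma~\ref{circulantlemma}), underlies $\mathcal S\cong\matrixring{n}{K}$ --- is genuinely used; without it the dimension formula would degrade to an inequality. The remaining steps are routine once Theorem~\ref{HTbound}, Lemma~\ref{TTbarra} and Lemma~\ref{l31} are in hand.
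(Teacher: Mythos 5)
Your proposal is correct, and its skeleton is the same as the paper's: the distance bound comes from feeding $T_{b,\delta,r,t_1,t_2}\subseteq T_\beta(g)$ into Theorem \ref{HTbound}, and the dimension comes from $\deg g=\lvert\overline{T_{b,\delta,r,t_1,t_2}}\rvert=ts$. The difference is in how much you justify: the paper's proof is two lines and simply \emph{asserts} that $\deg g$ equals the cardinality of $\overline{T_{b,\delta,r,t_1,t_2}}$ (this is implicit in the machinery of Section \ref{HT}: by the Circulant Lemma and the Lam--Leroy theory already invoked there, the right roots $\theta^i(\beta)$ are independent, so the least common left multiple of any subfamily of the linear factors of $x^n-1$ has degree equal to the number of factors), whereas you give a self-contained proof of this fact by a Chinese-remainder-style dimension count in $\mathcal S\cong\matrixring{n}{K}$, using that $\bigcap_{i\in C_n}\mathfrak m_i=0$ forces $\dim_K\bigcap_{i\in E}\mathfrak m_i=n(n-\lvert E\rvert)$ and comparing with $\dim_K\bigl(Sg_E/S(x^n-1)\bigr)=n(n-\deg g_E)$. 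That argument is valid and makes explicit the independence the paper leaves implicit; you also verify, via Lemma \ref{TTbarra} and Lemma \ref{l31}, that $g\in R$, that $g\mid_r x^n-1$ in $R$, and that $\overline{T_{b,\delta,r,t_1,t_2}}\neq C_n$ yields properness --- housekeeping the paper takes for granted in Definition \ref{scc}. So the route is essentially the paper's, with the asserted degree formula honestly proved; the alternative to your dimension count would be to quote the rank statement behind Lemma \ref{circulantlemma} (or Lemma 6 of \cite{GLNPGZ}, as the paper does in the Appendix), which buys brevity at the cost of an external reference.
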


\begin{proof}
Observe that, for $g$ as in the statement, $T_\beta =\overline{T_{b,\delta,r,t_1,t_2}} \supseteq T_{b,\delta,r,t_1,t_2}$.  Thus, the inequality  $\distance(\mathcal C)\geq \delta +r$ is a direct consequence of Theorem \ref{HTbound}. On the other hand, $\deg g$ is the cardinality of $\overline{T_{b,\delta,r,t_1,t_2}}$, that is, $ts$. Thus, the dimension of $\mathcal{C}$ is $n - ts = (\mu - t)s$, as desired. 
\end{proof}

\begin{example}
Let $L=\field[2](b)$ be the field with $2^{5}$ elements, where $b^5 + b^2 + 1 = 0$, and consider the automorphism $\sigma:L\to L$ given by $\sigma(b)=b^2$. It is easy to check that the order of $\sigma$ is 5. Let also $L\subseteq M$ be the field extension with $M = \field[2](a)$ the field with $2^{10}$ elements, where $a^{10} + a^6 + a^5 + a^3 + a^2 + a + 1=0$. The embedding $\epsilon:L\to M$ is defined by $\epsilon(b)=a^8 + a^6 + a^2 = a^{33}$. For brevity, except for 0 and 1, we write the elements of $M$ as powers of $a$. Let now $\theta : M\to M$ be the Frobenius automorphism, so its order is 10. Observe that $\theta \epsilon (b)=\theta(a^{33})=a^{66}=a^9 + a^8 + a^3 + a^2 + a
 =(a^8 + a^6 + a^2)^2=\epsilon(b)^2=\epsilon(b^2)=\epsilon \sigma(b)$, so the restriction of $\theta$ to $L$ is $\sigma$.

Let us fix $\alpha=a^5$, which provides a normal basis of $M$ as an $M^\theta$-vector space. Hence $\beta=a^5$. Now, let us choose $\delta =4$ and $r=1$, the parameters of the Hartmann-Tzeng bound, and  $b=0$, $t_1=3$ and $t_2=2$. Therefore, $T_{0,4,2,3,2} =\{0, 2, 3, 5, 6, 8\}$, and $\overline{T_{0,4,2,3,2}} = \{ 0, 1, 2, 3, 5, 6, 7, 8 \}.$ Thus, by Theorem \ref{Hdist}, the polynomial $g = \lclm{x-\theta^i(\beta)}^{i = 0,1,2,3,5,6,7,8}$ belongs to $R = L[x;\sigma]$ and defines a skew cyclic block code of dimension $2$ and distance at least $5$. Explicitly, 
\[
\begin{split}
g &= y^{8} + \left(b^{3} + b^{2}\right) y^{7} + \left(b^{4} + b^{3} +
b\right) y^{6} + \left(b^{4} + b^{2} + b + 1\right) y^{5} \\ 
&\quad + \left(b^{4}+ b^{3}\right) y^{4} + \left(b^{2} + b\right) y^{3} + \left(b^{2} +
b\right) y^{2}  + \left(b^{4} + b^{3} + b^{2}\right) y + b^{2} + b.
\end{split}
\]
\end{example}

We finish this section including two tables. Table \ref{tab:scbc} provides a list of skew cyclic block codes. As observed in Example \ref{ex:scbc} there is no restriction on the values of \(\mu\) and \(s\), but it should exist an automorphism of \(L = \field[q]\) of order \(\mu\). In each case, \(M = \field[q^s]\). So large lengths require large fields. The legend is as follows: \(L = \field[p](b)\), where \(p = 2,3\), \(M = \field[p](a)\), and \(SB\) denotes the Singleton bound. The generator of each code, computed by Theorem \ref{Hdist} as a least left common multiple, is not written for brevity.

\begin{table}[ht]
\caption{Some skew cyclic block codes}\label{tab:scbc}
\begin{turn}{90}
\begin{tabular}{|ccccccccccccc|}
\hline
$L$ & $\sigma(b)$ &$\mu$ & $ s $ & $n$ &  $\theta(a)$ & $\alpha$ & $k$ & $SB$ & $\delta+r$ & $(t_1,t_2)$ & $T$ & $\overline{T}$ \\  \hline \hline
$\field[8]$ & $b^{ 2 }$ & 3 & 4 & 12 & $a^{ 2
}$ & $a^ 5 $ & 4 & 9 & 3+0 & (5,1) & [0,5]
& [0,2,3,5,6,8,9,11]  \\ \hline
$\field[16]$ & $b^{ 2 }$ & 4 & 2 & 8 & $a^{ 2
}$ & $a^ 5 $ & 2 &7 & 3+1 & (1,3) & [0,1,3,4] & [0,1,3,4,5,7]  \\ \hline
$\field[32]$ & $b^{ 2 }$ & 5 & 2 & 10 & $a^{
2 }$ & $a^{10} $ & 2 &9 &  3+1 & (3,1) & [0,1,3,4] & [0,1,3--6,8,9]  \\ \hline
$\field[64]$ & $b^{ 2 }$ & 6 & 4 & 24 & $a^{
2 }$ & $a^ 9 $ & 8 &17&   4+1 & (1,7) & [0--2,7--9] &
 [0--3,6--9,12--15,18--21] \\ \hline
$\field[64]$ & $b^{ 4 }$ & 3 & 3 & 9 & $a^{ 4
}$ & $a^ 5 $ & 3  & 7&  2+1 & (2,2) & [0,2]
& [0,2,3,5,6,8] \\ \hline
$\field[128]$ & $b^{ 4 }$ & 7 & 2 & 14 & $a^{
512 }$ & $a^{14} $ & 2 & 13 & 3+3 & (3,2) & [0,2--7,9] & 
[0,2--7,9--13]
 \\ \hline
$\field[256]$ & $b^{ 2 }$ & 8 & 2 & 16 & $a^{
2 }$ & $a^{13} $ & 2 & 15 & 4+4 & (1,7) & 
[0--2,5--9,12--15] 
& 
[0--2,4--10,12--15]  \\ \hline
$\field[256]$ & $b^{ 2 }$ & 8 & 2 & 16 & $a^{
2 }$ & $a^{13} $ & 2 &15 & 2+6 & (1,3) & [0,2,3,6,9,12,15] & 
[0--4,6--12,14,15]
 \\ \hline
$\field[256]$ & $b^{ 2 }$ & 8 & 2 & 16 & $a^{
2 }$ & $a^ 5 $ & 8 & 9& 3+1 & (1,3) & [0,1,3,4] & [0,1,3,4,8,9,11,12]  \\ \hline
$\field[1024]$ & $b^{ 2 }$ & 10 & 2 & 20 &
$a^{ 2 }$ & $a^{11} $ & 2 & 19 & 5+5 & (3, 7) &
[0,1,3,4,6--11,13--18] 
 &  [0,1,3--11,13--19]  \\ \hline
$\field[27]$ & $b^{ 3 }$ & 3 & 4 & 12 & $a^{3 }$
 & $a^ 7 $ & 4 & 9 & 2+1 & (5,1) & [0,1]
& [0,1,3,4,6,7,9,10] \\ \hline
$\field[81]$ & $b^{ 3 }$ & 4 & 4 & 16 & $a^{
3 }$ & $a^{10} $ & 4 & 13 & 4+0 & (3,0) & [0,3,6] & [0,2--4,6--8,10--12,14,15] \\ \hline
$\field[125]$ & $b^{ 5 }$ & 3 & 3 & 9 & $a^{
5 }$ & $a^ 8 $ & 3 & 7& 2+1 & (2,5) & [0,5]
& [0,2,3,5,6,8]  \\ \hline
\end{tabular}
\end{turn}
\end{table}

Table \ref{tab:sccc} contains some skew cyclic convolutional codes. As observed in Example \ref{ex:sccc}, \(\mu\) and \(s\) are coprime. The structure of \(\Aut[E]{E(z)}\) is much richer than the group of automorphisms of a finite field, so we can get big lengths with relatively small base finite fields. Once again the generators are not written for brevity.

\begin{table}[ht]
\caption{Some skew cyclic convolutional codes}\label{tab:sccc}
\begin{turn}{90}
\begin{tabular}{|ccccccccccccccc|}
\hline
$E$ & $F$ & $\pi(a)$ & $\epsilon(b)$ & $\sigma(z)$ & $\mu$ & $s$  & $n$ & $\alpha$ & $k$ &  $SB$  & $\delta +r$ & $(t_1,t_2)$ & $T$ & $\overline{T}$ \\  
\hline  \hline
$\field[2]$ & $\field[16]$ & $a^{ 2 }$ & $a^{ 0 }$ & $\frac{1}{z + 1}$ & 3 & 4 & 12 & $a^{3} t$ & 4 & 9 & 2 +1 & (1,5) & [0,5] & [0,2,3,5,6,8,9,11] \\ \hline
$\field[4]$ & $\field[16]$ & $a^{ 4 }$ & $a^{ 5 }$ & $\frac{1}{z + b}$ & 5 & 2 & 10 & $\frac{a}{t}$ & 2 & 9 & 3 + 2 & (1,4) & [0,1,4,5,8,9] & [0,1,3,4,5,6,8,9]  \\ \hline
$\field[4]$ & $\field[64]$ & $a^{ 4 }$ & $a^{ 21 }$ & $\frac{1}{z + b^2}$ & 5 & 3 & 15 & $a^{5} t$ & 3 & 13 & 4 + 1 & (1,6) & [0,1,2,6,7,8] & [0--3,5--8,10--13]
\\ \hline
$\field[16]$ & $\field[256]$ & $a^{ 16 }$ & $a^{ 17 }$ & $\frac{b^{2}}{b^8z + b^{12}} $ & 15 & 2 & 30 & $a t$ & 2 & 29 & 2 + 13 & (1,7) & 
$\begin{array}{c} [0,1,3,5,7,10,\\ 12,14,17,19, \\21,24,26,28]\end{array}$ &
 [0--7,9--22,24--29] \\ \hline
$\field[32]$ & $\field[1024]$ & $a^{ 32 }$ & $a^{ 33 }$ & {\scriptsize $b^{4} z + b^{30}$ }& 31 & 2 & 62 & $\frac{1}{a t}$ & 40 & 23 & 12 + 0 & (1,5) & [0--10] & [0--10, 31--41] \\ \hline
$\field[3]$ & $\field[9]$ & $a^{ 3 }$ & $a^{ 0 }$ & $\frac{2 z}{z + 2}$ & 3 & 2 & 6 & $a t^{2}$ & 2 & 5 &  2 + 1 & (1,5) & [0,5] & [0,2,3,5] \\ \hline
$\field[9]$ & $\field[729]$ & $a^{ 9 }$ & $a^{ 91 }$ & $\frac{b z + b + 2}{\left(b + 2\right) z + b}$ & 4 & 3 & 12 & $a t$ & 3 & 10 &  3 + 1 & (1,5) & [0,1,5,6] & [0--2,4--6,8--10] \\
\hline
\end{tabular}
\end{turn}
\end{table}

\begin{example} Let us consider the parameters of the first item in Table \ref{tab:sccc}. Therefore, we are dealing with the base field $\mathbb{F}_2(z)$, and the automorphism $\sigma$ defined by $\sigma(z)=1/(1+z)$, whose order is $\mu=3$. We also consider the extension of degree 4 $(\mathbb{F}_{16}(t),\theta)$, where $\theta(a)=a^2$ and $\theta(t)=t/(1+t)$.

Let us fix $\alpha=a^3t$, which provides a normal basis of $\mathbb{F}_{16}(t)$ as an $(\mathbb{F}_{16}(t))^\theta$-vector space. Therefore $\beta=(a + 1)/(at^2 + at)
$. Now, we choose $\delta =2$ and $r=1$, the parameters of the Hartmann-Tzeng bound, and  $b=0$, $t_1=1$ and $t_2=5$. Therefore, $T_{0,2,1,1,5} =\{0, 5\}$, and $\overline{T_{0,2,1,1,5}} = \{ 0, 2, 3, 5, 6, 8, 9, 11 \}.$ Thus, by Theorem \ref{Hdist}, the polynomial $g = \lclm{x-\theta^i(\beta)}^{i = 0, 2, 3, 5, 6, 8, 9, 11}$ defines a skew $\sigma$-cyclic convolutional code over $\mathbb{F}_2(z)$ of length 12, dimension $4$ and distance at least $3$. Explicitly, 
\[
g = x^8+\frac{1}{z^3+z}x^4+\frac{z^2}{z^3+z^2+z+1},
\]
viewed in $\mathbb{F}_2(z)[x;\sigma]/\langle x^n-1\rangle$.

\end{example}

\section{Decoding skew BCH codes}\label{sec:decoding}

Let \(\delta \leq n-1\) and \(t \leq n-1\) be such that \((t,n) = 1\). We denote \(T_{\delta,t} = T_{0,\delta,0,t,1} = \{0, t, \dots, (\delta-2)t\}\), and \(T_\delta = T_{\delta,1} = \{0, 1, \dots, \delta-2\}\). 

In the finite fields case the notion of skew BCH codes was introduced in \cite{Boucher/etal:2007} first for $q=2$, and then generalized in \cite{Chaussade/etal:2009} and \cite{Boucher/Ulmer:2014} for every $q$. The following definition extends it to the general framework of finite extension fields with cyclic Galois group.

\begin{definition}
Let \(f = g_{\overline{T_{\delta,t}}}\). The skew cyclic code \(\mathcal{C} = \tovector(\mathcal{R}f) \subseteq L^n\) is called a \emph{skew BCH code of designed distance \(\delta\)}. As a direct consequence of Theorem \ref{Hdist} or Corollary \ref{BCHbound}, \(\distance(\mathcal{C}) \geq \delta\).
\end{definition}

The milestone of this section is to provide a decoding algorithms for skew BCH codes, and it is achieved in two steps. First we are going to embed our code in an \(M\)-linear MDS code \(\mathcal{D}\) in such a way that the decoding algorithm with respect to \(\mathcal{D}\) provides a decoding algorithm with respect to \(\mathcal{C}\). Then a permutation equivalent code is computed. This new code fits in the family of codes which can be decoded using one of the nearest neighbor decoding algorithms appearing in \cite{GLNPGZ,gln2017sugiyama,Boucher/etal:2007}. Recall that a nearest decoding algorithm finds the closest codeword to the received vector, see e.g. \cite[\S 1.11.2]{Huffman/Pless:2010}.

We fix \(\delta\) and \(t\) with \((t,n) = 1\). Let \(f = g_{\overline{T_{\delta,t}}} \in R\) and \(g = g_{T_{\delta,t}} \in S\). Let \(\mathcal{C} = \tovector(\mathcal{R} f) \subseteq L^n\) and \(\mathcal{D} = \tovector(\mathcal{S}g) \subseteq M^n\). Since \(g \mid_r f\) it follows that \(\mathcal{C} \subseteq \mathcal{D}\), in fact \(\mathcal{C}\) is a subfield subcode of \(\tovector(\mathcal{S}f)\), which is a subcode of \(\mathcal{D}\). We know \(\delta \leq \distance(\mathcal{C})\) by Theorem \ref{Hdist}. The distance of \(\mathcal{D}\) can also be computed. 

\begin{lemma}\label{Ddistance}
The code \(\mathcal{D}\) is an \(M\)-linear code of length \(n\), dimension \(n-\delta+1\) and distance \(\delta\), hence an MDS code. 
\end{lemma}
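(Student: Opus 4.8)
The plan is to identify $\mathcal{D}$ with a skew Reed–Solomon code (equivalently, a GSRS code) and read off its parameters from the factorization of $x^n-1$ in $S = M[x;\theta]$. First I would recall that, as observed right after Lemma~\ref{l31}, $\mathcal{S} = S/S(x^n-1) \cong \matrixring{n}{K}$, and that $x^n-1 = \lclm{x-\beta, x-\theta(\beta), \dots, x-\theta^{n-1}(\beta)}$, where $\{\beta,\theta(\beta),\dots,\theta^{n-1}(\beta)\}$ arises from a normal basis element $\alpha$ of $M/K$ via $\beta = \alpha^{-1}\theta(\alpha)$. Since $g = g_{T_{\delta,t}} = \lclm{x-\theta^i(\beta)}^{i \in T_{\delta,t}}$ and $T_{\delta,t} = \{0,t,\dots,(\delta-2)t\}$ has exactly $\delta-1$ elements with $(t,n)=1$, the polynomial $g$ is the least common left multiple of $\delta-1$ of the $n$ pairwise-coprime linear factors of $x^n-1$, hence $\deg g = \delta - 1$. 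The dimension of $\mathcal{D} = \tovector(\mathcal{S}g)$ is therefore $n - \deg g = n - \delta + 1$.

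Next I would establish the distance bound $\distance(\mathcal{D}) \geq \delta$. This is exactly the content of Corollary~\ref{BCHbound} (the BCH bound), applied over the field $M$ rather than $L$: indeed $T_\beta(g) \supseteq T_{\delta,t} = \{0, t, 2t, \dots, (\delta-2)t\}$ with $(t,n)=1$, so every nonzero codeword of $\mathcal{S}g$ has weight at least $\delta$. Concretely, the argument in the proof of Theorem~\ref{HTbound} with $r = 0$, $t_1 = t$, $t_2 = 1$ shows that a codeword of weight $\le \delta - 1$ lies in the left kernel of a $(\delta-1)\times(\delta-1)$ matrix which, by the Circulant Lemma (Lemma~\ref{circulantlemma}) applied to $\theta^{t}$ — legitimate since $(t,n)=1$ forces $M^{\theta^t} = M^\theta = K$ — is invertible; hence the codeword is zero. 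This gives $\distance(\mathcal{D}) \geq \delta$.

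Finally, the Singleton bound gives $\distance(\mathcal{D}) \leq n - \dim \mathcal{D} + 1 = n - (n-\delta+1) + 1 = \delta$, so $\distance(\mathcal{D}) = \delta$ and $\mathcal{D}$ is MDS. The $M$-linearity and length are immediate from $\mathcal{D} = \tovector(\mathcal{S}g) \subseteq M^n$ and the fact that $\tovector$ is an $M$-linear isomorphism $\mathcal{S} \to M^n$ (the ambient of $\mathcal{S}$). I do not anticipate a genuine obstacle here: the only point that needs a little care is making sure the Circulant Lemma is invoked with the correct automorphism $\theta^{t}$ and the correct fixed field, and that the indexing of the exponents in $T_{\delta,t}$ matches the hypotheses of Corollary~\ref{BCHbound} (which it does, since $T_{\delta,t} = T_{0,\delta,0,t,1}$ by definition). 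One could alternatively deduce everything at once by citing that skew RS codes are MDS (Boucher–Ulmer), but the self-contained route through Corollary~\ref{BCHbound} and Singleton is cleaner in context.
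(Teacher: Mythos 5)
Your proof is correct and is essentially the paper's own argument: the paper proves the lemma by applying Theorem \ref{Hdist} in the case $s=1$ (i.e.\ taking $M=L$ and $\theta=\sigma$, so that $\overline{T_{\delta,t}}=T_{\delta,t}$), which is exactly your application of the BCH/Hartmann--Tzeng bound over $M$ together with the degree count $\deg g=\delta-1$, and then concludes with the Singleton bound. The only imprecision is your justification of $\deg g=\delta-1$: pairwise coprimality of linear skew polynomials does not in general make the degree of a left lcm additive; the correct reason (implicitly used by the paper in Theorem \ref{Hdist} as well) is that $x^n-1=\lclm{x-\theta^i(\beta)}^{0\le i\le n-1}$ already has degree $n$, so by the subadditivity of degrees of left lcm's the lcm of any subset of these $n$ factors must have degree exactly the cardinality of that subset, in particular $\deg g_{T_{\delta,t}}=\delta-1$.
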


\begin{proof}
We apply Theorem \ref{Hdist} with \(s = 1\), i.e. \(M = L\) and \(\theta = \sigma\). Since, in this situation, \(\overline{T_{\delta,t}} = T_{\delta,t}\), we have that the dimension of \(\mathcal{D}\) is \(n-(\delta-1)\) and \(\distance(\mathcal{D}) \geq \delta\). The equality of the distance follows from the Singleton bound. 
\end{proof}

\begin{proposition}\label{dectoMDS}
Assume a codeword \(c \in \mathcal{C}\) is transmitted and \(v = c + e \in L^n\) is received with \(\weight(e) \leq \left\lfloor \frac{\delta - 1}{2} \right\rfloor\). Any nearest neighbor decoding algorithm on \(\mathcal{D}\) correctly computes \(c\) and \(e\). 
\end{proposition}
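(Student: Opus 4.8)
The plan is to reduce the statement to the standard fact that a code of minimum distance $\delta$ corrects, uniquely, any error pattern of weight at most $\lfloor \frac{\delta-1}{2}\rfloor$, and then to apply this to the ambient code $\mathcal{D}$ rather than to $\mathcal{C}$ directly. First I would record the relevant inclusions and parameters. Since $g \mid_r f$, we have $\mathcal{C} \subseteq \mathcal{D}$, so the transmitted word $c$ is in particular a codeword of $\mathcal{D}$; and since $v = c+e \in L^n \subseteq M^n$, the received word is a legitimate input to a nearest neighbor decoding algorithm for the $M$-linear code $\mathcal{D}$. By Lemma \ref{Ddistance} we know $\distance(\mathcal{D}) = \delta$.

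Next I would prove the uniqueness of the nearest codeword. Put $\tau = \left\lfloor \frac{\delta-1}{2}\right\rfloor$, so that $\distance(v,c) = \weight(e) \leq \tau$. For any $d \in \mathcal{D}$ with $d \neq c$, the triangle inequality for the Hamming metric together with $\distance(c,d) \geq \distance(\mathcal{D}) = \delta$ gives
\[
\distance(v,d) \geq \distance(c,d) - \distance(v,c) \geq \delta - \tau.
\]
A quick check on the parity of $\delta$ shows $\delta - \tau \geq \tau + 1 > \tau$ (with equality $\delta - \tau = \tau+1$ precisely when $\delta$ is odd). Hence $\distance(v,d) > \distance(v,c)$, so $c$ is the \emph{unique} codeword of $\mathcal{D}$ at minimal Hamming distance from $v$. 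Any nearest neighbor decoding algorithm on $\mathcal{D}$ therefore returns exactly $c$, and consequently recovers $e = v - c$.

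There is no genuine obstacle here: the argument uses only the inclusion $\mathcal{C}\subseteq\mathcal{D}$, the exact value $\distance(\mathcal{D})=\delta$ furnished by Lemma \ref{Ddistance}, and the triangle inequality. The one point worth making explicit is that the decoding is performed over the larger field $M$, while $v$ and $c$ lie in $L^n$; this causes no difficulty, because the uniqueness above holds among \emph{all} codewords of $\mathcal{D}$, and the output error vector $e = v - c$ then automatically lies in $L^n$ and satisfies the prescribed weight bound. This is exactly the feature that makes the embedding $\mathcal{C}\hookrightarrow\mathcal{D}$ worthwhile: a bounded-distance decoder for the MDS code $\mathcal{D}$ transfers verbatim to a decoder for $\mathcal{C}$ up to radius $\lfloor\frac{\delta-1}{2}\rfloor$.
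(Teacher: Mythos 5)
Your proof is correct and follows essentially the same route as the paper: both rely on the inclusion $\mathcal{C}\subseteq\mathcal{D}$ and on Lemma \ref{Ddistance} giving $\distance(\mathcal{D})=\delta$, so that $c$ is the unique codeword of $\mathcal{D}$ within Hamming distance $\left\lfloor \frac{\delta-1}{2}\right\rfloor$ of $v$. You merely spell out the standard triangle-inequality uniqueness argument that the paper leaves implicit, which is fine.
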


\begin{proof}
It follows from Lemma \ref{Ddistance} that \(c\) is the only vector in \(\mathcal{D}\) such that \(\distance(v,c) \leq \left\lfloor \frac{\delta - 1}{2} \right\rfloor\), hence any nearest neighbor decoding algorithm on \(\mathcal{D}\) must compute \(c\) and \(e\). 
\end{proof}

\begin{remark}\label{decodeskewRS}
Skew Reed-Solomon codes (skew RS codes, for short) are introduced in \cite{gln2017sugiyama} (in the convolutional case) and \cite{GLNPGZ} (for general fields), inspired by the classical block Reed-Solomon codes. They are skew cyclic codes generated by \(T_\delta\) when \(s=1\). There are fast nearest neighbor decoding algorithms for skew RS codes. Such algorithms can be found in \cite{Boucher/etal:2007,GLNPGZ} for block codes and \cite{gln2017sugiyama,GLNPGZ} for convolutional codes. In the non commutative setting over finite fields, the term Generalized skew Reed-Solomon (GSRS) has been used in \cite[Definition 9]{Liu/etal:2015}. These codes are presented as evaluation codes, and skew RS block codes are of this kind (see Remark \ref{RSGSRS}).  Always over finite fields, this kind of evaluation codes are introduced in \cite[Definition 7]{Boucher/Ulmer:2014}. In the latter reference, skew RS block codes are proved to be MDS codes in \cite[Theorem 5]{Boucher/Ulmer:2014}, so that Lemma \ref{Ddistance} could be seen as an extension of this result to skew RS convolutional codes, and even to more general situations. In the convolutional case the possible connection between non commutative evaluation codes and skew RS codes would require a deeper understanding of the dual of skew RS codes.
\end{remark}

The code \(\mathcal{D}\) does not fit in the definition of skew RS code since \(t\) is not necessarily equal to \(1\), hence the decoding algorithms in Remark \ref{decodeskewRS} do not apply to \(\mathcal{D}\). Our next step consists in building a new skew RS code which is permutation equivalent to \(\mathcal{D}\). Then the decoding with respect to this new code can be translated to \(\mathcal{D}\), and hence to \(\mathcal{C}\). 

When \(M = \field[q^n]\) and \(K = \field[q]\), this last step can be avoided. By \eqref{polyevaluation} and \eqref{normproperties}, see also Lemma \ref{rootroot}, a parity check matrix of \(\mathcal{D}\) is the transpose of 
\[
\left( \begin{matrix}
\alpha & \theta^t(\alpha) & \cdots & \theta^{t(\delta-2)}(\alpha) \\
\theta(\alpha) & \theta(\theta^t(\alpha)) & \cdots & \theta(\theta^{t(\delta-2)}(\alpha)) \\
\vdots & \vdots & \ddots & \vdots \\
\theta^{n-1}(\alpha) & \theta^{n-1}(\theta^t(\alpha)) & \cdots & \theta^{n-1}(\theta^{t(\delta-2)}(\alpha))
\end{matrix} \right).
\]
Hence \(\mathcal{D}\) is a generalized Gabidulin code as defined in \cite[\S IV.A]{Kshevetskiy/Gabidulin:2005} taking \(h_i = \theta^{i-1}(\alpha)\). Therefore, the decoding algorithm proposed in \cite[\S VII]{Kshevetskiy/Gabidulin:2005} can be directly applied. Skew evaluation codes, as defined in \cite{Liu/etal:2015}, and Gabidulin codes in arbitrary characteristic, as introduced in \cite{Augot/etal:2013} are also studied when \(t=1\), hence in order to apply the Skew Berlekamp-Welch Algorithm \cite[Algorithm 1]{Liu/etal:2015} for finite fields, or the Unique Decoding \cite[\S V.C]{Augot/etal:2013} for fields of arbitrary characteristic, %in those situations 
our proposed next step have to be followed. 

Since \((t,n) = 1\), there exists \(u \in C_n\) such that \(ut = 1 + vn\). Let \(\rho : C_n \to C_n\) defined by \(\rho(i) = it\). This map is a bijection whose inverse is \(\rho^{-1}(j) = ju\). We also denote by \(\rho : M^n \to M^n\) the linear map defined by \(\rho(a_0, \dots, a_{n-1}) = (a_{\rho(0)}, \dots, a_{\rho(n-1)})\). Let \(\{\epsilon_0, \dots, \epsilon_{n-1}\}\) be the canonical basis of \(M^n\). Then \(\rho(\epsilon_j) = \epsilon_h\), where \(\rho(h) = j\), i.e.
\begin{equation}\label{rhobasis}
\rho(\epsilon_j) = \epsilon_{\rho^{-1}(j)} = \epsilon_{ju}.
\end{equation}
Since \(\rho : M^n \to M^n\) is a permutation of the positions, \(\mathcal{D}\) and \(\rho(\mathcal{D})\) are permutation equivalent codes, hence they share the same parameters. 

\begin{remark}\label{decPEcodes}
Assume a codeword \(c \in \mathcal{D}\) is transmitted and \(v = c + e \in M^n\) is received with \(\weight(e) \leq \left\lfloor \frac{\delta - 1}{2} \right\rfloor\). Since \(\mathcal{D}\) and \(\rho(\mathcal{D})\) are permutation equivalent, \(\rho(c)\) is the only vector in \(\rho(\mathcal{D})\) such that \(\distance(\rho(v),\rho(c)) \leq \left\lfloor \frac{\delta - 1}{2} \right\rfloor\). Any nearest neighbor decoding algorithm on \(\rho(\mathcal{D})\) applied to \(\rho(v)\) correctly computes \(\rho(c)\) and \(\rho(e)\).  
\end{remark}

We want to prove that there is a skew RS structure on \(\rho(\mathcal{D})\). In order to do so, we need to recognize in \(\rho\) the arithmetical structure on \(M^n\) provided by the automorphism \(\tovector: \mathcal{S} \to M^n\). Recall that \(\mathcal{S} = S / S(x^n-1)\), where \(S = M[x;\theta]\). We adopt the notation \(S' = M[y;\theta^t]\) and \(\mathcal{S}' = S'/S'(y^n-1)\). Since \((t,n) = 1\), \(|\theta^t| = n\), \(M^{\theta^t} = M^\theta = K\) and \(\{\alpha, \theta^t(\alpha), \dots, \theta^{(n-1)t}(\alpha)\}\) is a normal basis of \(M/K\). 

\begin{lemma}\label{algebramap}
The map \(\varphi : \mathcal{S} \to \mathcal{S}'\) defined by \(\varphi(a x^i) = a y^{iu}\), is an isomorphism of $K$--algebras such that \(\rho \tovector = \tovector \varphi\). 
\end{lemma}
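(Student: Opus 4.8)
The plan is to verify directly that $\varphi$ is a well-defined $K$--algebra isomorphism, and then to check the compatibility identity $\rho\tovector = \tovector\varphi$ by evaluating both sides on the standard $M$--basis $\{x^0, x^1, \dots, x^{n-1}\}$ of $\mathcal{S}$. First I would observe that the assignment $ax^i \mapsto ay^{iu}$ makes sense at the level of polynomials of degree $<n$ precisely because $\rho^{-1}(i) = iu$ is a well-defined bijection of $C_n = \{0,\dots,n-1\}$; so $\varphi$ is an $M$--linear bijection of the underlying $M$--vector spaces. It is $K$--linear a fortiori. To see that $\varphi$ is multiplicative it suffices to check it on the $M$--generators: for $a,b \in M$ one computes in $\mathcal{S}$ that $(ax^i)(bx^j) = a\theta^i(b)x^{i+j}$ (indices mod $n$), while in $\mathcal{S}'$ we have $(ay^{iu})(by^{ju}) = a(\theta^t)^{iu}(b)\, y^{(i+j)u} = a\theta^{tiu}(b)\, y^{(i+j)u}$. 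Since $ut \equiv 1 \pmod n$ and $\theta$ has order $n$, we get $\theta^{tiu} = \theta^{i}$, and since $(i+j)u \equiv \rho^{-1}(i+j) \pmod n$ the two outputs agree after applying $\varphi$. Hence $\varphi$ respects multiplication of generators, and therefore of all elements, and clearly $\varphi(1) = 1$; being a bijective ring homomorphism fixing $K$, it is a $K$--algebra isomorphism. (Here I am implicitly using that $y^n-1$ is central in $S'$, which holds because $\mu \mid n$ and hence $\theta^t$ has order dividing $n$ as well; this is what makes $\mathcal{S}'$ a well-defined $K$--algebra, exactly as for $\mathcal{S}$ in Section \ref{algsetup}.)

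For the identity $\rho\tovector = \tovector\varphi$, I would test it on the basis element $x^j \in \mathcal{S}$. On one hand, $\tovector(x^j) = \epsilon_j$, so $\rho\tovector(x^j) = \rho(\epsilon_j) = \epsilon_{ju}$ by \eqref{rhobasis}. On the other hand, $\varphi(x^j) = y^{ju}$ (with $ju$ reduced mod $n$), and by definition of the coordinate map on $\mathcal{S}'$ we have $\tovector(y^{ju}) = \epsilon_{ju}$. The two agree, and since both $\rho\tovector$ and $\tovector\varphi$ are $M$--linear maps $\mathcal{S} \to M^n$ agreeing on an $M$--basis, they coincide.

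The only point requiring a little care — though it is not really an obstacle — is the bookkeeping of exponents modulo $n$: one must be consistent in reducing $iu$, $tiu$, and $(i+j)u$ modulo $n$ and in invoking $\theta^n = \mathrm{id}$, $(\theta^t)^n = \mathrm{id}$ to pass between the two twisted multiplications. Once the convention is fixed (identify $\mathcal{S}$ and $\mathcal{S}'$ with polynomials of degree $<n$ and reduce all exponents mod $n$), the verification is the routine computation sketched above. In particular no structural input beyond $(t,n)=1$, $ut \equiv 1 \pmod n$, and the centrality of $x^n-1$ and $y^n-1$ is needed.
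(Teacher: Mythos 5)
Your proof is correct and takes essentially the same approach as the paper: both rest on the computation \((\theta^t)^u=\theta\) (i.e.\ \(ut\equiv 1\pmod n\)) for multiplicativity and on checking \(\rho\tovector=\tovector\varphi\) on the monomial basis via \eqref{rhobasis}, the only mechanical difference being that the paper lifts to \(\tilde\varphi:S\to S'\) and descends after noting \(\tilde\varphi(x^n-1)\in S'(y^n-1)\), while you work directly in the quotient with exponents reduced mod \(n\) and obtain bijectivity from the permutation \(i\mapsto iu\) of \(C_n\). One small correction to a parenthetical: the centrality of \(y^n-1\) in \(S'\) follows from \(|\theta|=n\), which gives \((\theta^t)^n=\mathrm{id}\) (in fact \(|\theta^t|=n\) since \((t,n)=1\), as noted just before the lemma), not from \(\mu\mid n\).
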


\begin{proof}
Let \(\tilde{\varphi} : S \to S'\) be the map defined by \(\tilde{\varphi}(ax^i) = a y^{iu}\). Since 
\[
\tilde{\varphi}(xa) = \tilde{\varphi}(\theta(a)x) = \theta(a) y^u = \theta^{ut-vn}(a)y^u = (\theta^t)^u(a) y^u = y^u a = \tilde{\varphi}(x) \tilde{\varphi}(a), 
\]
it follows that \(\tilde{\varphi}\) is a morphism of algebras. Since 
\(
\tilde{\varphi}(x^n-1) = y^{un} - 1 \in S'(y^n-1),
\) 
we conclude that \(\varphi\) is a well defined morphism of algebras. It is clear that \(\varphi\) is \(M\)-linear and, hence, $K$--linear. In order to check that \(\varphi\) is isomorphism, it is sufficient to see that it is surjective. Since, for all \(i \in C_n,\)
\[
\varphi(x^{it}) = y^{itu} = y^{i(1 + vn)} = y^i y^{ivn} = y^i,
\]
it follows that \(\varphi\) is surjective. Finally
\[
\tovector (\varphi(x^j)) = \tovector(y^{ju}) = \epsilon_{ju} = \rho(\epsilon_j) = \rho(\tovector(x^j))
\]
by \eqref{rhobasis}, so \(\rho \tovector = \tovector \varphi\). 
\end{proof}

The following proposition is a direct consequence of Lemma \ref{algebramap}.

\begin{proposition}\label{rhoDisSCC}
With notation as above, \(\rho(\mathcal{D}) = \tovector(\mathcal{S}'\varphi(g))\). 
\end{proposition}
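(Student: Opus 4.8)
The plan is to unwind the definitions and apply Lemma \ref{algebramap} directly. Recall that $\mathcal{D} = \tovector(\mathcal{S}g)$ where $g = g_{T_{\delta,t}} \in S$, and we want to identify $\rho(\mathcal{D})$ as a skew cyclic code inside $\mathcal{S}'$, namely $\tovector(\mathcal{S}'\varphi(g))$.

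First I would apply $\rho$ to the defining equality $\mathcal{D} = \tovector(\mathcal{S}g)$. By Lemma \ref{algebramap} we have $\rho \tovector = \tovector \varphi$ as maps $\mathcal{S} \to M^n$, so $\rho(\mathcal{D}) = \rho(\tovector(\mathcal{S}g)) = \tovector(\varphi(\mathcal{S}g))$. Next, since $\varphi : \mathcal{S} \to \mathcal{S}'$ is an isomorphism of $K$-algebras, it is in particular a ring homomorphism, hence $\varphi(\mathcal{S}g) = \varphi(\mathcal{S})\varphi(g) = \mathcal{S}'\varphi(g)$, where the last equality uses surjectivity of $\varphi$. Combining these two steps yields $\rho(\mathcal{D}) = \tovector(\mathcal{S}'\varphi(g))$, which is exactly the claim.

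Since all the real content — the algebra isomorphism, the intertwining relation $\rho\tovector = \tovector\varphi$, and surjectivity — has already been established in Lemma \ref{algebramap}, there is essentially no obstacle here; the proposition is a formal bookkeeping consequence. The only mild point worth noting explicitly is that $\mathcal{S}g$ denotes the left ideal generated by $g$ in $\mathcal{S}$, and its image under a surjective ring homomorphism is the left ideal generated by the image, so $\varphi(\mathcal{S}g) = \mathcal{S}'\varphi(g)$ as a left ideal of $\mathcal{S}'$; thus $\rho(\mathcal{D})$ is genuinely a skew cyclic code over $M$ with respect to the twisted ring $S' = M[y;\theta^t]$, generated by $\varphi(g)$. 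This is precisely the structure needed so that, together with Remark \ref{decPEcodes}, the nearest neighbor decoding algorithms for skew RS codes cited in Remark \ref{decodeskewRS} can be invoked on $\rho(\mathcal{D})$ (after verifying $\varphi(g)$ is the generator associated to the set $T_\delta$ with respect to $\theta^t$, which follows from tracking the right roots through $\varphi$).
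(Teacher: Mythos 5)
Your proof is correct and matches the paper's reasoning: the paper states the proposition as a direct consequence of Lemma \ref{algebramap}, and your argument (apply $\rho\tovector=\tovector\varphi$ and use that the isomorphism $\varphi$ carries the left ideal $\mathcal{S}g$ onto $\mathcal{S}'\varphi(g)$) is exactly that consequence spelled out.
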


The generator \(\varphi(g)\) is not necessarily a right divisor of \(y^n-1\), but \(\mathcal{S}'\varphi(g) = \mathcal{S}'g'\), where \(g' = \gcrd{\varphi(g),y^n-1}\), hence \(\rho(\mathcal{D})\) is a skew cyclic code.

\begin{proposition}\label{dectoskewRS}
Let \(\beta' = \alpha^{-1} \theta^t(\alpha)\). Let \(g = g_{T_{\delta,t}} = \lclm{x-\theta^j(\beta)}^{j \in T_{\delta,t}} \in \mathcal{S}\) and \(g' = \gcrd{\varphi(g),y^n-1} \in \mathcal{S}'\). Then \(g' = g_{T_{\delta}} = \lclm{y-\theta^{it}(\beta')}^{i \in T_\delta}\). In particular \(\rho(\mathcal{D}) = \tovector(\mathcal{S}' g')\) is a skew RS code. 
\end{proposition}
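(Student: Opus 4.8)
The plan is to push the root information of $g$ from $\mathcal S$ into $\mathcal S'$ through the isomorphism $\varphi$ of Lemma \ref{algebramap}, using the explicit right‑division formulas \eqref{polyevaluation}--\eqref{normproperties}, and then to pin down $g'$ by a degree count. First I would unwind the definitions: since $T_{\delta,t}=\{0,t,2t,\dots,(\delta-2)t\}$, the polynomial $g=g_{T_{\delta,t}}=\lclm{x-\theta^{jt}(\beta)}^{0\le j\le\delta-2}$ is monic, and writing $g=\sum_i g_ix^i$, each right‑divisibility $x-\theta^{jt}(\beta)\mid_r g$ gives, via \eqref{polyevaluation} and \eqref{normproperties} (with $\beta=\alpha^{-1}\theta(\alpha)$), the relation $0=\sum_i g_i\norm{i}{\theta^{jt}(\beta)}=\theta^{jt}(\alpha)^{-1}\sum_i g_i\theta^{jt+i}(\alpha)$, that is, $\sum_i g_i\theta^{jt+i}(\alpha)=0$ for every $0\le j\le\delta-2$. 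This is the only fact about $g$ I will need.

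Next I would compute $\varphi(g)$ and evaluate it at the conjectured roots. By Lemma \ref{algebramap}, $\varphi(g)=\sum_i g_iy^{iu}$ in $S'=M[y;\theta^t]$, where $ut=1+vn$. The automorphism $\theta^t$ has order $n$ and fixed field $K$, and $\{\alpha,\theta^t(\alpha),\dots,\theta^{(n-1)t}(\alpha)\}$ is a normal basis of $M/K$; running the Circulant Lemma argument preceding Theorem \ref{HTbound} with $\theta^t$ in place of $\theta$ and $\beta'=\alpha^{-1}\theta^t(\alpha)$ gives $y^n-1=\lclm{y-\theta^{it}(\beta')}^{0\le i\le n-1}$ in $S'$, so every $\theta^{it}(\beta')$ is a right root of $y^n-1$. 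Fix $0\le j\le\delta-2$ and let $N'_k(\gamma)=\gamma\,\theta^t(\gamma)\cdots\theta^{(k-1)t}(\gamma)$ denote the $k$th norm with respect to $\theta^t$. Formula \eqref{polyevaluation}, which holds for polynomials of arbitrary degree (so there is no need to reduce $\varphi(g)$ modulo $y^n-1$ first), shows that the remainder of the right division of $\varphi(g)$ by $y-\theta^{jt}(\beta')$ is $\sum_i g_iN'_{iu}(\theta^{jt}(\beta'))$. The key arithmetic step is $t(j+iu)=jt+i(ut)=jt+i+ivn\equiv jt+i\pmod n$; combined with $\theta^n=\mathrm{id}$ and the $\theta^t$‑analogue of \eqref{normproperties}, it yields $N'_{iu}(\theta^{jt}(\beta'))=\theta^{jt}(\alpha)^{-1}\theta^{t(j+iu)}(\alpha)=\theta^{jt}(\alpha)^{-1}\theta^{jt+i}(\alpha)$, so the remainder equals $\theta^{jt}(\alpha)^{-1}\sum_i g_i\theta^{jt+i}(\alpha)=0$ by the first paragraph. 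Hence $y-\theta^{jt}(\beta')\mid_r\varphi(g)$; since $y-\theta^{jt}(\beta')$ also right‑divides $y^n-1$, it right‑divides $g'=\gcrd{\varphi(g),y^n-1}$, and taking the least common left multiple over $0\le j\le\delta-2$ gives $g_{T_\delta}=\lclm{y-\theta^{it}(\beta')}^{i\in T_\delta}\mid_r g'$.

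Finally I would close the gap with dimensions. The map $\varphi$ is $M$‑linear and carries $\mathcal Sg$ onto $\mathcal S'\varphi(g)=\mathcal S'g'$ (the last equality being the one recorded just before Proposition \ref{rhoDisSCC}), so $n-\deg g=\dim_M\mathcal Sg=\dim_M\mathcal S'g'=n-\deg g'$, whence $\deg g'=\deg g$. By the matrix $K$‑algebra structure of $\mathcal S$ and of $\mathcal S'$ (as in Section \ref{algsetup} and as used in the proof of Theorem \ref{Hdist}; concretely, $\deg\lclm{f,h}\le\deg f+\deg h$ and $y^n-1$ is the least common left multiple of all $n$ linear polynomials $y-\theta^{it}(\beta')$), one has $\deg g_T=|T|$ for every $T\subseteq C_n$; since $(t,n)=1$ and $\delta\le n-1$, both $T_{\delta,t}$ and $T_\delta$ have exactly $\delta-1$ elements, so $\deg g'=\deg g=\delta-1=\deg g_{T_\delta}$. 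Two monic polynomials of equal degree, one right‑dividing the other, coincide, so $g'=g_{T_\delta}$. The final assertion is then immediate from Proposition \ref{rhoDisSCC}: $\rho(\mathcal D)=\tovector(\mathcal S'g_{T_\delta})$ is the skew cyclic code over $M$ for the order‑$n$ automorphism $\theta^t$ (fixed field $K$) generated by $T_\delta=\{0,1,\dots,\delta-2\}$, i.e.\ a skew RS code in the sense of Remark \ref{decodeskewRS}. The only real difficulty is bookkeeping: keeping straight the two skew polynomial rings $S=M[x;\theta]$ and $S'=M[y;\theta^t]$, the isomorphism $\varphi$, and the exponent reductions modulo $n$, and being careful that \eqref{polyevaluation} is applied to $\varphi(g)$ even though its degree may exceed $n$.
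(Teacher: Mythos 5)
Your proposal is correct and follows essentially the same route as the paper: show that each $y-\theta^{it}(\beta')$, $i\in T_\delta$, right-divides both $y^n-1$ and $\varphi(g)$ (via the arithmetic $tu\equiv 1 \pmod n$ and the norm identities), conclude $g_{T_\delta}\mid_r g'$, and finish with the degree/dimension count $\deg g'=\deg g=\delta-1=\deg g_{T_\delta}$. The only cosmetic difference is that the paper obtains the divisibility through the images of the linear factors, namely $y-\theta^{it}(\beta')\mid_r y^u-\theta^{it}(\beta)=\varphi(x-\theta^{it}(\beta))\mid_r\varphi(g)$, whereas you evaluate $\varphi(g)=\sum_i g_i y^{iu}$ directly against the coefficient relations coming from the right roots of $g$; both rest on the same norm computation.
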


\begin{proof}
Let \(\norm{l}{\gamma}\) denote the \(l\)th-norm with respect to \(\theta^t\), i.e. 
\[
\norm{l}{\gamma} = \gamma \theta^t(\gamma) \dots \theta^{tl}(\gamma).
\]
Then \(\norm{l}{\beta'} = \alpha^{-1} (\theta^t)^l(\alpha)\). On one hand, \(\norm{n}{\beta'} = 1\), which implies that 
\[
y-(\theta^t)^i(\beta') \mid_r y^n-1
\] 
for all \(i \in T_\delta\). On the other hand,
\[
\norm{u}{(\theta^t)^i(\beta')} = \theta^{it}(\norm{u}{\beta'}) = \theta^{it}(\alpha^{-1} \theta^{tu}(\alpha)) = \theta^{it}(\alpha^{-1} \theta(\alpha)) = \theta^{it}(\beta),
\]
which proves that 
\[
y - (\theta^t)^i(\beta') \mid_r y^u - \theta^{it}(\beta) = \varphi(x - \theta^{it}(\beta)) \mid_r \varphi(g). 
\]
Hence 
\[
g_{T_\delta} = \lclm{y-\theta^{it}(\beta')}^{i \in T_\delta} \mid_r g' = \gcrd{\varphi(g),y^n-1}.
\]
We have that \(\deg g = \delta - 1\) and therefore \(\dim_M \mathcal{D} = n - \delta + 1\). So \(\dim_M \rho(\mathcal{D}) = n - \delta + 1\). Then \(\deg g' = \delta - 1\) because \(g'\) is a minimal generator. Since \(\deg g_{T_\delta} = \delta-1\), we conclude that \(g' = g_{T_\delta}\).  
\end{proof}

Algorithm \ref{decoding} describes a decoding algorithm for skew BCH codes. 

\begin{algorithm}[H]
\caption{Decoding algorithm for skew BCH codes}\label{decoding}
\begin{algorithmic}[1]
\REQUIRE \(T_{\delta,t}\), \(f = g_{\overline{T_{\delta,t}}}\), and a code \(\mathcal{C} = \tovector(\mathcal{R}f)\). A received transmission \(v = \left (v_0,\ldots ,v_{n-1}\right ) \in L^n\) with no more than \(\left\lfloor \frac{\delta-1}{2} \right\rfloor\) errors with respect to a codeword in \(\mathcal{C}\). 
\ENSURE The error \(e = \left (e_0,\ldots ,e_{n-1}\right )\) such that \(v-e \in \mathcal{C}\)
\STATE Let \(\rho(v) = (v_{\rho(0)}, \dots, v_{\rho(n-1)}) \in M^n\).
\STATE Let \(g' = g_{T_\delta} \in M[y;\theta^t]\)
\STATE Let \((x_0, \dots, x_{n-1})\) be the error computed when a nearest decoding algorithm is applied to \(\rho(v)\) with respect to the code \(\tovector(\mathcal{S}'g')\).
\RETURN \(\rho^{-1}(x_0,\ldots ,x_{n-1})\) 
\end{algorithmic}
\end{algorithm}

\begin{theorem}
Let \(f = g_{\overline{T_{\delta,t}}}\) and \(\mathcal{C} = \tovector(\mathcal{R}f)\). Algorithm \ref{decoding} correctly computes the error vector of any received vector if the weight of the error is not greater that \(\left\lfloor \frac{\delta-1}{2} \right\rfloor\).
\end{theorem}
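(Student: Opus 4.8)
The plan is to show that Algorithm \ref{decoding} is nothing but the composition of the reductions already established in Propositions \ref{dectoMDS}, \ref{rhoDisSCC}, \ref{dectoskewRS} and Remark \ref{decPEcodes}, so that the proof is essentially a matter of tracking a codeword through each step. First I would set up the data: let $c\in\mathcal{C}$ be the transmitted codeword and $v=c+e\in L^n$ the received word, with $\weight(e)\le\lfloor(\delta-1)/2\rfloor$. Since $\mathcal{C}\subseteq\mathcal{D}$ (because $g\mid_r f$, as noted before Lemma \ref{Ddistance}), we have $c\in\mathcal{D}$ and $v=c+e\in M^n$ with the same error of weight at most $\lfloor(\delta-1)/2\rfloor$; here $\mathcal{D}=\tovector(\mathcal{S}g)$ with $g=g_{T_{\delta,t}}$.

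Next I would apply the permutation $\rho$. By Remark \ref{decPEcodes}, since $\mathcal{D}$ and $\rho(\mathcal{D})$ are permutation equivalent and $\rho$ preserves Hamming weight, $\rho(c)$ is the unique vector of $\rho(\mathcal{D})$ at distance at most $\lfloor(\delta-1)/2\rfloor$ from $\rho(v)$, and $\rho(e)$ is the corresponding error. Then, by Proposition \ref{rhoDisSCC} together with Proposition \ref{dectoskewRS}, $\rho(\mathcal{D})=\tovector(\mathcal{S}'g')$ with $g'=g_{T_\delta}\in M[y;\theta^t]$, which is precisely the generator used in line 2 of the algorithm; thus $\rho(\mathcal{D})$ is a skew RS code of the kind covered by the nearest neighbor decoding algorithms cited in Remark \ref{decodeskewRS}. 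Applying such an algorithm to $\rho(v)=(v_{\rho(0)},\dots,v_{\rho(n-1)})$ (line 1 and line 3) therefore returns exactly $(x_0,\dots,x_{n-1})=\rho(e)$, since $\rho(e)$ is the unique error of admissible weight. Finally, line 4 returns $\rho^{-1}(\rho(e))=e$, and $v-e=c\in\mathcal{C}$, which is the claim.

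The only points that need a word of care, rather than being purely routine, are: (i) that the weight bound is genuinely preserved at each stage — trivial for $\rho$ and $\rho^{-1}$, and for the embedding $\mathcal{C}\subseteq\mathcal{D}$ it follows because the ambient coordinates do not change; (ii) that the nearest neighbor algorithm on $\rho(\mathcal{D})$ is entitled to succeed, which is exactly the uniqueness statement coming from $\distance(\rho(\mathcal{D}))=\distance(\mathcal{D})=\delta$ in Lemma \ref{Ddistance} (an MDS code), so $2\lfloor(\delta-1)/2\rfloor<\delta$; and (iii) that the $g'$ computed abstractly in Proposition \ref{dectoskewRS} as $\gcrd{\varphi(g),y^n-1}$ coincides with the explicit $g_{T_\delta}$ hard-coded in line 2, which is the content of that proposition. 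I expect the main (though still modest) obstacle to be purely expository: making the chain of identifications $\mathcal{C}\subseteq\mathcal{D}\xrightarrow{\ \rho\ }\rho(\mathcal{D})=\tovector(\mathcal{S}'g_{T_\delta})$ explicit enough that the correctness of each algorithm line is visibly a citation of an earlier result, so that no new argument is actually required beyond assembling Propositions \ref{dectoMDS}, \ref{rhoDisSCC} and \ref{dectoskewRS} with Remark \ref{decPEcodes}.
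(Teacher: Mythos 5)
Your proposal is correct and follows essentially the same route as the paper: the paper's proof is exactly the assembly of Proposition \ref{dectoMDS} and Remark \ref{decPEcodes} (uniqueness of the nearest codeword after applying $\rho$) with Remark \ref{decodeskewRS} and Propositions \ref{rhoDisSCC} and \ref{dectoskewRS} (identifying $\rho(\mathcal{D})$ as the skew RS code generated by $g_{T_\delta}$, to which the cited nearest neighbor decoders apply). Your tracking of the weight bound and of $g'=g_{T_\delta}$ through each line of the algorithm just makes explicit what the paper leaves as a citation chain.
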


\begin{proof}
By Proposition \ref{dectoMDS} and Remark \ref{decPEcodes}, any nearest neighbor decoding algorithm applied to \(\rho(v)\) correctly computes \(\rho(e)\) if \(v = c + e\) with \(c \in \mathcal{C}\) and \(\weight(e) \leq \left\lfloor \frac{\delta-1}{2} \right\rfloor\). By Remark \ref{decodeskewRS} and Propositions \ref{rhoDisSCC} and \ref{dectoskewRS}, fast nearest neighbor decoding algorithms exist to decode \(\rho(v)\). 
\end{proof}

\begin{remark}
In general, the complexity of Algorithm \ref{decoding} depends on the base field, and the algorithm used for decoding skew RS codes. See, for instance, the discussion in \cite[Remark 21]{gln2017sugiyama}. As an example, if the base field is finite and we use the Peterson-Gorenstein-Zierler algorithm described in \cite{GLNPGZ}, the complexity becomes cubic with respect to operations on the field. See Example \ref{exampff}.
\end{remark}

\begin{example}\label{exampff}
Let $L=\mathbb{F}_2(b)$ be the field with $2^{8}$ elements, where $b^8 + b^4 + b^3 + b^2 + 1=0$, and $\sigma:L\to L$ the automorphism defined by $\sigma(b)=b^8$, i.e. $\sigma=\tau^3$, where $\tau$ is the Frobenius automorphism. Therefore, the order of $\sigma$ is $\mu=8$. Let us denote $R=L[x;\sigma]$ and $\mathcal{R}=L[x;\sigma]/\langle x^{16}-1\rangle$. Let us consider the field extension $L\subset M$, where $M= \mathbb{F}_2(a)$ is the field with $2^{16}$ elements such that $a^{16} + a^5 + a^3 + a^2 + 1=0$, and $\theta=\tau^3$. Therefore, $s=[M:L]=2$ and $|\theta|=16$. Observe that the embedding $\epsilon:L\to M$ is given by $\epsilon(b)=a^{77}$. Set $S=M[x;\theta]$ and $\mathcal{S} = S/S(x^{16}-1)$. Then \(\epsilon\) extends canonically to \(\epsilon : \mathcal{R} \to \mathcal{S}\). By the method provided in Section \ref{codesconstruction}, we choose $b=0$, $\delta=7$ and $t=11$, so that $T=\{0, 11, 6, 1, 12, 7\}$. Actually, $\overline{T}=\{0, 1, 3, 4, 6,7 ,8 , 9, 11, 12, 14, 15\}$. Set $\alpha=a^{11}$, that provides a normal basis of $M$ over $M^\theta$, and $\beta=\alpha^{-1}\theta(\alpha)=a^{77}$. Therefore,
\[
g=x^6+a^{60395}x^5+a^{25401}x^4+a^{31814}x^3+a^{58173}x^2+a^{15228}x+a^{15937}
\]
defines a skew cyclic code $\mathcal{D}=\mathfrak{v}(\mathcal{S}g)$ over $M$ of length 16 and dimension 10, and 
\begin{multline*}
\overline{g}=x^{12}+b^{48}x^{11}+b^{146}x^{10}+b^{158}x^9+b^{29}x^8+b^{17}x^7+b^{52}x^6+\\ b^{127}x^5+b^{169}x^4+b^{208}x^3+b^{229}x^2+b^{102}x+b^{115}
\end{multline*}
defines a skew BCH code $\mathcal{C}=\mathfrak{v}(\mathcal{R}\overline{g})$ over $L$ of length 16 and dimension 4. Moreover, by Corollary \ref{BCHbound}, the Hamming distance of $\mathcal{C}$ is greater or equal than 7 and, following Algorithm \ref{decoding}, it corrects up to 3 errors. Then $\mathcal{D}$ is an MDS code of Hamming distance 7. 

Let now \(S' = M[y;\theta^t]\), and the ring isomorphism
\[
\varphi: \mathcal{S} \longrightarrow \mathcal{S}' = \frac{S'}{S' (y^n-1)}
\]
defined by $\varphi(x^i)=y^{3i}$, for any $i$, since $3$ is the inverse of $11$ modulo $16$. Hence 
\begin{multline*}
g' = \varphi(g) = \lclm{y-\rho^i(\gamma)}^{i=0,1,2,3,4,5} = y^6 + a^{45739}y^5 + a^{60997}y^4 \\ 
+ a^{10959}y^3 + a^{3299}y^2 + a^{14798}y + a^{41129}
\end{multline*}
generates a skew RS code $\rho(\mathcal{D}) = \tovector(\mathcal{S}' g')$.

Suppose then we need to send the message $m=b^{56}x^3+bx^2+b^{13}x+b^{34}$ so the encoded polynomial to be transmitted is 
\begin{multline*}
c = m\overline{g} =b^{56}x^{15}+b^{179}x^{14}+b^{93}x^{13}+b^{28}x^{12}+b^{31}x^{11}+b^{53}x^{10}+b^{209}x^9+\\ b^{93}x^8+b^{178}x^7+ b^{78}x^6+b^{249}x^5+b^{50}x^4+b^{79}x^3+b^{198}x^2+b^{171}x+b^{149}
\end{multline*}
After the transmission, we receive a polynomial
\begin{multline*}
v =b^{56}x^{15}+b^{179}x^{14}+b^{20}x^{13}+b^{28}x^{12}+b^{31}x^{11}+b^{53}x^{10}+b^{76}x^9+\\ b^{93}x^8+b^{178}x^7+ b^{78}x^6+b^{175}x^5+b^{50}x^4+b^{79}x^3+b^{198}x^2+b^{171}x+b^{149},
\end{multline*}
that is $v=c+e$, where $e=bx^{13}+b^{71}x^9+b^{23}x^5$.

Now, we use the code $\rho(\mathcal{D})$ in order to correct the received polynomial $v$, viewed in $\mathcal{S}'$ via $\varphi\epsilon$. Concretely,
\[
\begin{split}
\varphi\epsilon(v) &= a^{24415}y^{15} + a^{27242}y^{14} + a^{28784}y^{13} + a^{25700}y^{12} \\
&\quad + a^{39064}y^{11} + a^{26471}y^{10} + a^{40606}y^9 + a^{47802}y^8 + a^{10280}y^7 + a^{36237}y^6 \\
&\quad + a^{25957}y^5 + a^{14392}y^4 + a^{22359}y^3 + a^{40092}y^2 + a^{15934}y + a^{11051}.
\end{split}
\]
Let us apply \cite[Algorithm 1]{GLNPGZ}. We first calculate the full matrix of syndromes,
$$\left(\begin{array}{rrr}
a^{48031} & a^{33571} & a^{16897} \\
a^{1607} & a^{33794} & a^{53272} \\
a^{2053} & a^{41009} & a^{40611} \\
a^{16483} & a^{15687} & a^{13015}\end{array}\right)$$
and its reduced column echelon form
$$\left(\begin{array}{rrr}
1 & 0 & 0 \\
0 & 1 & 0 \\
0 & 0 & 1 \\
a^{2516} & a^{38350} & a^{16308}
\end{array}\right).$$
Therefore, the rank of the matrix is three, and the monic polynomial in the left kernel
is the error locator polynomial $\lambda=y^3+a^{16308}y^2+a^{38350}y+a^{2516}$. So that we found three errors at positions 7, 11 and 15. Finally, we solve the system
$$\left(\begin{array}{rrr}
a^{1408} & a^{22528} & a^{32773} \\
a^{2816} & a^{45056} & a^{11} \\
a^{5632} & a^{24577} & a^{22}
\end{array}\right)\cdot 
\left(\begin{array}{c}
e_7\\
e_{11}\\
e_{15}
\end{array} \right )=
\left(\begin{array}{c}
a^{48031}\\ 
a^{1607}\\
a^{2053}
\end{array}\right ),$$
whose solution is $e_7=a^{514}$, $e_{11}=a^{36494}$ and $e_{13}=a^{11822}$. Hence, the error polynomial is $\varphi(e)=a^{11822}y^{15}+a^{36494}y^11+a^{514}y^7$. Therefore, $e = a^{514}x^{13}+a^{36494}x^9+a^{11822}x^5$, or viewed in $R$, $e = bx^{13}+b^{71}x^9+b^{23}x^3$, as expected.
\end{example}

\begin{example} Under the conditions and notation of Example 4.7, consider the skew cyclic convolutional code $\mathcal{C}$ generated by
\[
f = x^8+\frac{1}{z^3+z}x^4+\frac{z^2}{z^3+z^2+z+1},
\]
 that belongs to $\mathbb{F}_2(z)[x;\sigma]/\langle x^{12}-1\rangle$. Let us suppose that we need to send the message provided by 
 \[m=x^{3} + z^{2} x^{2} + \frac{1}{z} x + z.\]
We encode the message and get the codeword 
\[\begin{split} mf=  & x^{11} + z^{2} x^{10} + \frac{1}{z} x^{9} + z x^{8} +
\left(\frac{1}{z^{3} + z}\right) x^{7} + \left(\frac{z^{5}}{z +
1}\right) x^{6} 
\\ +  & \left(\frac{z^{3} + z^{2} + z + 1}{z^{3}}\right) x^{5}
 +  \left(\frac{1}{z^{2} + 1}\right) x^{4} + \left(\frac{z^{2}}{z^{3} +
z^{2} + z + 1}\right) x^{3} 
\\ + & \left(z^{5} + z^{3}\right) x^{2} +
\left(\frac{z + 1}{z^{4}}\right) x + \frac{z^{3}}{z^{3} + z^{2} + z + 1}.\end{split}\]
Suppose now that, after the transmission, the received polynomial $y=mf+e$, where the error polynomial $e=zx^9$. That is, there is an error in the 9th position, which we expect to recover.

We then translate the problem to the working algebra $\mathbb{F}_{16}(t)[x;\theta]/\langle x^{12}-1\rangle$  via the embedding $\epsilon:(\mathbb{F}_2(z),\sigma)\to (\mathbb{F}_{16}(t),\theta)$. Consider the skew cyclic code $\mathcal{D}\subset \mathbb{F}_{16}(t)^{12}$ generated by \[g=[x-\beta,x-\theta^5(\beta)]_{\ell}.\] In particular, $g$ verifies $\overline{g}=\epsilon(f)$, where we also denote $\epsilon$ the appropriated extension to polynomials. Therefore, $\epsilon(mf)$ belongs to $\mathcal{D}$. Actually, the error $\epsilon(e)=tx^9$ remains to be of weight 1. Then, the problem consists in correcting the polynomial $\epsilon(y)$ by using the skew cyclic code $\mathcal{D}$. In order to see $\mathcal{D}$ as a skew RS code, we need the isomorphism 
\[\varphi: \mathcal{S}=\frac{\mathbb{F}_{16}(t)[x;\theta]}{\langle x^{12}-1\rangle} \to \frac{\mathbb{F}_{16}(t)[y;\rho]}{\langle y^{12}-1\rangle}=\mathcal{S}'\] defined by $\varphi(x)=y^5$, where $\rho=\theta^5$. Hence, $\varphi(\mathfrak{v}^{-1}(\mathcal{D}))=\varphi{(\mathcal{S}g)}=\mathcal{S}'\varphi{(g)}=\mathcal{S}'g'$ is a skew RS code generated by 
\[\begin{split} g'= & [y-\gamma,y-\rho(\gamma)]_{\ell} \\ = &  y^{2} + \left(\frac{t^{4} + \left(a^{2} + 1\right) t^{3} + \left(a^{2} + a + 1\right) t^{2} + \left(a + 1\right) t}{\left(a^{3} + 1\right) t^{2}
+ \left(a + 1\right) t + a^{3} + a^{2} + 1}\right) y \\ + & \frac{t^{4} +
\left(a^{2} + 1\right) t^{3} + \left(a^{2} + a\right) t^{2}}{\left(a^{3}
+ a^{2} + a\right) t^{3} + \left(a^{3} + a^{2}\right) t^{2} +
\left(a^{2} + 1\right) t + a^{2} + a + 1}.\end{split}\]
This code has error-correcting capacity 1, so it can correct the received polynomial
\[\begin{split} \varphi\epsilon(y)= & \left(\frac{1}{t^{3} + t}\right) w^{11} + \left(t^{5} + t^{3}\right)
w^{10} + \left(\frac{t^{2} + 1}{t}\right) w^{9} + \left(\frac{1}{t^{2} +
1}\right) w^{8} + w^{7} \\ +  & \left(\frac{t^{5}}{t + 1}\right) w^{6} +
\left(\frac{t + 1}{t^{4}}\right) w^{5} + t w^{4} +
\left(\frac{t^{2}}{t^{3} + t^{2} + t + 1}\right) w^{3} \\+ & t^{2} w^{2} +
\left(\frac{t^{3} + t^{2} + t + 1}{t^{3}}\right) w + \frac{t^{3}}{t^{3}
+ t^{2} + t + 1}. \end{split}\]
In order to do this, we may make use of the Sugiyama Algorithm proposed in \cite{gln2017sugiyama}. Indeed, the syndrome polynomial becomes $$S=\left(\left(a^{3} + a^{2} + a + 1\right) t + a^{3} + a^{2} + a +
1\right) w + \left(a^{3} + a^{2}\right) t^{2}.$$
Hence, by applying the Right Extended Euclidean Algorithm to $S$ and $y^2$, it yields that the error locator polynomial $\lambda=y + (a^{3} + a^{2} + a + 1)/(t^{2} + t)=y-\rho^9(\gamma)$ and the error evaluator polynomial $\omega=a^{3} t/(t + 1)$. From these considerations, we may deduce that there is an error at position 9 whose value is $t$. Hence, the error polynomial is $e'=ty^9$, and thus $e=\epsilon^{-1}\varphi^{-1}(e')=zx^9$, as expected.
\end{example}

\appendix

\section{Hartmann-Tzeng and BCH bounds for skew block codes through linearized polynomials, and a remark on Gabidulin non block codes.}

The theory so far developed may be applied, in particular, to finite fields, that is, for skew block codes. Thus, some of our statements boil down to existing results in the literature.  Concretely, Theorem \ref{HTbound} and Corollary \ref{BCHbound} might be mathematically considered as extensions of \cite[Corollary 5]{MartinezPenas:2017} and \cite[Proposition 1]{Chaussade/etal:2009} to abstract Galois field extensions with cyclic Galois group. Since the latter heavily rest upon the good properties of the spaces of roots of linearized of (commutative) polynomials (or difference equations), in order to clarify such a connection, we need to discuss how non-commutative and commutative roots are related.  The key point is Lemma \ref{rootroot} below, kindly pointed out by one of the referees. 

\subsection{Hartmann-Tzeng bound.}
Recently, in \cite[Corollary 5]{MartinezPenas:2017}, a version of the  Hartmann-Tzeng bound with respect to a rank metric has been proved for $q^k$--skew block codes. Since the distance given by the rank metric is lower than the Hamming distance, \cite[Corollary 5]{MartinezPenas:2017} implies a Hartmann-Tzeng bound for the latter. It is pertinent to compare Theorem \ref{HTbound}, when applied to finite fields, with \cite[Corollary 5]{MartinezPenas:2017}. In order to work with exactly the same general hypotheses than in \cite{MartinezPenas:2017}, we note that Section \ref{HT} is still valid, word by word, without requiring that $L^{\sigma} = M^{\theta}$ as in Definition \ref{autoextension}. To be precise, let $M$ be a finite field extension of $L$, and $\theta : M \to M$ a field automorphism of finite order $n$ such that, by restriction, gives an automorphism $\sigma : L \to L$. Set $R = L[x;\sigma]$ and $S = M[x;\theta]$. Obviously, $R$ is a subring of $S$, and $x^n -1$ is a central element both in $R$ and in $S$. 

\begin{lemma}\label{2ideals}
\(R(x^n-1) = R \cap S(x^n-1)\).
\end{lemma}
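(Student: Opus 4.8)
The plan is to mimic exactly the argument of Lemma \ref{l31}, which already handled the special case $f = x^n - 1$ there (indeed Lemma \ref{l31} states $Sf \cap R = Rf$ for any $f \in R$, and $x^n-1 \in R$). So in a sense this is Lemma \ref{l31} with $f = x^n-1$, but the subtlety is that the present setting drops the hypothesis $L^\sigma = M^\theta$ from Definition \ref{autoextension}, so I should check that nothing in that argument secretly used it. First I would note the easy inclusion $R(x^n-1) \subseteq R \cap S(x^n-1)$: this is immediate since $R(x^n-1) \subseteq R$ and $R(x^n-1) \subseteq S(x^n-1)$.

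For the reverse inclusion, the key tool is the analogue of \eqref{Rpi}, namely that $R = \{f \in S : f^{\pi} = f\}$ where now $\pi$ must be reinterpreted. In the general finite-field setting $\Galois(M/L)$ is still cyclic (being a subgroup of the cyclic group $\Galois(M/K')$ where $K' = M^\theta$), generated by some power of $\theta$; call it $\pi$, and extend it to a ring automorphism of $S = M[x;\theta]$ coefficientwise exactly as in the excerpt. Then $R = L[x;\sigma]$ is precisely the fixed subring $S^\pi$, since a polynomial has all coefficients in $L = M^{\langle\pi\rangle}$ if and only if it is fixed coefficientwise by $\pi$, and $\pi$ commutes with the relevant structure. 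I would then take $h \in R \cap S(x^n-1)$, write $h = g(x^n-1)$ with $g \in S$, and apply $\pi$: since $h \in R$ we have $h = h^\pi = g^\pi (x^n-1)^\pi = g^\pi(x^n-1)$ because $x^n-1$ has coefficients in $K \subseteq L \subseteq M^\pi$, so it is $\pi$-fixed. Comparing $g(x^n-1) = g^\pi(x^n-1)$ and using that $S$ is a domain (so we may cancel the nonzero central element $x^n-1$ on the right) gives $g = g^\pi$, hence $g \in S^\pi = R$, and therefore $h \in R(x^n-1)$.

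The only real point requiring care — and the step I expect to be the main obstacle — is justifying that $R = S^\pi$ in the relaxed hypotheses, i.e. identifying the correct $\pi \in \Galois(M/L)$ and confirming $M^{\langle \pi \rangle} = L$. This rests on Galois theory of finite fields: $M/L$ is automatically a Galois extension with cyclic Galois group, and $\theta$ restricts to $\sigma$ on $L$ by assumption, so the subgroup of $\Galois(M/M^\theta)$ fixing $L$ is cyclic and generated by a suitable power $\theta^\mu$ of $\theta$ (where $\mu$ is the order of $\sigma$); taking $\pi = \theta^\mu$ works. Once that identity is in place, the cancellation argument is completely formal and identical to the proof of Lemma \ref{l31}. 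I would therefore keep the write-up short, essentially: establish $R = S^\pi$ in a sentence, then run the one-line $\pi$-fixed-point computation and cancel $x^n-1$.
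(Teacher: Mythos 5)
You have correctly isolated the crux --- whether the identity $R=S^{\pi}$ survives the relaxed hypotheses --- but your resolution of it fails, and this is a genuine gap. In the appendix setting one only assumes that $\theta$ restricts to an automorphism of $L$; it is \emph{not} assumed that $M^{\theta}\subseteq L$ (in the finite-field picture of the appendix, that $k$ divides $m$). Consequently $\Galois(M/L)$ need not be a subgroup of $\Galois(M/M^{\theta})=\langle\theta\rangle$, and no power of $\theta$ can have fixed field $L$: every fixed field $M^{\theta^{j}}$ contains $M^{\theta}$. In particular your choice $\pi=\theta^{\mu}$ does not work. Concretely, take $M=\field[64]$, $\theta=\tau^{2}$ with $\tau$ the Frobenius over $\field[2]$, so $n=3$ and $M^{\theta}=\field[4]$, and take $L=\field[8]$. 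Then $\sigma=\theta_{|L}$ has order $\mu=3$, so $\theta^{\mu}=\mathrm{id}_{M}$ and its fixed field is all of $M$, not $L$; moreover $\Galois(M/L)=\{\mathrm{id},\tau^{3}\}$ contains no nontrivial power of $\theta$. So the key identity $R=S^{\pi}$ is unjustified precisely in the situations the lemma is meant to cover, which is also why the paper does not simply quote Lemma \ref{l31}.

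The argument is repairable in the finite-field setting: take $\pi$ to be a generator of $\Galois(M/L)$ (a power of the absolute Frobenius, in general not of $\theta$); since the automorphism group of a finite field is abelian, $\pi$ commutes with $\theta$, so its coefficientwise extension is a ring automorphism of $S$ with fixed ring $R$, the polynomial $x^{n}-1$ is $\pi$-invariant, and your cancellation in the domain $S$ then goes through. The paper's own proof avoids Galois theory entirely: writing $f=\sum_{i}f_{i}x^{i}\in S$ with $f(x^{n}-1)\in R$, one has $f(x^{n}-1)=\sum_{i=0}^{n-1}(-f_{i})x^{i}+\sum_{i\geq n}(f_{i-n}-f_{i})x^{i}$, and comparing coefficients forces, inductively, every $f_{i}\in L$, hence $f\in R$. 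That argument uses only $\theta_{|L}=\sigma$ and nothing about $M/L$ being Galois (nor about the fields being finite), which is exactly the generality in which the lemma is stated.
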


\begin{proof}
The inclusion \(R(x^n-1) \subseteq R \cap S(x^n-1)\) is trivial. So assume \(\sum_{i=0}^\nu f_i x^i \in S\) such that \(\sum_{i=0}^\nu f_i x^i(x^n-1) \in R\). We have
\[
\sum_{i=0}^\nu f_i x^i(x^n-1) = \sum_{i=0}^\nu f_i x^{i+n} - \sum_{i=0}^\nu f_i x^i = \sum_{i=0}^{n-1} -f_i x^i + \sum_{i=n}^{n+\nu} (f_{i-n} - f_i) x^i \in R.
\]
Then for all \(0 \leq i \leq n-1\), \(-f_i \in L\). For all \(n \leq i \leq 2n-1\), \(f_{i-n} - f_i \in L\), hence \(-f_i \in L\). Iterating the process, we get that \(f_i \in L\) for all \(0 \leq i \leq \nu\), i.e. \(\sum_{i=0}^\nu f_i x^i \in R\). Therefore \(R(x^n-1) \supseteq R \cap S(x^n-1)\).
\end{proof}

In view of Lemma \ref{2ideals}, we still have a ring extension $\mathcal{R} \subseteq \mathcal{S}$, where
\[
\mathcal{R} = \frac{R}{R(x^n-1)} \quad \text{and} \quad \mathcal{S} = \frac{S}{S(x^n-1)}. 
\]
 
If we set $K = M^{\theta}$, then all statements and proofs of Section \ref{HT} are still valid. 

%\medskip

Now, let us briefly recall the scenario where \cite{MartinezPenas:2017} is represented. Let $k,m,n$ positive integers such that $m$ divides $kn$, so that $\field[q^m]$ is considered as a subfield of $\field[q^{kn}]$. Let $t$ be a commutative variable, and let us denote by $\Lq{kn}$ the ring of $q^k$--linearized polynomials over $\field[q^{kn}]$. Obviously, $\Lq{m}$ is a subring of $\Lq{kn}$. 

Denote by $\tau$ the Frobenius $\field[q]$--automorphism of $\field[q^{kn}]$. The order of the automorphism $\theta = \tau^k$ of $\field[q^{kn}]$ is $n$. On the other hand, $\theta$ restricts to an automorphism $\sigma : \field[q^m] \to \field[q^m]$ whose order, say $\mu$, obviously divides $n$. Let $M= \field[q^{kn}]$, $L = \field[q^m]$ and $K = M^\theta = \field[q^k]$. 

\begin{remark}
Although the results from Section \ref{HT} do not require the assumption $K = L^\sigma$, if we want to apply the whole theory developed in the paper to the finite field case under the conditions of \cite{MartinezPenas:2017}, then we should assume it. Since $L^\sigma = \field[q^{(m,k)}]$, it follows that the assumption $K = L^\sigma$ is just to require $k$ to be a divisor of $m$. 
\end{remark}

The connection between skew cyclic block codes in the sense of \cite{MartinezPenas:2017} and in the skew polynomial setting is based in the ring isomorphism (see, e.g., \cite[Theorem II.13]{McDonald:1974})
\begin{equation}\label{Phi}
\Phi : R = L[x;\sigma] \to \Lq{m}, \qquad \left( \textstyle\sum_{i}f_i x^i \mapsto \sum_i f_i t^{[ik]} \right), 
\end{equation}
where, as usual, $[j] = q^{j}$ for every non negative integer $j$. Indeed, the isomorphism \eqref{Phi} projects to an isomorphism
\[
\mathcal{R} = \frac{L[x;\sigma]}{\langle x^n-1 \rangle} \cong \frac{\Lq{m}}{\langle t^{[kn]} - t \rangle},
\]
that gives an explicit correspondence between skew block codes and $q^k$--cyclic codes in view of  \cite{MartinezPenas:2017,Boucher/etal:2007,Gabidulin:2009}. The isomorphism \eqref{Phi} can be seen as the restriction to $R$ of the isomorphism
\[
\Phi : S = M[x;\theta] \to \Lq{kn}, \qquad \left( \textstyle\sum_{i}f_i x^i \mapsto \sum_i f_i t^{[ik]} \right),
\]
which allows to display a concrete relationship between non-commutative roots of a skew polynomial and the usual roots of its linearized polynomial. The following lemma is formulated, for example, in \cite[Lemma 4]{Chaussade/etal:2009}.

\begin{lemma}\label{rootroot}
Let $f \in S$ and $\alpha \in \field[q^{kn}]$. Then $\Phi(f)(\alpha) = 0$ if and only if $\alpha^{-1}\theta(\alpha)$ is a right roof ot $f$. 
\end{lemma}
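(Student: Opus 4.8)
The plan is to compute $\Phi(f)(\alpha)$ directly from the definition of $\Phi$ and the polynomial evaluation formula \eqref{polyevaluation}, and to compare the two expressions term by term. Write $f = \sum_i f_i x^i \in S$, so that by definition $\Phi(f) = \sum_i f_i t^{[ik]}$, where $t^{[ik]}$ denotes the linearized monomial $t \mapsto t^{q^{ik}}$. Evaluating at $\alpha \in \field[q^{kn}]$ gives $\Phi(f)(\alpha) = \sum_i f_i \alpha^{q^{ik}} = \sum_i f_i \theta^i(\alpha)$, since $\theta = \tau^k$ acts as $x \mapsto x^{q^k}$ and hence $\theta^i(\alpha) = \alpha^{q^{ik}}$. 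On the other hand, set $\gamma = \alpha^{-1}\theta(\alpha)$. By \eqref{polyevaluation}, the remainder of the right division of $f$ by $x - \gamma$ equals $\sum_i f_i \norm{i}{\gamma}$, and by the identity \eqref{normproperties} (with $k=0$, the $\alpha$ there being our $\alpha$, the $\beta$ there being our $\gamma$), we have $\norm{i}{\gamma} = \alpha^{-1}\theta^i(\alpha)$. Therefore the remainder is $\alpha^{-1}\sum_i f_i \theta^i(\alpha) = \alpha^{-1}\Phi(f)(\alpha)$.

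First I would record the two computations above; then the conclusion is immediate. Since $\alpha \neq 0$, the remainder $\alpha^{-1}\Phi(f)(\alpha)$ vanishes if and only if $\Phi(f)(\alpha) = 0$. By definition, $x - \gamma \mid_r f$ exactly when this remainder is zero, i.e. exactly when $\gamma = \alpha^{-1}\theta(\alpha)$ is a right root of $f$. This proves the equivalence. A small caveat to address is the degenerate case $\alpha = 0$: then $\Phi(f)(0) = 0$ trivially while $\alpha^{-1}\theta(\alpha)$ is not defined, so the statement is implicitly understood for $\alpha \in \field[q^{kn}]^{\times}$; I would note this parenthetically (or observe that the formula $\norm{i}{\gamma}$ via \eqref{normproperties} already presupposes $\alpha$ invertible).

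I do not expect any genuine obstacle here: the result is essentially a bookkeeping identity matching the additive evaluation $\sum_i f_i \alpha^{q^{ik}}$ of a $q^k$-linearized polynomial against the norm-based non-commutative evaluation $\sum_i f_i \norm{i}{\gamma}$ of a skew polynomial, and the bridge between them is precisely \eqref{normproperties}. The only point that needs a little care is making sure the exponent bookkeeping in $[ik] = q^{ik}$ lines up with $\theta^i = (\tau^k)^i = \tau^{ik}$, so that $t^{[ik]}$ evaluated at $\alpha$ gives $\theta^i(\alpha)$ rather than, say, $\theta^i(\alpha)$ with a shifted index; this is immediate once the conventions are unwound. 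Hence the proof is short and I would present it as the two displayed computations followed by the one-line conclusion.
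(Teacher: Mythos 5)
Your argument is correct and is exactly the route the paper intends: its proof simply says the lemma is straightforward from \eqref{polyevaluation} and \eqref{normproperties}, and you have written out precisely that computation (with the sensible parenthetical remark that $\alpha$ must be nonzero, which the normal-basis/invertibility context guarantees). Nothing further is needed.
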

\begin{proof}
Straightforward from \eqref{polyevaluation} and \eqref{normproperties}. See also \cite[(3)]{GLNPGZ}.
\end{proof}

Following the notation of \cite{MartinezPenas:2017}, the $\field[q^k]$--vector space of the roots in $\field[q^{kn}]$ of $F \in \Lq{m}$ is denoted by $Z(F)$. 

From now on, fix $\alpha \in M = \field[q^{kn}]$ such that $\{ \theta^i(\alpha) : i = 0, \dots, n-1 \}$ is a normal basis of $M = \field[q^{kn}]$ over $K = \field[q^k]$. We write $\beta = \alpha^{-1} \theta(\alpha)$. 

\begin{lemma}\label{betaceros}
For every $g \in R$, the set $\{ \theta^{i}(\alpha) : i \in T_\beta(g) \}$ is either empty or a (linearly independent) subset of $Z(\Phi(g))$. If $g$ is a right divisor of $x^n -1$, then it is a basis if and only if $g =[x-\theta^i(\beta)]^{i \in T_\beta(g)}_\ell$
\end{lemma}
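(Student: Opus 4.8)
The plan is to translate the non-commutative divisibility statements in $S = M[x;\theta]$ into statements about the classical $q^k$-linearized polynomial $\Phi(g) \in \Lq{kn}$ by means of Lemma \ref{rootroot}, and then use the normal basis hypothesis on $\alpha$ to control linear independence. First I would observe that, by the very definition of the $\beta$-defining set \eqref{eq:definingset}, an index $i$ lies in $T_\beta(g)$ exactly when $x - \theta^i(\beta)$ is a right divisor of $g$. Since $\theta^i(\beta) = \theta^i(\alpha)^{-1}\theta^{i+1}(\alpha) = \theta^i(\alpha)^{-1}\theta(\theta^i(\alpha))$, Lemma \ref{rootroot} applied with the element $\theta^i(\alpha)$ in place of $\alpha$ shows that $x - \theta^i(\beta) \mid_r g$ if and only if $\Phi(g)(\theta^i(\alpha)) = 0$, i.e. $\theta^i(\alpha) \in Z(\Phi(g))$. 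Hence $\{\theta^i(\alpha) : i \in T_\beta(g)\} \subseteq Z(\Phi(g))$, and this set is a $K = \field[q^k]$-linearly independent family because it is a subfamily of the normal basis $\{\theta^i(\alpha) : i = 0,\dots,n-1\}$ of $M$ over $K$. This settles the first assertion; the empty case occurs precisely when $T_\beta(g) = \emptyset$.

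For the second assertion, assume $g$ is a right divisor of $x^n - 1$. Put $\delta = |T_\beta(g)|$. On one hand, $\dim_K Z(\Phi(g)) = \deg_{q^k}\Phi(g) = \deg g$, a standard fact about $q^k$-linearized polynomials dividing $t^{[kn]} - t$ (their root spaces have $K$-dimension equal to their $q^k$-degree); here one uses that $\Phi(g) \mid \Phi(x^n-1) = t^{[kn]} - t$, which is separable, so $Z(\Phi(g))$ has the maximal possible dimension. On the other hand, the family $\{\theta^i(\alpha) : i \in T_\beta(g)\}$ is $K$-linearly independent and sits inside $Z(\Phi(g))$, so it is a basis of $Z(\Phi(g))$ if and only if $\delta = \deg g$.

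It remains to match $\delta = \deg g$ with the stated factorization $g = \lclm{x - \theta^i(\beta)}^{i \in T_\beta(g)}$. Set $h = \lclm{x-\theta^i(\beta)}^{i\in T_\beta(g)}$. Each factor $x-\theta^i(\beta)$ with $i \in T_\beta(g)$ right-divides $g$, so $h \mid_r g$, and therefore $\deg h \le \deg g$. By the Circulant Lemma (Lemma \ref{circulantlemma}) together with \cite[Lemma 5.7]{Lam/Leroy/Ozturk:2008}, applied to the $K$-linearly independent family $\{\theta^i(\alpha) : i \in T_\beta(g)\}$ and the norm identity \eqref{normproperties}, the linear polynomials $x - \theta^i(\beta)$, $i \in T_\beta(g)$, are "independent" in the sense that $\deg h = |T_\beta(g)| = \delta$; equivalently, $\Phi(h)$ is the $q^k$-linearized polynomial whose root space is exactly the span of $\{\theta^i(\alpha)\}$, of $q^k$-degree $\delta$. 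Thus $h \mid_r g$ with $\deg h = \delta \le \deg g$, and $g = h$ holds if and only if $\deg g = \delta$, which by the previous paragraph is exactly the condition that $\{\theta^i(\alpha): i \in T_\beta(g)\}$ be a basis of $Z(\Phi(g))$.

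The main obstacle I anticipate is bookkeeping the two equivalent characterizations of "having a basis" — via the dimension of the root space $Z(\Phi(g))$ on the linearized side, and via the degree of the least common left multiple on the skew-polynomial side — and checking that the normal basis hypothesis is exactly what makes $\deg \lclm{x-\theta^i(\beta)}^{i\in T_\beta(g)} = |T_\beta(g)|$; both rest on the Circulant Lemma, so the real content is already in Section \ref{HT}, and the argument is essentially a dictionary translation once Lemma \ref{rootroot} is in hand.
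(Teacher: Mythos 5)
Your proof is correct and follows essentially the same route as the paper: apply Lemma \ref{rootroot} to $\theta^i(\alpha)$ to get the inclusion in $Z(\Phi(g))$, get linear independence from the normal basis, and reduce the basis criterion to the equality $|T_\beta(g)| = \deg g$. The only difference is that where the paper simply cites \cite[Theorem 1]{Chaussade/etal:2009} for $\dim_K Z(\Phi(g)) = \deg g$ and \cite[Lemma 6]{GLNPGZ} for the equivalence with $g = \lclm{x-\theta^i(\beta)}^{i \in T_\beta(g)}$, you supply direct arguments (separability of $t^{[kn]}-t$, and the degree of the least common left multiple via the Circulant Lemma), which is sound.
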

\begin{proof}
Let $i \in T_\beta(g)$, that is, $\theta^i(\beta)$ is a right root of $g$. Since $\theta^i(\beta) = \theta^i(\alpha)^{-1}\theta(\theta^i(\alpha))$, we get from Lemma \ref{rootroot} that $\theta^i(\alpha)$ is a root of $\Phi(g)$. Now, if $g$ is a right divisor of $x^n-1$, then, by \cite[Theorem 1]{Chaussade/etal:2009}, we have that $\dim_{\field[q^k]} (Z(\Phi(g)) = \deg g$. Thus, $\{ \theta^i(\alpha) : i \in T_\beta(g) \}$ is a basis of $Z(\Phi(g))$ if, and only if, the cardinal of $T_\beta(g)$ equals $\deg g$, by virtue of \cite[Lemma 6]{GLNPGZ}.
\end{proof}

The following Proposition is now a consequence of Theorem \ref{HTbound}.

\begin{proposition}(c.f. \cite[Corollary 5]{MartinezPenas:2017})\label{HTU}
Let $g \in R$ be a right divisor of $x^n-1$, consider $\mathcal{C} \subseteq \field[q^m]^n$ the $\field[q^m]$--linear $q^k$--cyclic code with minimal generator $\Phi(g) \in \Lq{m}$.  Assume that there exist numbers \(b, \delta, r, t_1, t_2\) with \((n,t_1)=1\) and \( (n,t_2)<\delta \) such that \(\{\theta^{b+i t_1+\ell t_2}(\alpha) \mid 0\leq i \leq \delta-2,\, 0\leq \ell \leq r\} \subseteq Z(\Phi(g))\). Then \(\distance(\mathcal{C}) \geq \delta+r\). 
\end{proposition}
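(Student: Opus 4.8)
The plan is to reduce Proposition \ref{HTU} to Theorem \ref{HTbound} by translating the hypothesis on the commutative root space $Z(\Phi(g))$ into the corresponding hypothesis on the $\beta$--defining set $T_\beta(g)$. Recall that $\mathcal{C}$ is defined to be the code with minimal generator $\Phi(g)$; under the isomorphism $\Phi:\mathcal{R}\to \Lq{m}/\langle t^{[kn]}-t\rangle$ of \eqref{Phi} this code corresponds to the skew cyclic code $\tovector(\mathcal{R}g)$, so its Hamming distance is unchanged and Theorem \ref{HTbound} is directly applicable to it once we know enough roots of $g$.

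First I would record that, since $g$ is a right divisor of $x^n-1$ and $\{\theta^i(\alpha): i=0,\dots,n-1\}$ is a normal basis of $M/K$, Lemma \ref{betaceros} (together with Lemma \ref{rootroot}) gives a bijective dictionary: $\theta^j(\alpha)\in Z(\Phi(g))$ if and only if $\theta^j(\beta)$ is a right root of $g$, i.e. $j\in T_\beta(g)$. More precisely, $\{\theta^i(\alpha): i\in T_\beta(g)\}$ is a linearly independent subset of $Z(\Phi(g))$ of size $\deg g = \dim_{\field[q^k]}Z(\Phi(g))$, hence it is a basis; consequently a vector $\theta^j(\alpha)$ lies in $Z(\Phi(g))$ \emph{exactly} when $j\in T_\beta(g)$ (the indices $0,\dots,n-1$ giving $n$ pairwise distinct, actually $K$--linearly independent, elements $\theta^j(\alpha)$, so membership in the span cannot occur for $j\notin T_\beta(g)$ without forcing an extra independent element into $Z(\Phi(g))$).

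Next I would apply this dictionary to the hypothesis: the assumption $\{\theta^{b+it_1+\ell t_2}(\alpha)\mid 0\le i\le \delta-2,\ 0\le \ell\le r\}\subseteq Z(\Phi(g))$ translates, index by index, into $\{b+it_1+\ell t_2\mid 0\le i\le \delta-2,\ 0\le \ell\le r\}\subseteq T_\beta(g)$ (reading the exponents modulo $n$, as elements of $C_n$, which is the convention of \eqref{eq:definingset}). With the same numbers $b,\delta,r,t_1,t_2$ satisfying $(n,t_1)=1$ and $(n,t_2)<\delta$, Theorem \ref{HTbound} then yields $\distance(\tovector(\mathcal{R}g))\ge \delta+r$, and since $\Phi$ is an isometry for the Hamming metric this is precisely $\distance(\mathcal{C})\ge \delta+r$.

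The only mild obstacle is bookkeeping at the level of the reduction rather than any genuine difficulty: one must be careful that the hypotheses of Section \ref{HT} are indeed available here (they are, since as remarked in the excerpt Section \ref{HT} does not require $K=L^\sigma$, only $K=M^\theta$, which holds with $K=\field[q^k]$), and that the correspondence between $Z(\Phi(g))$ and $T_\beta(g)$ really is a genuine equivalence and not just a one--sided inclusion — this is exactly what the dimension count in Lemma \ref{betaceros} guarantees. Everything else is a direct invocation of Theorem \ref{HTbound}.
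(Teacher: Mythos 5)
Your overall route is the paper's: translate the hypothesis on $Z(\Phi(g))$ into the inclusion $\{b+i t_1+\ell t_2 \mid 0\leq i \leq \delta-2,\, 0\leq \ell \leq r\} \subseteq T_{\beta}(g)$, observe that $\Phi$ identifies $\mathcal{C}$ with $\tovector(\mathcal{R}g)$ coefficient-wise (so Hamming distances agree), and invoke Theorem \ref{HTbound}; your remark that Section \ref{HT} does not need $K=L^{\sigma}$ is also the paper's.

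The flaw is in your justification of the dictionary. You assert that, for any right divisor $g$ of $x^n-1$, the set $\{\theta^i(\alpha): i\in T_\beta(g)\}$ has cardinality $\deg g$ and is therefore a basis of $Z(\Phi(g))$, and you lean on this (``the dimension count in Lemma \ref{betaceros}'') to get the converse implication from root-space membership to membership in $T_\beta(g)$. That cardinality claim is false in general: Lemma \ref{betaceros} states the basis property as an equivalence, holding precisely when $g=\lclm{x-\theta^i(\beta)}^{i\in T_\beta(g)}$, and a right divisor of $x^n-1$ may have right roots in the conjugacy class of $\beta$ that are not of the form $\theta^i(\beta)$. For instance, when $L=M$ any $\gamma$ with $\norm{n}{\gamma}=1$ lying outside the $\theta$-orbit of $\beta$ gives a degree-one right divisor $x-\gamma$ of $x^n-1$ with $T_\beta(x-\gamma)=\emptyset$, so $|T_\beta(g)|<\deg g$ and your span argument collapses. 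Fortunately the detour is unnecessary: Lemma \ref{rootroot} is itself a biconditional, and applied to the element $\theta^j(\alpha)$, whose associated conjugate is $\theta^j(\alpha)^{-1}\theta\bigl(\theta^j(\alpha)\bigr)=\theta^j(\beta)$, it gives directly that $\theta^j(\alpha)\in Z(\Phi(g))$ if and only if $j\in T_\beta(g)$ --- and for the proposition you only need the implication from $Z(\Phi(g))$ to $T_\beta(g)$. Deleting the basis/dimension argument and citing Lemma \ref{rootroot} (equivalently, the first part of Lemma \ref{betaceros}, as the paper does) makes your proof correct and essentially identical to the paper's one-line reduction to Theorem \ref{HTbound}.
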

\begin{proof}
Since the inclusion \(\{\theta^{b+i t_1+\ell t_2}(\alpha) \mid 0\leq i \leq \delta-2,\, 0\leq \ell \leq r\} \subseteq Z(\Phi(g))\)  is equivalent, in view of the first statement of Lemma \ref{betaceros}, to the inclusion \(\{b+i t_1+\ell t_2 \mid 0\leq i \leq \delta-2,\, 0\leq \ell \leq r\} \subseteq T_{\beta}(g)\), the Proposition follows by applying Theorem \ref{HTbound}. 
\end{proof}

\begin{remark}\label{HTUdif}
There are some differences between the statements of Proposition \ref{HTU} and \cite[Corollary 5]{MartinezPenas:2017}. First, the statement of \cite[Corollary 5]{MartinezPenas:2017} concerns the rank distance, which is lower than the Hamming distance. Second, in \cite{MartinezPenas:2017}, $\alpha$ is not required to generate a normal basis, but, in compensation, the set  $\{\theta^{b+i t_1+\ell t_2}(\alpha) \mid 0\leq i \leq \delta-2,\, 0\leq \ell \leq r \}$   is assumed to be linearly independent over $\field[q^k]$. Third, in \cite[Corollary 5]{MartinezPenas:2017}, the condition $\delta + r \leq m$ is assumed. In our approach, this extra hypothesis is not needed.  
\end{remark}

Although in our general setting, the cyclotomic spaces used in \cite{MartinezPenas:2017} are not needed, 
the reader may be curious about some potential relationship between them and our $\beta$--defining sets. Our next aim is to show that a particular choice of $\beta$--defining set leads to a set of vectors in a cyclotomic space. To this end, recall from \cite[Definition 3]{MartinezPenas:2017} that the minimal polynomial of $\gamma \in \field[q^{kn}]$ is the monic polynomial $F \in \Lq{m}$ of minimal degree such that $F(\gamma) = 0$.  The cyclotomic space of $\beta$ is then defined (\cite[Definition 4]{MartinezPenas:2017}) as the set of all roots in $\field[q^{kn}]$ of $F$.  Given $0 \neq f \in S$ let us denote, as in Section \ref{codesconstruction}, by $\overline{f}$ the unique monic polynomial in $R$ such that $R\overline{f} = Sf \cap R$.

\begin{proposition}\label{cyclo}
Let $\beta = \gamma^{-1}\theta(\gamma)$, for some $\gamma \in \field[q^{kn}]$. Then the minimal polynomial of $\gamma$ is $\Phi(  \overline{x - \beta})$.  As a consequence, if $T$ is the $\beta$--defining set of $\overline{x- \beta}$,  then $\{\theta^i(\gamma) : i \in T\}$ is a subset of the cyclotomic space of $\gamma$. 
\end{proposition}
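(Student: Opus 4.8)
The plan is to identify explicitly the monic polynomial $\overline{x-\beta}$ and then recognize it, via $\Phi$, as the minimal polynomial of $\gamma$. First I would recall that by Proposition \ref{pseudocalc} (applied to the one-degree polynomial $x-\beta$), we have
\[
\overline{x-\beta} = \lclm{x-\beta,\, (x-\beta)^\pi,\, \dots,\, (x-\beta)^{\pi^{s-1}}} = \lclm{x-\theta^{j\mu}(\beta)}^{0 \leq j \leq s-1},
\]
using that $(x-\beta)^{\pi^j} = x - \pi^j(\beta) = x - \theta^{j\mu}(\beta)$ since $\pi = \theta^\mu$. In particular, writing $T$ for the $\beta$--defining set of $\overline{x-\beta}$, the degree of $\overline{x-\beta}$ equals the cardinality of $T$ (by the divisor/right-root correspondence for right divisors of $x^n-1$, cf. \cite[Lemma 6]{GLNPGZ}), and $T$ is exactly the coset $[0] = \{0,\mu,\dots,(s-1)\mu\}$ in $C_n/C_s$, or possibly a proper subset if some of the $\theta^{j\mu}(\beta)$ coincide as right roots.

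Next I would translate through the isomorphism $\Phi : S \to \Lq{kn}$ of \eqref{Phi} and the correspondence of Lemma \ref{rootroot}: for each $i$, $\theta^i(\beta) = \theta^i(\gamma)^{-1}\theta(\theta^i(\gamma))$, so $\theta^i(\beta)$ is a right root of a polynomial $h \in S$ if and only if $\theta^i(\gamma) \in Z(\Phi(h))$. Applying this to $h = \overline{x-\beta}$, we get that $Z(\Phi(\overline{x-\beta}))$ is spanned by $\{\theta^i(\gamma) : i \in T\}$, and by Lemma \ref{betaceros} (since $\overline{x-\beta}$ is a right divisor of $x^n-1$ and $T$ is its full $\beta$--defining set) this set is in fact a $\field[q^k]$--basis of $Z(\Phi(\overline{x-\beta}))$, so $\dim_{\field[q^k]} Z(\Phi(\overline{x-\beta})) = \deg \overline{x-\beta} = \deg \Phi(\overline{x-\beta})$ (as a $q^k$--linearized polynomial, degree in $x$ equals $q^k$--degree).

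It remains to see that $\Phi(\overline{x-\beta})$ is the \emph{minimal} such polynomial, i.e. the monic polynomial in $\Lq{m}$ of minimal $q^k$--degree vanishing at $\gamma$. One direction is immediate: $\overline{x-\beta} \in R$, so $\Phi(\overline{x-\beta}) \in \Lq{m}$, and $\gamma \in Z(\Phi(\overline{x-\beta}))$ since $0 \in T$. For minimality, suppose $F \in \Lq{m}$ is monic with $F(\gamma) = 0$; then $\Phi^{-1}(F) \in R$ has $\beta$ as a right root, so $x-\beta \mid_r \Phi^{-1}(F)$ in $S$, whence $S\Phi^{-1}(F) \subseteq S(x-\beta)$, and since $\Phi^{-1}(F) \in R$ we get $S\Phi^{-1}(F) \subseteq S(x-\beta) \cap \dots$; more directly, $R\Phi^{-1}(F) \subseteq Sx-\beta \cap R = R\overline{x-\beta}$ by definition of $\overline{x-\beta}$ as the generator of $S(x-\beta)\cap R$. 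Hence $\overline{x-\beta} \mid_r \Phi^{-1}(F)$, so $\deg \Phi^{-1}(F) \geq \deg \overline{x-\beta}$, giving minimality. Thus the minimal polynomial of $\gamma$ is exactly $\Phi(\overline{x-\beta})$. The cyclotomic space of $\gamma$ is by definition the zero set of this minimal polynomial, namely $Z(\Phi(\overline{x-\beta}))$, and we showed above that $\{\theta^i(\gamma) : i \in T\}$ lies in it (indeed is a basis of it). The main obstacle is bookkeeping the passage between ideals of $S$ contracted to $R$ and the $\Lq{m} \subseteq \Lq{kn}$ picture; once Lemma \ref{rootroot} and Proposition \ref{pseudocalc} are invoked, the minimality argument is the only genuinely new point, and it reduces to the universal property of $\overline{x-\beta}$.
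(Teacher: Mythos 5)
Your essential argument for both claims coincides with the paper's proof: the identification of the minimal polynomial rests on Lemma \ref{rootroot} together with the defining property $R\overline{x-\beta}=S(x-\beta)\cap R$ (you run the degree comparison starting from an arbitrary monic $F\in\Lq{m}$ vanishing at $\gamma$, while the paper starts from the minimal one --- it is the same computation, and the divisibility $\overline{x-\beta}\mid_r\Phi^{-1}(F)$ you obtain does give equality with the minimal polynomial), and the subset claim follows, exactly as in the paper, from $\theta^i(\beta)=\theta^i(\gamma)^{-1}\theta(\theta^i(\gamma))$ and Lemma \ref{rootroot}. So the proof is correct once it is stripped of its side assertions.

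Those side assertions, however, are unjustified and partly false, because Lemma \ref{betaceros} and the equality $|T_\beta(g)|=\deg g$ from \cite[Lemma 6]{GLNPGZ} are only available for $\beta=\alpha^{-1}\theta(\alpha)$ with $\alpha$ generating a \emph{normal} basis, whereas in Proposition \ref{cyclo} the element $\gamma$ is an arbitrary nonzero element of $\field[q^{kn}]$. Take $\gamma\in\field[q^k]^{*}$: then $\beta=1$ and $\overline{x-\beta}=x-1$ has degree $1$, while $T=C_n$. Hence $\deg\overline{x-\beta}$ need not equal $|T|$, your claim that $\{\theta^i(\gamma):i\in T\}$ is a basis of $Z(\Phi(\overline{x-\beta}))$ is not licensed by Lemma \ref{betaceros} (only the containment, which is all the Proposition asserts), and in the setting where Proposition \ref{pseudocalc} applies the set $T$ always \emph{contains} the coset $[0]$ and, as this example shows, can be strictly larger --- the opposite of your ``possibly a proper subset''. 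Finally, the opening appeal to Proposition \ref{pseudocalc} presupposes the situation of Definition \ref{autoextension}, i.e.\ $K=L^\sigma$ (equivalently $k\mid m$), an assumption the Appendix deliberately avoids; the paper's proof --- and the part of your argument that actually establishes the Proposition --- needs none of this, so these steps should simply be removed, or else the extra hypotheses stated explicitly.
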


\begin{proof}
Let $F \in \Lq{m}$ denote the minimal polynomial of $\gamma$. By Lemma \ref{rootroot}, $\beta$ is a right root of $f = \Phi^{-1}(F) \in R$. Therefore, $Sf \subseteq S(x-\beta)$ and, since $f  \in R$, this implies that $Rf \subseteq Sf \cap R \subseteq S(x-\beta) \cap R = R\overline{(x - \beta)}$. Therefore, the degree of $\overline{x - \beta}$ is less than or equal to the degree of $f$. By Lemma \ref{rootroot}, $\Phi(\overline{x - \beta})(\gamma) =0$, and, since the degree of $\Phi(\overline{x - \beta})$ is less than or equal to the degree of $\Phi(f) = F$, we conclude that $\Phi(\overline{x - \beta}) = F$. 

Now, let $i \in T$, that is, $\theta^i(\beta)$ is a right root of $\overline{x - \beta}$. We get, as in the proof of Lemma \ref{betaceros}, that $\theta^i(\gamma)$ is a root of $\Phi(\overline{x - \beta}) = F$. But, according to \cite[Definition 4]{MartinezPenas:2017}, this means that $\{ \theta^i(\gamma) : i \in T \}$ is a subset of the cyclotomic space of $\gamma$. 
\end{proof}

\begin{example}
Let us assume in this example that $K = L^\sigma$ (equivalently, that $k$ divides $m$). Let $\alpha \in \field[q^{kn}]$ such that $\{\alpha, \theta(\alpha), \dots, \theta^{n-1}(\alpha) \}$ is a normal basis of $M = \field[q^{kn}]$ over $K = \field[q^k]$, and write $kn = sm$, and $m = k\mu$. Set $\beta = \alpha^{-1}\theta(\alpha)$. Given $b \geq 0$, Lemma \ref{TTbarra} tells us that, if we write $T = \{b \}$, then 
\[
\overline{x-\theta^b(\beta)} = [x-\theta^b(\beta), x-\theta^{b +\mu}(\beta), \cdots, x-\theta^{b + (s-1)\mu}(\beta)]_{\ell}.
\]
By Proposition \ref{cyclo}, we get that $\{\theta^b(\alpha), \theta^{b +\mu}(\alpha), \cdots, \theta^{b + (s-1)\mu}(\alpha) \}$ is a subset of the cyclotomic space of $\theta^b(\alpha)$ which, in view of Lemma \ref{betaceros}, is already a basis. Setting $k = 1$, we get \cite[Proposition 1]{MartinezPenas:2017}. 
\end{example}

\subsection{BCH bound.} A BCH type bound was proved for skew cyclic block codes in \cite[Proposition 1]{Chaussade/etal:2009}. Since this bound works for the rank distance, it gives in particular a BCH bound for the Hamming distance. Our next aim is to compare \cite[Proposition 1]{Chaussade/etal:2009} with the particular case of Corollary \ref{BCHbound} when applied to finite fields. We should first note that \cite[Proposition 1]{Chaussade/etal:2009} works when the word ambient ring is modeled with  more general central polynomials than $x^n -1$. Obviously, to reach our purpose, we only consider this last case.  The hypotheses of \cite[Proposition 1]{Chaussade/etal:2009} are formulated in terms of the solution space of the difference equation associated to a skew polynomial. As it is already observed in \cite{Chaussade/etal:2009}, we may equivalently work with roots of suitable linearized polynomials.

Let $\sigma$ denote an automorphism of order $\mu$ of a finite field $L = \field[q^\mu]$, so that $K = \field[q^\mu]^\sigma = \field[q]$. Given $n = \mu s$, we may extend $\sigma$, according to Example \ref{ex:scbc}, to an automorphism $\theta$ of $M = \field[q^n]$ such that $M^\theta = \field[q]$. Let $k$ be a positive integer such that $\sigma = \tau^k$, where $\tau$ is the $\field[q]$--automorphism of Frobenius of $\field[q^n]$. Therefore, $\theta(\gamma) = \gamma^{q^k}$ for every $\gamma \in \field[q^n]$. As in \eqref{Phi}, we have a ring isomorphism $\Phi : S = M[x;\theta] \cong \Lq{n}$ that assigns to every skew polynomial $g \in S$ its linearized $q^k$--polynomial $\Phi(g)$. By restriction, we get an isomorphism $\Phi : R = L[x;\sigma] \cong \Lq{\mu}$. Clearly, the roots of $\Phi(g)$, for $g \in R$,  in a given field extension of $\field[q^\mu]$ are, precisely, the solutions  of the difference equation, in the sense of \cite{Chaussade/etal:2009}, associated to $g$. A  consequence of  Corollary \ref{BCHbound}, when applied to finite fields, is the following particular case of \cite[Proposition 1]{Chaussade/etal:2009}.

\begin{corollary}\cite[Proposition 1]{Chaussade/etal:2009}\label{BCHChaussade}
Let $\{ \alpha, \theta(\alpha), \dots, \theta^{n-1}(\alpha)\}$ be a normal basis of $\field[q^n]$ over $\field[q]$, and let $g \in R = \field[q^\mu][x;\sigma]$ be a right divisor of $x^n -1$. Assume there exist $b, \delta$ such that $\{ \theta^{b}(\alpha), \theta^{b +t}(\alpha), \dots, \theta^{b + (\delta -2)t}(\alpha) \}$ are zeroes of $\Phi(g)$. Then the Hamming distance of the skew block code $\mathcal{C} \subseteq \field[q^\mu]^n$ generated by $g$ is at least $\delta$. 
\end{corollary}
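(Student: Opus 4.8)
The plan is to obtain Corollary~\ref{BCHChaussade} as the specialization of Proposition~\ref{HTU} to the parameters $r = 0$, $t_1 = t$, $t_2 = 1$; equivalently, to translate the stated hypothesis on roots of $\Phi(g)$ into a hypothesis on the $\beta$--defining set of $g$ and feed it into Corollary~\ref{BCHbound}. The only piece of real content is the dictionary between the ordinary roots of a linearized $q^k$--polynomial and the non-commutative right roots of the corresponding skew polynomial, and this is already packaged in Lemma~\ref{rootroot} and the first statement of Lemma~\ref{betaceros}.

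Concretely, I would first put $\beta = \alpha^{-1}\theta(\alpha)$, so that by \eqref{normproperties} one has $\theta^i(\beta) = \theta^i(\alpha)^{-1}\theta^{i+1}(\alpha) = \theta^i(\alpha)^{-1}\theta(\theta^i(\alpha))$ for every $i$. Lemma~\ref{rootroot}, applied with $\theta^i(\alpha)$ in place of $\alpha$, then says that $\Phi(g)(\theta^i(\alpha)) = 0$ if and only if $\theta^i(\beta)$ is a right root of $g$, i.e. $x - \theta^i(\beta) \mid_r g$, i.e. $i \in T_\beta(g)$. Hence the hypothesis that $\theta^b(\alpha), \theta^{b+t}(\alpha), \dots, \theta^{b+(\delta-2)t}(\alpha)$ all lie in $Z(\Phi(g))$ is equivalent to the inclusion $\{ b, b+t, \dots, b+(\delta-2)t \} \subseteq T_\beta(g)$. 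The normal-basis hypothesis on $\alpha$ guarantees via Lemma~\ref{betaceros} that these $\theta^{b+jt}(\alpha)$ are in fact linearly independent over $\field[q]$, but that refinement is not needed for the distance bound itself.

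Next, since $g \in R$ is a right divisor of $x^n - 1$, the code $\mathcal{C} = \tovector(\mathcal{R}g) \subseteq \field[q^\mu]^n$ is a skew cyclic code in the sense of Definition~\ref{scc}, and under the isomorphism \eqref{Phi} it is precisely the $q^k$--cyclic block code with minimal generator $\Phi(g)$. With $(t,n) = 1$ --- the standing coprimality condition of Corollary~\ref{BCHbound}, which is implicit in the displayed hypothesis --- Corollary~\ref{BCHbound} now applies verbatim with this $b$, $\delta$ and $t$, giving $\distance(\mathcal{C}) \geq \delta$.

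No genuine obstacle is anticipated: once Lemma~\ref{rootroot} is in hand the argument is pure bookkeeping, and it runs exactly parallel to the proof of Proposition~\ref{HTU}. The one point deserving a word of care is the coprimality $(t,n) = 1$; if it is dropped the conclusion can fail, so I would either state it explicitly in the hypotheses or remark that it is present in the original formulation in \cite{Chaussade/etal:2009}.
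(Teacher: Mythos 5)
Your proposal is correct and follows essentially the same route as the paper: set $\beta = \alpha^{-1}\theta(\alpha)$, use Lemma~\ref{rootroot} (with $\theta^i(\alpha)$ in place of $\alpha$, via \eqref{normproperties}) to translate the hypothesis on zeroes of $\Phi(g)$ into the inclusion $\{b, b+t, \dots, b+(\delta-2)t\} \subseteq T_\beta(g)$, and then apply Corollary~\ref{BCHbound}. Your explicit remark that the coprimality $(t,n)=1$ must be carried along (it is implicit in the statement but required by Corollary~\ref{BCHbound}) is a sensible point of care, but it does not change the argument, which matches the paper's proof.
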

\begin{proof}
Set $\beta = \alpha^{-1}\theta(\alpha)$. By Lemma \ref{rootroot}, we have that $\{ b, b +t, \dots, b + (\delta -2)t \} \subseteq T_{\beta}(g)$. Thus, the corollary is obtained from Corollary \ref{BCHbound}. 
\end{proof}

\subsection{Gabidulin generalized non block codes.}

The idea of using root spaces as a tool to define and study linear codes has been exported in \cite{Augot/etal:2013} from the finite field case, previously described in this appendix, to an abstract field extension $K \subseteq L$ of finite degree $\mu$.  To be precise, let $\sigma$ be a $K$--automorphism of $L$, and $R  = L[x;\sigma]$. To each $g = \sum_ig_ix^i \in R$, we can associate (see \cite{Augot/etal:2013})  a $K$--linear endomorphism $g(\theta) = \sum_ig_i\theta^i$ of $L$ whose kernel is called the root space of $g$. Under the hypothesis $K = L^\sigma$, generalized Gabidulin codes of length (and, hence, distance) at most $\mu$ are defined by means to a suitable generating matrix, and they are proved to be of maximum rank distance. The roots spaces of non commutative polynomials are used to design a decoding procedure. In order to describe a connection between these Gabidulin codes and the skew cyclic codes, we take advantage of the description of the parity check matrix of a generalized Gabidulin code given in \cite{Augot/etal:2013}. The key is the following general version of Lemma \ref{rootroot}.

\begin{lemma}\label{rootroot2}
Let $g \in R$ and $\alpha \in L$. Then $\alpha^{-1}\theta(\alpha)$ is a right root of $g$ if and only if $\alpha$ belongs to the root space of $g$. 
\end{lemma}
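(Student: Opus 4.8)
The plan is to unwind the definitions and reduce everything to the evaluation formula \eqref{polyevaluation} together with the norm identity \eqref{normproperties}, exactly as in the proof of Lemma \ref{rootroot}, but now over the abstract field extension $K \subseteq L$ rather than over a finite field. First I would fix $g = \sum_i g_i x^i \in R = L[x;\sigma]$ and $\alpha \in L$, and recall that by definition $\alpha^{-1}\theta(\alpha)$ is a right root of $g$ precisely when $x - \alpha^{-1}\theta(\alpha) \mid_r g$. By \eqref{polyevaluation}, applied inside $R$ with $\theta$ replaced by $\sigma$ (note $\sigma = \theta_{|L}$, so the norms $N_i(-)$ computed in $R$ agree with those in $S$ for arguments in $L$), the remainder of the right division of $g$ by $x - \gamma$ is $\sum_i g_i N_i(\gamma)$, where $N_i(\gamma) = \gamma\,\sigma(\gamma)\cdots\sigma^{i-1}(\gamma)$. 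Hence $\gamma$ is a right root of $g$ if and only if $\sum_i g_i N_i(\gamma) = 0$.

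Next I would specialize $\gamma = \beta := \alpha^{-1}\theta(\alpha) = \alpha^{-1}\sigma(\alpha)$ and invoke \eqref{normproperties} with $k = 0$: this gives $N_i(\beta) = \alpha^{-1}\sigma^i(\alpha) = \alpha^{-1}\theta^i(\alpha)$. Substituting, the condition $\sum_i g_i N_i(\beta) = 0$ becomes $\sum_i g_i\,\alpha^{-1}\theta^i(\alpha) = 0$, and since $\alpha^{-1} \in L$ is a nonzero scalar we may cancel it (multiplying on the left), obtaining $\sum_i g_i\,\theta^i(\alpha) = 0$, i.e. $g(\theta)(\alpha) = 0$. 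That is exactly the statement that $\alpha$ lies in the root space of $g$ in the sense of \cite{Augot/etal:2013}. Both implications are now simultaneous since every step was an equivalence.

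I do not expect a genuine obstacle here; the only points that require a line of care are (i) checking that the evaluation formula \eqref{polyevaluation} and the norm identity \eqref{normproperties}, both quoted from \cite{Lam/Leroy:1988} in the finite-field/$S$ context, remain valid verbatim for $R = L[x;\sigma]$ over an arbitrary field $L$ — which they do, as those results of Lam--Leroy are purely formal consequences of the skew multiplication rule and hold over any division ring — and (ii) being explicit that $g(\theta)$ in \cite{Augot/etal:2013} is literally $\sum_i g_i \theta^i$ acting on $L$, so that "$\alpha$ belongs to the root space of $g$" unpacks to $\sum_i g_i\theta^i(\alpha) = 0$. Once these are noted, the proof is the short chain of equivalences above; it is the natural common generalization of Lemma \ref{rootroot} (finite fields, $\theta$ a power of Frobenius) and will be written in two or three lines.
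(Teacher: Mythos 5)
Your proposal is correct and takes essentially the same route as the paper: the paper's own proof of Lemma \ref{rootroot2} is precisely the one-line observation that the claim is ``straightforward from \eqref{polyevaluation} and \eqref{normproperties}'' (as for Lemma \ref{rootroot}), and your argument simply spells out that chain of equivalences, with the correct remarks that the Lam--Leroy evaluation and norm identities hold over any field and that membership in the root space unpacks to $\sum_i g_i\theta^i(\alpha)=0$.
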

\begin{proof}
Straightforward from \eqref{polyevaluation} and \eqref{normproperties}. See also \cite[(3)]{GLNPGZ}.
\end{proof}

Even thought that a generalized Gabidulin code need not to be a skew cyclic code, there is a particular case where both constructions meet. Let $\mathcal{C}$ be the generalized Gabidulin code obtained when, in the parity check matrix given in \cite[Definition 3]{Augot/etal:2013},
$h_i = \sigma^{i-1}(\alpha)$ for $i = 1, \dots, \mu$, that is, $\{ \alpha, \sigma(\alpha), \dots, \sigma^{\mu-1}(\alpha) \}$ is a normal basis of $L$ over $K$. This is a code of dimension $k$ and minimum distance $\delta = \mu - k +1$. The definition of a skew RS code is recalled in Remark \ref{decodeskewRS}. 

\begin{proposition}\label{GabRS}
The code $\mathcal{C}$ is, precisely, the skew RS code generated in $\mathcal{R} = R/\langle x^\mu - 1 \rangle$ by $g = [x-\beta, \dots, x-\sigma^{\delta -2}(\beta)]_{\ell}$.
\end{proposition}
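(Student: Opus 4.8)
The plan is to show that the generalized Gabidulin code $\mathcal{C}$ defined by the parity check matrix with rows indexed by $h_i = \sigma^{i-1}(\alpha)$ coincides with the evaluation of the skew cyclic code $\tovector(\mathcal{R}g)$, where $g = \lclm{x-\beta, \dots, x-\sigma^{\delta-2}(\beta)}$ and $\beta = \alpha^{-1}\sigma(\alpha)$. First I would recall, from \cite[Definition 3]{Augot/etal:2013}, that $\mathcal{C}$ consists precisely of those $c \in L^\mu$ whose associated skew polynomial lies in the root space condition: writing $c = \tovector(\hat c)$ with $\hat c = \sum_j c_j x^j \in \mathcal{R}$, the parity check equations say $\sum_j c_j \sigma^{i}(\text{something}) = 0$. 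More precisely, the rows of the parity check matrix are $(\theta^i(h_j))$ type expressions, so $c \in \mathcal{C}$ if and only if $\hat c(\sigma)$ annihilates $\alpha, \sigma(\alpha), \dots, \sigma^{\delta-2}(\alpha)$, i.e. each $\sigma^i(\alpha)$ for $0 \leq i \leq \delta-2$ lies in the root space of $\hat c$.

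Next I would invoke Lemma \ref{rootroot2}: $\sigma^i(\alpha)$ lies in the root space of $\hat c$ if and only if $(\sigma^i(\alpha))^{-1}\sigma(\sigma^i(\alpha)) = \sigma^i(\alpha^{-1}\sigma(\alpha)) = \sigma^i(\beta)$ is a right root of $\hat c$, that is, $x - \sigma^i(\beta) \mid_r \hat c$. Since this must hold simultaneously for all $0 \leq i \leq \delta-2$, the condition is equivalent to $\lclm{x-\beta, x-\sigma(\beta), \dots, x-\sigma^{\delta-2}(\beta)} \mid_r \hat c$, i.e. $\hat c \in \mathcal{R}g$. This gives the set-theoretic equality $\mathcal{C} = \tovector(\mathcal{R}g)$, which is the claim (and, since $g$ is a right divisor of $x^\mu - 1$ because each $x - \sigma^i(\beta)$ divides $x^\mu - 1$ by the normal basis argument recalled in Section \ref{HT}, the right-hand side is genuinely a skew RS code in the sense of Remark \ref{decodeskewRS}).

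The remaining point is a dimension/consistency check: one must confirm that the $\beta$--defining set $T_\beta(g)$ of $g$ is exactly $\{0, 1, \dots, \delta-2\} = T_{\delta}$ and not something larger, so that $\deg g = \delta - 1$ and $\dim \tovector(\mathcal{R}g) = \mu - (\delta-1) = k$, matching the dimension of the Gabidulin code $\mathcal{C}$. This follows from Lemma \ref{betaceros} (or directly from the Circulant Lemma \ref{circulantlemma} applied to the normal basis), which guarantees $\{\alpha, \sigma(\alpha), \dots, \sigma^{\delta-2}(\alpha)\}$ is linearly independent over $K$ and that the lclm of the $\delta-1$ distinct linear factors has degree exactly $\delta-1$. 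The only mild subtlety — and the step I expect to require the most care — is matching conventions: the precise form of the parity check matrix in \cite[Definition 3]{Augot/etal:2013} (row versus column indexing, and whether the $h_i$ appear directly or via their images under powers of $\sigma$) must be aligned with formula \eqref{polyevaluation} and the norm identity \eqref{normproperties}, exactly as was done in the finite field case for the Gabidulin/generalized-Gabidulin identification in the paragraph preceding Remark \ref{decPEcodes}. Once the two parity check descriptions are seen to encode the same linear conditions, the proposition is immediate.
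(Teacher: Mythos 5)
Your proposal is correct and follows essentially the same route as the paper's proof: identify $L^\mu$ with $\mathcal{R}$, read the parity check matrix of \cite[Definition 3]{Augot/etal:2013} as saying that $\alpha, \sigma(\alpha), \dots, \sigma^{\delta-2}(\alpha)$ lie in the root space of the codeword polynomial, translate this via Lemma \ref{rootroot2} into $\sigma^i(\beta)$ being right roots, and conclude $\mathcal{C} = \mathcal{R}g$ with $g$ the least common left multiple. Your additional checks (that $g \mid_r x^\mu - 1$ and that $\deg g = \delta-1$ via the Circulant Lemma) are sound but supplementary; the paper takes them as part of the surrounding setup rather than as steps of the proof.
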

\begin{proof}
Identify $(f_0, \dots, f_{\mu  - 1}) \in L^\mu$ with the polynomial $f = \sum_i f_i x^i$ through the isomorphism $L^\mu \cong \mathcal{R}$. In view of the form of the parity check matrix of $\mathcal{C}$, we get that $f \in \mathcal{C}$ if and only if $\alpha, \sigma(\alpha), \dots, \sigma^{\delta-2}(\alpha)$ belong to the root space of $g$. Equivalently, by Lemma \ref{rootroot2}, $f$ belongs to $\mathcal{C}$ if and only if $\beta, \sigma(\beta), \dots, \sigma^{\delta - 2}(\beta)$ are right roots of $f$. Therefore, $\mathcal{C} = \mathcal{R} g$, where $g = [x-\beta, \dots, x-\sigma^{\delta -2}(\beta)]_{\ell}$.
\end{proof}

\begin{remark}\label{RSGSRS}
Skew RS block codes in the sense of \cite{GLNPGZ} are, by virtue of \cite[Theorem 5]{Liu/etal:2015}, and Proposition \ref{GabRS},  particular cases of the GSRS codes introduced in \cite[Definition 9]{Liu/etal:2015}. As a consequence, the Skew Berlekamp-Welch Algorithm from \cite{Liu/etal:2015} is of application to skew  RS block codes. Also, the decoding algorithm from \cite{Augot/etal:2013} can be applied in for skew RS codes over more general fields, once an effective path from generator to parity check matrices as presented in \cite[Definition 3]{Augot/etal:2013} is provided. Concretely, skew RS codes are defined in terms of right roots, i.e. by a parity check matrix which corresponds to the analogous matrix \(H\) in \cite[Definition 3]{Augot/etal:2013}. In order to apply the Unique Decoding algorithm presented in \cite[\S V.C]{Augot/etal:2013}, it is needed to compute \(g_1, \dots, g_n \in M\)  such that 
\[
\left( \begin{matrix}
g_1 & \cdots & g_n \\
\theta(g_1) & \cdots & \theta(g_n) \\
\vdots & \ddots & \vdots \\
\theta^{k-1}(g_1) & \cdots & \theta^{k-1}(g_n) 
\end{matrix} \right)
\]
is a generator matrix of our skew RS code. These elements can be computed by using \cite[Proposition 8]{Augot/etal:2017}. A better understanding of the dual of a skew RS code would also derive this computation.
%\textbf{Here I think that we can use Proposition 8 in "Generalized Gabidulin codes over fields of anycharacteristic" in order to compute the $g_i$. In fact Skew RS codes should be essentially Gabidulin codes. Am I wrong? Or maybe I misunderstood the comment.}
\end{remark}

\bibliographystyle{abbrv}

%\bibliography{references}

\end{document}